\def\maxwidth{ %
  \ifdim\Gin@nat@width>\linewidth
    \linewidth
  \else
    \Gin@nat@width
  \fi
}
\theoremstyle{definition}
\newtheorem{assumption}{Assumption}
\newtheorem{theorem}{Theorem}
\newtheorem{lemma}{Lemma}
\newtheorem{proposition}{Proposition}
\newcommand\indep{\protect\mathpalette{\protect\independenT}{\perp}}
\def\independenT#1#2{\mathrel{\rlap{$#1#2$}\mkern2mu{#1#2}}}
\newcommand{\R}{\ensuremath{\mathbb{R}}}
\newcommand{\bbone}{\ensuremath{\mathbbm{1}}}
\newcommand{\E}{\ensuremath{\mathbb{E}}}
\newcommand{\calE}{\ensuremath{\mathcal{E}}}
\newcommand{\calD}{\ensuremath{\mathcal{D}}}
\newcommand{\calL}{\ensuremath{\mathcal{L}}}
\newcommand{\Var}{\text{Var}}
\newcommand{\sep}{\text{sep}}
\newcommand{\pool}{\text{pool}}
\newcommand{\scm}{\text{scm}}
\newcommand{\pre}{\text{pre}}
\def\super{\textsuperscript}
\def\b1{\boldsymbol{1}}
\definecolor{RED}{RGB}{255,0,0}
\title{Synthetic Controls with Staggered Adoption\thanks{email: \texttt{afeller@berkeley.edu}. We would like to thank Alberto Abadie, Howard Bloom, Peng Ding, Arin Dube, Guido Imbens, Skip Hirshberg, Brian Jacob, Luke Keele, Luke Miratrix, Joe Ornstein, Agustina Paglayan, Sam Pimentel, Jake Soloff, Panos Toulis, Chelsea Zhang, and Ben Zipperer for useful discussion and comments, as well as participants at the 2019 Atlantic Causal Inference Conference. We also thank the associate editor and reviewers for constructive feedback. This research was supported in part by the Opportunity Lab and the Institute for Research on Labor and Employment at UC Berkeley, as well as the Institute of Education Sciences, U.S. Department of Education, through Grant R305D200010. The opinions expressed are those of the authors and do not represent views of the Institute or the U.S. Department of Education.}}
\author{Eli Ben-Michael, Avi Feller, and Jesse Rothstein\\[1em] UC Berkeley}
\date{\today}
\begin{document}

\maketitle

\thispagestyle{empty}
\pagenumbering{gobble}

\begin{abstract}
\singlespacing
Staggered adoption of policies by different units at different times creates promising opportunities for observational causal inference.
Estimation remains challenging, however, and common regression methods can give misleading results. A promising alternative is the synthetic control method (SCM), which finds a weighted average of control units that closely balances the treated unit's pre-treatment outcomes.
In this paper, we generalize SCM, originally designed to study a single treated unit, to the staggered adoption setting.
We first bound the error for the average effect and show that it depends on both the imbalance for each treated unit separately and the imbalance for the average of the treated units. 
We then propose ``partially pooled'' SCM weights to minimize a weighted combination of these measures; approaches that focus only on balancing one of the two components can lead to bias.
We extend this approach to incorporate unit-level intercept shifts and auxiliary covariates.
We assess the performance of the proposed method via extensive simulations and apply our results to the question of whether teacher collective bargaining leads to higher school spending, finding minimal impacts. We implement the proposed method in the \texttt{augsynth} R package. 
\end{abstract}

\clearpage
\pagenumbering{arabic}
\onehalfspacing

\section{Introduction}

Jurisdictions often adopt policies at different times, creating promising opportunities for observational causal inference.
In our motivating application, 33 states passed laws between 1964 and 1987 mandating that school districts bargain with teachers unions \citep{hoxby1996teachers, paglayan2019public}; our goal is to estimate the impact of these laws on teacher salaries and school expenditures.

Estimating causal effects under staggered adoption remains challenging, however. Workhorse methods, such as the regression-based two-way fixed effects model,
rely on strong modeling assumptions and can give misleading estimates when treatment timing varies \citep{abraham2018estimating, borusyak2017revisiting, goodman2018difference}.
A promising alternative is the \emph{synthetic control method} \citep[SCM;][]{AbadieAlbertoDiamond2010, Abadie2015}.  
SCM estimates the counterfactual untreated outcome via a weighted average of untreated units, with weights chosen to match the treated unit's pre-treatment outcomes as closely as possible. 
SCM, however, was developed for settings where only a single unit is treated, and proposals for extending SCM to the staggered adoption case have been ad hoc.
One common strategy is to estimate SCM weights separately for each treated unit and then average the estimates \citep[see, e.g.,][]{dube2015pooling, donohue2019right}. However, this relies on being able to find good synthetic controls for every treated unit, which is not possible in our application.

In this paper, we develop SCM for the staggered adoption setting.
Under two common data generating processes for panel data, an autoregressive model and a linear factor model, we bound the error of a weighting estimator for the average effect and show that it depends on both the unit-specific imbalance for each treated unit and the imbalance for the average of the treated units. This leads to our main proposal, \emph{partially pooled SCM}, which 
minimizes a weighted average of the two imbalances.
This approach nests two special cases: \emph{separate SCM}, which reflects the current practice of estimating weights that separately minimize the pre-treatment imbalance for each treated unit; and \emph{pooled SCM}, which instead minimizes the average pre-treatment imbalance across all treated units.
Both special cases have drawbacks. Separate SCM can lead to poor fit for the average, leading to possible bias when the average treatment effect is the estimand of interest. Pooled SCM, by contrast, can achieve nearly perfect fit for the average treated unit but can yield substantially worse unit-specific fits.
This can lead to poor estimates of unit-level treatment effects and to bias for the average effect if the data generating process varies over time.
Partially pooled SCM moves smoothly between these two extremes, with a hyperparameter denoting the relative weight of the two balance measures in the optimization problem.
We discuss how to select weights to trade off between these two quantities in practice.

We then explore several extensions. 
First, we incorporate an intercept shift into the SCM problem, following proposals by \citet{Doudchenko2017} and \citet{ferman2018revisiting}. The resulting treatment effect estimator has the form of a weighted difference-in-differences estimator, connecting our proposed approach to a large econometric literature \citep{abraham2018estimating, Callaway2018}. We recommend this approach as a reasonable default in practice; it amounts to applying our partially pooled SCM estimator to de-meaned outcome series.
Second, we modify the SCM problem to incorporate auxiliary covariates alongside lagged outcomes. 
We also briefly address inference for SCM-like estimates in the staggered adoption setting.
We implement the proposed methodology in the \texttt{augsynth} package for \texttt{R}, available at \href{https://github.com/ebenmichael/augsynth}{\texttt{https://github.com/ebenmichael/augsynth}}.

We apply our methods to estimating the impact of mandatory teacher collective bargaining and show that they achieve better pre-treatment balance than existing approaches.
We find no impact of teacher collective bargaining laws on either teacher salaries or student expenditures, consistent with several recent papers  \citep{frandsen2016effects, paglayan2019public} but counter to earlier claims \citep[most notably][]{hoxby1996teachers}.

\paragraph{Related work.} Our paper contributes to several active methodological literatures. First, there is a large and active applied econometrics literature on challenges and remedies for two-way fixed effects models with multiple treated units;
see \citet{borusyak2017revisiting, abraham2018estimating, athey2018design, goodman2018difference, Callaway2018}. See also \citet{Xu2017} and \citet{athey2017mcp} for recent generalizations of these models.

SCM has also attracted a great deal of attention; see \citet{abadie2019synthreview} for a recent review.
Several recent papers have explored SCM with multiple treated units. In the case where all units adopt treatment at the same time, some propose to first average the units and then estimate SCM weights for the average, analogous to our fully pooled SCM estimate; for discussion, see \citet{kreif2016examination, Robbins2017}. 
An alternative is \citet{Abadie_LHour}, who instead propose to estimate separate SCM weights for each treated unit. 
In particular, they propose a penalized SCM approach that aims to reduce interpolation bias, allowing for weights that move continuously between standard SCM and nearest-neighbor matching.
Our approach complements these papers by adapting some of these ideas to the staggered adoption setting.
For some other examples of SCM under staggered adoption, see also \citet{dube2015pooling, toulis2018testing,  donohue2019right, cao2019synthetic}.

\paragraph{Motivating example: Teacher collective bargaining.}
\label{sec:example}
The United States, like other developed countries, spends substantial resources on public education. Approximately 80\% of education spending goes to teacher salaries and benefits \citep{nces_facts}, and research points to teacher quality as a key determinant of student outcomes \citep{jackson2014teacher}. Over recent decades, the teacher employment relationship has changed dramatically via the introduction of unions and collective bargaining agreements \citep{goldstein2015teacher}. Critics identify these as a ``harmful anachronism'' and ``the most daunting impediments'' to education reform \citep{hess2006better}, while proponents argue that collective bargaining raises pay and thereby helps to attract and retain high-quality teachers. 
A major 2018 Supreme Court decision, \emph{Janus v AFSCME}, is expected to weaken teachers' unions, bringing renewed attention to this area and raising interest in understanding the effects of teacher collective bargaining.

Since 1964, a number of states have passed laws mandating that school districts bargain with teachers' unions.\footnote{Another 10 states allow but do not require collective bargaining, while 7 prohibit it. We focus on estimating the effects of mandates.} 
Given the strong criticism directed at teachers' unions, there is surprisingly little evidence that they, or the mandatory bargaining laws, have any effect at all.
In a seminal study, \citet{hoxby1996teachers} uses state-level changes in collective bargaining laws to argue that teacher collective bargaining raises teacher salaries and school expenditures but reduces student outcomes. 
Several more recent papers have disputed Hoxby's conclusions, however.
Using a panel of school districts, \citet{lovenheim2009effect} 
finds little effect of unionization on teacher pay or class size. \citet{frandsen2016effects} similarly finds little effect of state unionization laws on teacher pay.
Finally,
\citet{paglayan2019public} extends the historical state-level data set from \citet{hoxby1996teachers}. Using a variant of the two-way fixed effect model, she finds precisely estimated zero effects of mandatory bargaining laws on per-pupil school expenditures\footnote{\citet{paglayan2019public} defines this as ``the total current operational expenditures (regardless of funding source) that are devoted to public schools in a state divided by the number of public school students in that state.''} and teacher salaries.
Motivated in part by recent criticisms of such models \citep{goodman2018difference}, we revisit the \citet{paglayan2019public} analysis using different methods.

\begin{figure}[tb]
      \centering \includegraphics[width=0.5\maxwidth]{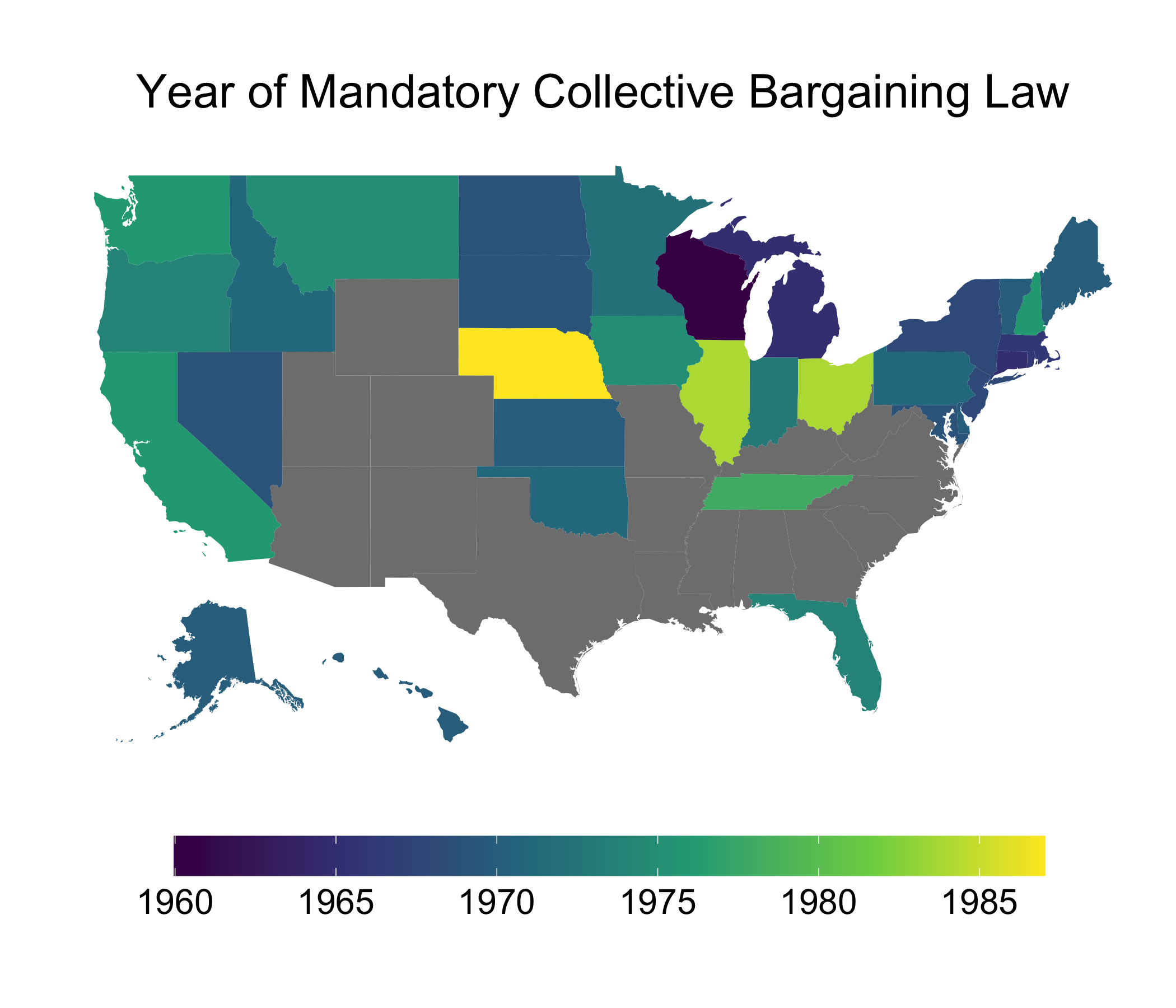}
    \caption{Staggered adoption of mandatory collective bargaining laws from 1964 to 1990.}
    \label{fig:state_map}
  \end{figure}

Figure \ref{fig:state_map} shows adoption times of state mandatory bargaining laws between 1964 and 1990. 
Adoptions were spread across 14 separate years, though 16 states adopted laws between 1965 and 1970. 
Following \citet{paglayan2019public}, our main outcomes of interest are per-pupil student expenditures and teacher salaries, both measured in log 2010 dollars.
We observe these outcomes back to 1959 for 49 states; we exclude Wisconsin, which adopted a mandatory bargaining law in 1960 and thus has only one year of pre-intervention data, as well as Washington, DC.
This gives between 6 and 28 years of data before the adoption of mandatory bargaining, with an average of 13 years.

\paragraph{Paper roadmap} 
Section \ref{sec:preliminaries} lays out the technical background and introduces the synthetic control estimator for a single treated unit.
Section \ref{sec:error_bounds} bounds the estimation error for general weighting estimators under two families of data generating process, an autoregressive model and a linear factor model, with staggered adoption. 
Section \ref{sec:generalizing_scm_overview} introduces partially pooled SCM as a solution to the problem of minimizing estimation error and considers two special cases, separate SCM and pooled SCM. 
Section \ref{sec:extensions} proposes several important extensions, including incorporating an intercept shift and auxiliary covariates, and briefly discusses inference. 
Section \ref{sec:sim_study_main} describes a calibrated simulation study.
Section \ref{sec:application} gives additional results for the teacher collective bargaining application.
Finally, Section \ref{sec:discussion} discusses some directions for future work.
The appendix includes further analyses and technical results. In particular, we provide an alternative motivation for our proposed partially pooled estimator, which we show is based on partially pooling parameters in the Lagrangian dual of the SCM constrained optimization problem.

\section{Preliminaries}
\label{sec:preliminaries}

\subsection{Setup and notation}
\label{sec:setup}
We consider a panel data setting where we observe outcomes $Y_{it}$ for $i = 1, \ldots, N$ units over $t = 1, \ldots, T$ time periods. 
In the teacher collective bargaining application, $N = 49$ and $T = 39$ years. 
Some but not all 
of the units adopt the treatment during the panel; 
once units adopt treatment, they stay treated for the remainder of the panel. 
Let $T_i$ represent the time period that unit $i$ receives treatment, with $T_i=\infty$ denoting never-treated units. Without loss of generality, we order units so that $T_1\leq T_2 \leq \dots \leq T_N$.
We assume that there are a non-zero number of never-treated units, $N_0\equiv N-\sum_i \bbone_{T_i = \infty}$, and we let $J=N-N_0=\sum_i \bbone_{T_i \neq \infty}$.
To clearly differentiate units that are eventually treated, we index them by $j=1,\dots,J$.

We adopt a potential outcomes framework to express causal quantities~\citep{neyman1923, rubin1974} and assume stable treatment and no interference between units \citep[SUTVA;][]{rubin1980}.
In principle, each unit $i$ in each time $t$ might have a distinct potential outcome for each potential treatment time $s$, $Y_{it}(s)$, for $s=1,\ldots,T,\infty$.
Following \citet{athey2018design},  we assume that prior to treatment, a unit's potential outcomes are equal to its never-treated potential outcome:
\begin{assumption}[No anticipation]
  \label{a:anticipation}
  $Y_{it}(s) = Y_{it}(\infty)$ for $t < s$, with treatment time $s$.
\end{assumption}

\noindent This relatively innocuous assumption generalizes the consistency assumption typically employed in cross-sectional studies. We maintain it throughout.
With it, the observed outcome is
$Y_{it} = \bbone\{t < T_i\} Y_{it}(\infty) + \bbone\{t \geq T_i\}Y_{it}(T_i)$.

\subsection{Estimands}
As is common in many panel data settings, we focus on effects a specified duration after treatment onset, known as \emph{event time}.
For treated unit $j$, we index event time relative to treatment time $T_j$ by $k=t-T_j$.
The unit-level treatment effect for treated unit $j$ at event time $k$ is the difference between the potential outcome at time $T_j + k$ under treatment at time $T_j$ and under never treatment:
$$\tau_{jk} = Y_{jT_j+k}(T_j)-Y_{jT_j+k}(\infty).$$
By Assumption \ref{a:anticipation}, $\tau_{jk}=0$ for any $k<0$. 

The unit-specific effects, $\tau_{jk}$, are often the central quantities of interest in many synthetic controls analyses. In addition to these effects, we also focus on their average. 
Our primary averaged estimand 
is the Average Treatment Effect on the Treated (ATT) $k$ periods after treatment onset:
$$\text{ATT}_k \equiv \frac{1}{J}\sum_{j=1}^J \tau_{jk} = \frac{1}{J} \sum_{j = 1}^J Y_{j,T_j+k}(T_j)-Y_{j,T_j+k}(\infty).$$
We are also interested in the average post-treatment effect, averaging across $k$: $\text{ATT} = \frac{1}{K}\sum_{k=1}^K \text{ATT}_k$.  Our methods generalize to many other estimands; see \citet{Callaway2018} for examples in this setting.

A challenge for staggered adoption analyses is that a panel that is balanced in calendar time is necessarily imbalanced in event time. That is, we observe outcomes $\ell$ periods before treatment only for units treated after period $\ell$, and we observe outcomes $k$ periods after treatment only for treated units treated before $T-k$. This means that populations of treated units over which one can average treatment effects vary with $k$, as do the possible donors. 
To minimize this problem,
we assume that all treated units are observed for at least several periods before being treated (i.e., $T_1 \gg 1$)
and for at least $K\geq0$ periods after treatment ($T_J\leq T-K$). 
For treated unit $j$, we will consider outcomes up to $L_j \leq T_j - 1$ periods before treatment, with $L \equiv \max_{j\leq J} L_j$ denoting the maximum number of lagged outcomes.

With this, the challenge in estimating $\text{ATT}_k$ for $k\leq K$ is to impute the average of the missing never-treated potential outcomes.
We define the set of possible ``donor units'' for treated unit $j$ at event time $k$ as those units $i$ for which we observe $Y_{iT_j+k}(\infty)$, which we denote $\calD_{jk}\equiv \{i: T_i > T_j+k\}$. The composition of $\calD_{jk}$ varies with both treated unit $j$ and event time $k$; in particular, unit $i$ with $T_i<\infty$ is in $\calD_{jk}$ for $k < T_i - T_j$ but not for $k \geq T_i-T_j$. We focus on fixed donor pools $\calD_{jK}$ rather than allowing the donor pools to vary with $k$. 
This limits the number of potential donors, but ensures that estimated counterfactual outcomes do not vary spuriously across event time due to changing composition of the donor pool. Our proposed estimator does not require this restriction, but it greatly simplifies exposition.
If $K \geq T_J - T_1$ then $\calD_{jk}$ will only include never treated units as donors; otherwise $\calD_{jk}$ will include \emph{both} never treated and not-yet-treated units.

In our empirical application we exclude Wisconsin --- which adopted a mandatory collective bargaining law in the second year of the sample --- so the first treated state is Connecticut with $T_1 = 7$. We follow \citet{paglayan2019public} in considering treatment effects only up to event time $K = 10$, and use as potential donors for treated state $j$ any states that are not treated by $T_j+10$.

\subsection{Restrictions on the data generating process}
\label{sec:restrict}

We now detail various restrictions on the data generating process that we will consider below. Because we are interested in treatment effects on treated units --- and observe potential outcomes under treatment --- we will place restrictions only on the potential outcomes under the never treated condition $Y_{it}(\infty)$. Throughout, we follow \citet{chernozhukov2017exact} and \citet{BenMichael_2018_AugSCM} and write these potential outcomes as a model component plus additive noise.
We consider two alternative restrictions on the model terms and noise terms, corresponding to two common data generating processes for $Y_{it}(\infty)$: a time-varying autoregressive process and a linear factor model.
\begin{assumption}[Data generating processes] 
We consider the following:
  \begin{enumerate}[label = (\alph*),ref={\theassumption\alph*}]
    \item \label{a:ar}
    The untreated potential outcomes $Y_{it}(\infty)$ follow a time-varying AR($L$) process 
  \begin{equation}
      \label{eq:time_ar}
      Y_{it}(\infty) = \sum_{\ell=1}^L \rho_{t\ell} Y_{i t - \ell}(\infty) + \varepsilon_{it},
  \end{equation}
  where $\varepsilon_{it}$ are mean zero and independent across units and time, with $\varepsilon_{is+k} \indep \bbone\{T_i = s\}$ 
  for $k \geq 0$ for all $i=1,\ldots,N$.

  \item \label{a:lfm}
  There are $F$ latent time-varying factors, where $F$ is typically small relative to both $N$ and $T$. The factors, $\mu_t \in \R^F$, are bounded, $\max_{t} \|\mu_t\|_\infty \leq M$.
  Each unit has a vector of time-invariant factor loadings $\phi_i \in \R^F$, and the untreated potential outcomes $Y_{it}(\infty)$ are generated as:
\begin{equation}
    \label{eq:factor_model}
    Y_{it}(\infty) = \phi_i \cdot \mu_t + \varepsilon_{it},
\end{equation}
where
$\varepsilon_{it}$ are mean zero, independent across units and time and 
$\varepsilon_{it} \indep T_i$ for all $i=1,\ldots,N$, $t=1,\ldots,T$.
  \end{enumerate}
\end{assumption}

\noindent Assumptions \ref{a:ar} and \ref{a:lfm} impose different restrictions on the noise terms. Assumption \ref{a:lfm} rules out correlation between treatment timing and the noise terms for any period while Assumption \ref{a:ar} only excludes correlation for noise terms \emph{after} treatment. Therefore, under Assumption \ref{a:lfm} treatment timing and pre-treatment outcomes are only dependent through the factor loadings, while under Assumption \ref{a:ar} there is no restriction on their dependence.

Finally, under each process, we assume that the noise terms do not have fat tails.
\begin{assumption}
  \label{a:subgaus}
  $\varepsilon_{it}$ are sub-Gaussian random variables with scale parameter $\sigma$.
\end{assumption}
\noindent We use this restriction on the tail behavior for the finite sample estimation error bounds we introduce in Section \ref{sec:error_bounds}.

\subsection{The Synthetic Control Method}
\label{sec:separate_scm_intro}

In the synthetic control method (SCM), the counterfactual outcome under control is estimated from a weighted average, known as a \emph{synthetic control}, of untreated units, where weights are chosen to minimize the squared imbalance between the lagged outcomes for the treated unit and the weighted control (``donor'') units. 

We consider a modified version of the original SCM estimator of  \citet{AbadieAlbertoDiamond2010, Abadie2015} for a single treated unit $j$. 
In this version, the SCM weights $\hat{\gamma}_j$ are the solution to a constrained optimization problem:
\begin{equation}
  \label{eq:scm_unit_j}
  \min_{\gamma_j \in \Delta^{\rm{scm}}_j} \;\; \underbrace{\frac{1}{L_j}\sum_{\ell = 1}^{L_j} \left(Y_{j T_j-\ell} \;-\; \sum_{i = 1}^N \gamma_{ij} Y_{iT_j-\ell}\right)^2}_{\text{objective}} \; + \; \underbrace{\lambda \sum_{i=1}^N \gamma_{ij}^2}_{\text{regularization}},
\end{equation}
where $\gamma_j \in \Delta^{\rm{scm}}_j$ has elements $\{\gamma_{ij}\}$ that satisfy $\gamma_{ij}\geq 0$ for all $i$, $\sum_{i} \gamma_{ij}=1$, and $\gamma_{ij}=0$ whenever $i$ is not a possible donor, $i \not \in \calD_{jK}$.

Given an $N$-vector of weights $\hat{\gamma}_{ij}$ that solve Equation \eqref{eq:scm_unit_j}, the SCM estimate of the missing potential outcome for treated unit $j$ at event time $k$, $Y_{jT_j + k}(\infty)$, is:
$$\hat{Y}_{jT_j + k}(\infty) = \sum_{i = 1}^N \hat{\gamma}_{ij}\, Y_{iT_j + k},$$
with estimated treatment effect
$\hat{\tau}_{jk} = Y_{jT_j+k} - \hat{Y}_{jT_j + k}(\infty)$. This formulation can also be applied when $k<0$, generating \emph{placebo} treatment effect estimates, often referred to as ``gaps.''
We denote the vector of 
placebo pre-treatment effect estimates
as $\hat{\tau}_j^\pre = (\hat{\tau}_{j(-L)},\ldots,\hat{\tau}_{j(-1)}) \in \R^L$, where we define $\hat{\tau}_{j(-\ell)}$ to be zero for $\ell > L_j$.
With this notation, the synthetic controls objective in Equation \eqref{eq:scm_unit_j} is the mean squared placebo treatment effect on pre-treatment outcomes:
\begin{equation}
	\label{eq:SCM1_balance}
(q_j(\hat{\gamma}_j))^2 \;\;\equiv\;\; \frac{1}{L_j} \left\|\hat{\tau}_j^\pre\right\|_2^2 \;\;=\;\; \frac{1}{L_j}\sum_{\ell = 1}^{L_j} \left(Y_{j T_j-\ell} \;-\; \sum_{i = 1}^N \hat{\gamma}_{ij} Y_{iT_j-\ell}\right)^2.
\end{equation}

The optimization problem in Equation \eqref{eq:scm_unit_j} modifies the original SCM proposal in two key ways. First, where \citet{AbadieAlbertoDiamond2010, Abadie2015} balance auxiliary covariates, we focus exclusively on lagged outcomes; we re-introduce auxiliary covariates in Section \ref{sec:covs}.
Second, following a suggestion in \citet{Abadie2015}, we include a term that penalizes the weights toward uniformity, with hyperparameter $\lambda$. While we penalize the sum of the squared weights, there are many options, e.g., an entropy or elastic net penalty \citep[see][]{Doudchenko2017, Abadie_LHour}. 
In settings where it is possible to achieve perfect balance, selecting $\lambda > 0$ ensures that Equation \eqref{eq:scm_unit_j} has a unique solution. This is not the case in our setting, however, and so we largely view this term as a technical convenience. 

\citet{abadie2019synthreview} gives several reasons for preferring SCM to outcome models such as linear regression or directly fitting the factor model. In particular, SCM weights are guaranteed to be non-negative, and are generally sparse and interpretable. By contrast, alternatives based on explicit models for $Y_{it}(\infty)$ often imply negative weights and thus unchecked extrapolation outside the support of the donor units.
Outcome modeling can also be sensitive to model mis-specification, such as selecting an incorrect number of factors in a factor model.
Finally, as we emphasize in our theoretical results in the next section, SCM can be appropriate under multiple data generating processes (e.g., both the autoregressive model and the linear factor model) so that it is not necessary for the applied researcher to take a strong stand on which is correct.

A central question for SCM is how to assess whether $\hat{Y}_{j T_j + k}(\infty)$ is a reasonable estimate for $Y_{j T_j + k}(\infty)$.
A minimal condition is that the SCM weights achieve a low root mean squared placebo treatment effect, i.e., $q_j(\hat{\gamma}_j)$ is close to zero.
If it is not close to zero, there is a concern that estimated effects also capture systematic differences between
$\hat{Y}_{j T_j + k}(\infty)$ and $Y_{j T_j + k}(\infty)$.
Under versions of either Assumptions \ref{a:ar} or \ref{a:lfm} and for a single treated unit, 
\citet{AbadieAlbertoDiamond2010} show that if $q_j(\hat{\gamma}_j) = 0$ then the bias will tend to zero as $L_j \to \infty$, and 
\citet{BenMichael_2018_AugSCM} bound the estimation error of $\hat{\tau}_{jk}$ in terms of $q_j(\hat{\gamma}_j)$.
\citet{AbadieAlbertoDiamond2010, Abadie2015} recommend that researchers only proceed with an SCM analysis if the pre-treatment fit is excellent, while \citet{BenMichael_2018_AugSCM} propose an augmented SCM estimator that attempts to salvage cases where it is not.

\section{Estimation error under staggered adoption}
\label{sec:error_bounds}

In order to extend SCM to the staggered adoption setting, we first develop 
appropriate balance measures for synthetic control-style weighting estimators under staggered adoption. We use these to develop bounds on the estimation error for the ATT for our two example data generating processes. These bounds in turn motivate our proposal for partially pooled SCM as a way to choose weights under staggered adoption.

\subsection{Weights and measures of balance}
With multiple treated units, we can generalize the above setup to allow for weights for each treated unit. For each $j\leq J$, let $\gamma_j \in \Delta^\scm_j$ be an $N$-vector of weights on potential donor units, where $\gamma_{ij}$ is the weight on unit $i$ in the synthetic control for treated unit $j$. We collect the weights into an $N$-by-$J$ matrix $\Gamma = [\gamma_1,\ldots,\gamma_J] \in \Delta^\scm$, where $\Delta^\scm = \Delta_1^\scm \times \ldots \times \Delta_J^\scm$.
The estimated treatment effect on unit $j$ at event time $k$ is then $\hat{\tau}_{jk}$ as defined above, and the estimated ATT averages over the unit-level effect estimates: 
\begin{equation}\label{eq:att_estimator}
  \widehat{\text{ATT}}_k 
        = \frac{1}{J} \sum_{j=1}^J \hat{\tau}_{jk}
  		= \frac{1}{J} \sum_{j=1}^J \left[ Y_{j T_j+k} - \sum_{i = 1}^N \hat{\gamma}_{ij}\, Y_{i T_j + k}\right] = \frac{1}{J} \sum_{j=1}^J Y_{jT_j+k} - \sum_{i=1}^N \sum_{j=1}^J \frac{\hat{\gamma}_{ij}}{J} Y_{iT_j+k}.
\end{equation}
Equation \eqref{eq:att_estimator} highlights two equivalent interpretations of the estimator: as the average of unit-specific SCM estimates and as an SCM estimate for the average treated unit.

Using the two interpretations of the ATT estimator in Equation \eqref{eq:att_estimator}, we construct goodness-of-fit measures for the ATT by aggregating  $\hat{\tau}_j^\pre$ in two ways. 
First, we consider the root mean square of the pre-treatment fits across treated units,
$$
q^{\sep}(\widehat{\Gamma}) \equiv \sqrt{\frac{1}{J}\sum_{j=1}^J q_j^2(\hat{\gamma}_j)} =  \sqrt{\frac{1}{J}\sum_{j=1}^J \frac{1}{L_j}\|\hat{\tau}_j^\pre\|_2^2} = \sqrt{\frac{1}{J}\sum_{j=1}^J \frac{1}{L_j}\sum_{\ell = 1}^{L_j} \left(Y_{j T_j-\ell} \;-\; \sum_{i = 1}^N \hat{\gamma}_{ij} Y_{iT_j-\ell}\right)^2}.
$$
This is a useful measure of overall imbalance when SCM is estimated separately for each treated unit and generalizes the objective for the single synthetic control problem.
Second, we consider the pre-treatment fit for the average of the treated units,
$$
q^{\pool}(\widehat{\Gamma}) \equiv \frac{1}{\sqrt{L}} \left\|\frac{1}{J}\sum_{j=1}^J \hat{\tau}^\pre_j\right\|_2 = \sqrt{\frac{1}{L} \sum_{\ell = 1}^{L}\left[ \frac{1}{J}\sum_{T_j > \ell} Y_{j T_j - \ell} - \sum_{i=1}^N \hat{\gamma}_{ij}Y_{i T_j - \ell}\right]^2}.
$$
We refer to this interchangeably as the \emph{pooled} or \emph{global} fit.

Both $q^\pool$ and $q^\sep$ are on the same scale as the estimated treatment effect, $\widehat{\text{ATT}}_{k}$. However, the measures differ in whether they average \emph{before} or \emph{after} evaluating the pre-treatment fit. 
Thus, we typically expect $(q^{\pool})^2 \ll (q^{\sep})^2$, since the lagged outcomes for the \emph{average} of the treated units are less extreme than the lagged outcomes for the units themselves.
In practice, we therefore consider normalizing the imbalance measures by their values computed with weights $\hat{\Gamma}^{\sep}$, the set of solutions to Equation \eqref{eq:scm_unit_j} applied separately to each treated unit. 
We define $\tilde{q}^{\pool}(\Gamma) \equiv \nicefrac{q^{\pool}(\Gamma)}{q^{\pool}(\hat{\Gamma}^{\sep})}$ and $\tilde{q}^{\sep}(\Gamma) \equiv \nicefrac{q^{\sep}(\Gamma)}{q^{\sep}(\hat{\Gamma}^{\sep})}$.
We use these normalized measures in our proposed estimator in Section \ref{sec:partial_pool_scm} below.

Ideally, both $q^\sep$ and $q^\pool$ would be close to zero; indeed if $q^\sep=0$ then $q^\pool=0$ is also zero.
When this is not possible, there is a trade off between these two sources of imbalance. 
Our proposed ``partially pooled'' SCM estimator generalizes Equation \eqref{eq:scm_unit_j} to minimize a weighted average of their normalized squares, $\nu (\tilde{q}^{\pool})^2 + (1-\nu) (\tilde{q}^{\sep})^2$, where $\nu$ is a hyperparameter selected by the researcher. To motivate this and to inform the choice of $\nu$, we develop error bounds for SCM-style weights under our two data generating models.

\subsection{Error bounds}

\subsubsection{Autoregressive model}
We first bound the estimation error for the ATT at event time $k=0$, $\text{ATT}_0$, under the autoregressive process in Assumption \ref{a:ar}. Two summaries of the autoregressive coefficients are important to our analysis: $\bar{\rho} = \frac{1}{J}\sum_{j=1}^J \rho_{T_j}$, the \emph{average} autoregression coefficient across the $J$ treatment times, and $S^2_\rho \equiv \frac{1}{J}\sum_{j=1}^J \|\rho_{T_j} - \bar{\rho}\|_2^2$, the corresponding \emph{variance}; under simultaneous adoption $S^2_\rho = 0$.
  \begin{theorem}
    \label{thm:time_ar_error}
   Under Assumptions \ref{a:ar} and \ref{a:subgaus} with $L_j = L < T_1$ for $j=1,\ldots,J$, for $\hat{\Gamma} \in \Delta^\scm$, where $\hat{\gamma}_j$ is independent of $\varepsilon_{\cdot T_j + k}$, and for any $\delta > 0$, the error for $\widehat{\text{ATT}}_0$ is
      \[
      \begin{aligned}
        \left | \widehat{\text{ATT}}_0 - \text{ATT}_0 \right | & \leq \underbrace{\sqrt{L}\|\bar{\rho}\|_2 \; q^\pool(\hat{\Gamma})}_{\text{pooled fit}}
        + \underbrace{\sqrt{L}S_\rho \; q^\sep(\hat{\Gamma})}_{\text{unit-specific fit}}
        &+ \underbrace{\frac{\delta\sigma}{\sqrt{J}} \left(1 + \|\Gamma\|_F\right)}_{\text{noise}}
      \end{aligned}
      \]
  with probability at least $1 - 2e^{-\frac{\delta^2}{2}}$,  where for a matrix $A \in \R^{n \times m}$, $\|A\|_F = \sqrt{\sum_{i=1}^n\sum_{j=1}^m A_{ij}^2}$ is the Frobenius norm.
  \end{theorem}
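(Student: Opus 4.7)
The strategy is to decompose the estimation error into a deterministic \emph{bias} term driven by pre-treatment imbalance plus a mean-zero \emph{noise} term, and then bound each separately. Substituting the definitions and using Assumption~\ref{a:anticipation} together with the donor restriction $\calD_{jK}=\{i: T_i>T_j+K\}$ (so that $Y_{iT_j}=Y_{iT_j}(\infty)$ for every donor of $j$), the error collapses to $\widehat{\text{ATT}}_0 - \text{ATT}_0 = \frac{1}{J}\sum_{j=1}^J[Y_{jT_j}(\infty)-\sum_i\hat\gamma_{ij}Y_{iT_j}(\infty)]$. I then apply the AR($L$) representation \eqref{eq:time_ar} at time $T_j$ to both $Y_{jT_j}(\infty)$ and each $Y_{iT_j}(\infty)$ with the \emph{same} coefficient vector $\rho_{T_j}$; every lagged outcome that appears is pre-treatment (by no anticipation for $j$, and $T_i>T_j$ for donors) and hence equals its observed value. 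Subtracting yields the split
\[
\widehat{\text{ATT}}_0 - \text{ATT}_0 \;=\; \underbrace{\frac{1}{J}\sum_{j=1}^J\sum_{\ell=1}^L \rho_{T_j,\ell}\,\hat\tau_{j(-\ell)}}_{\equiv\,B} \;+\; \underbrace{\frac{1}{J}\sum_{j=1}^J \Bigl(\varepsilon_{jT_j}-\sum_{i=1}^N \hat\gamma_{ij}\varepsilon_{iT_j}\Bigr)}_{\equiv\,N}.
\]

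For the bias, I write $\rho_{T_j} = \bar\rho + (\rho_{T_j}-\bar\rho)$ so that $B = \bar\rho\cdot\bar{\hat\tau}^\pre + \frac{1}{J}\sum_j(\rho_{T_j}-\bar\rho)\cdot\hat\tau_j^\pre$, where $\bar{\hat\tau}^\pre \equiv \frac{1}{J}\sum_j\hat\tau_j^\pre$. Cauchy-Schwarz in $\R^L$ bounds the first summand by $\|\bar\rho\|_2\|\bar{\hat\tau}^\pre\|_2 = \sqrt{L}\,\|\bar\rho\|_2\,q^\pool(\hat\Gamma)$ from the definition of $q^\pool$. Two back-to-back applications of Cauchy-Schwarz (first in $\R^L$, then across $j=1,\dots,J$) bound the second by
\[
\sqrt{\tfrac{1}{J}\sum_j\|\rho_{T_j}-\bar\rho\|_2^2}\cdot\sqrt{\tfrac{1}{J}\sum_j\|\hat\tau_j^\pre\|_2^2} \;=\; \sqrt{L}\,S_\rho\,q^\sep(\hat\Gamma),
\]
where $L_j\equiv L$ extracts the $\sqrt{L}$ that converts $\sqrt{J^{-1}\sum_j\|\hat\tau_j^\pre\|_2^2}$ into $q^\sep$.

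For the noise, the hypothesis $\hat\gamma_j\indep\varepsilon_{\cdot T_j}$ lets me condition on $\hat\Gamma$ and treat the weights as fixed. Writing $N = \sum_{(i,t)}c_{it}\varepsilon_{it}$ with $c_{it} = J^{-1}[\bbone\{T_i=t,\,i\le J\}-\sum_{j:T_j=t}\hat\gamma_{ij}]$, the donor constraint forces $\hat\gamma_{ij}=0$ whenever $T_j = T_i$, so the two pieces of $c_{it}$ never overlap in the same cell and the cross term vanishes, giving $\sum c_{it}^2 \le (J+\|\hat\Gamma\|_F^2)/J^2 \le (1+\|\hat\Gamma\|_F)^2/J$. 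Independence of the $\varepsilon_{it}$ across $(i,t)$ makes $N$ conditionally sub-Gaussian with scale at most $\sigma\sqrt{\sum c_{it}^2}$, and the standard sub-Gaussian tail bound then produces $|N|\le\delta\sigma(1+\|\hat\Gamma\|_F)/\sqrt{J}$ with probability at least $1-2e^{-\delta^2/2}$; adding this to the deterministic bias bound proves the theorem. The bias step is essentially two Cauchy-Schwarz inequalities; the main obstacle is the noise bookkeeping, specifically keeping the coefficient sum at $\|\hat\Gamma\|_F$ when multiple $T_j$'s coincide so that a single $\varepsilon_{it}$ receives aggregated weight across treated units, since a naive power-mean step injects a stray factor of $\max_t|\{j:T_j=t\}|$ that must be absorbed either by restricting to distinct treatment times or by a refined treatment of the Frobenius term.
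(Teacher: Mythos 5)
Your bias decomposition is exactly the paper's: apply the AR($L$) recursion at time $T_j$, split $\rho_{T_j}=\bar\rho+(\rho_{T_j}-\bar\rho)$, and use Cauchy--Schwarz once for the pooled term and twice for the unit-specific term; that part is correct and needs no changes. The treatment of the noise term, however, contains a genuine gap that you flag but do not close. Your claim that ``the cross term vanishes, giving $\sum_{i,t} c_{it}^2 \le (J+\|\hat{\Gamma}\|_F^2)/J^2$'' implicitly asserts $\sum_{i,t}\bigl(\sum_{j:T_j=t}\hat\gamma_{ij}\bigr)^2 \le \|\hat{\Gamma}\|_F^2$, which is false whenever two treated units share a treatment time and a donor: with $T_j=T_{j'}$ and $\hat\gamma_{ij}=\hat\gamma_{ij'}=\gamma$ the left side contributes $4\gamma^2$ against $2\gamma^2$ on the right. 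This is not a corner case here --- in the application many states adopt in the same year, and the theorem is stated (and used) without any distinct-times restriction, so a proof that only works for distinct $T_j$ or that carries a stray $\max_t|\{j:T_j=t\}|$ factor does not establish the stated bound.

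The fix is short and is essentially what the paper does. Instead of computing the cell-level sum of squares and hoping for orthogonality across $j$, bound the $\ell^2$ norm of the aggregated coefficient vector by the triangle inequality and then Cauchy--Schwarz across treated units: writing $u_j\in\R^{N\times T}$ for the vector with entries $\hat\gamma_{ij}\bbone\{t=T_j\}$, you have $\bigl(\sum_{i,t}(\sum_{j:T_j=t}\hat\gamma_{ij})^2\bigr)^{1/2} = \|\sum_j u_j\|_2 \le \sum_j\|\hat\gamma_j\|_2 \le \sqrt{J}\,\|\hat{\Gamma}\|_F$, so the weighted-noise piece is sub-Gaussian with scale at most $\tfrac{\sigma}{\sqrt J}\|\hat{\Gamma}\|_F$ regardless of how many $T_j$ coincide. (Equivalently, the paper treats $\tfrac1J\sum_j\sum_i\hat\gamma_{ij}\varepsilon_{iT_j}$ as a sum of sub-Gaussian terms that are independent over $i$ but dependent over $j$ and uses sub-additivity of the sub-Gaussian scale.) Combined with your correct observation that the own-noise cells $(j,T_j)$ are disjoint from the donor cells --- since $i\in\calD_{jK}$ forces $T_i>T_j$ --- this gives the scale $\tfrac{\sigma}{\sqrt J}(1+\|\hat{\Gamma}\|_F)$ and the stated $1-2e^{-\delta^2/2}$ probability, completing the argument.
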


\noindent Theorem \ref{thm:time_ar_error} shows that the error for the ATT is bounded by several distinct terms, giving guidance for the choice of the weights $\Gamma$.
First, error arises from the level of both the global fit and the unit-specific fits. The relative importance of these fits is governed by the ratio of the average coefficient value $\|\bar{\rho}\|_2$ and the standard deviation $S_\rho$ for the autoregressive coefficients over time.

Second, there is error due to post-treatment noise, inherent to any weighting method. Because the weights are independent of post-treatment outcomes, this term has mean zero and enters the finite sample bound above through the standard deviation, which is proportional to the Frobenius norm of the weight matrix, $\|\hat{\Gamma}\|_F$. Thus, when selecting among weight matrices that yield similar unit-specific and pooled balance, we should prefer the one that minimizes $\|\hat{\Gamma}\|_F$. This motivates a penalty term similar to that in Equation \eqref{eq:scm_unit_j}.

\subsubsection{Linear factor model}
Next we consider the linear factor model in Assumption \ref{a:lfm} and begin by defining additional notation. Let $\Omega_j \in \R^{L \times F}$ denote the matrix of factor values for time $T_j-L$ to $T_j - 1$, and denote $ P^{(j)} = \sqrt{L}(\Omega_j' \Omega_j)^{-1} \Omega_j' \in \R^{F \times L}$ as the scaled projection matrix from outcomes to factors. Analogous to the autoregressive process above, the average (projected) factor value across the $J$ treatment times, $\bar{\mu}_k = \frac{1}{J}\sum_{j=1}^J P^{(j) \prime}\mu_{T_j + k}$, and the \emph{variance}, $S^2_k = \frac{1}{J}\sum_{j=1}^J \|P^{(j) \prime}\mu_{T_j + k} - \bar{\mu}_k\|_2^2$, determine the relative importance of the pooled and unit-specific fits,  respectively. 
\begin{theorem}
  \label{thm:lfm_error}
  Assume that $\Omega_j$ is non-singular and $\|\frac{1}{\sqrt{L}}\Omega_j\|_2 = 1$ for $j=1,\ldots,J$. With $L_j = L < T_1$ for $j=1,\ldots,J$, $\hat{\gamma}_1,\ldots,\hat{\gamma}_J \in \Delta^\scm$ where $\hat{\gamma}_j$ is independent of $\varepsilon_{\cdot T_j + k}$, $K \geq 0$, and $\delta > 0$, under Assumptions \ref{a:lfm} and \ref{a:subgaus}
the error for $\widehat{\text{ATT}}_k$ is 
  \[
    \left|\widehat{\text{ATT}}_k - \text{ATT}_k\right|  \leq  \underbrace{\|\bar{\mu}_k\|_2 \; q^\pool(\widehat{\Gamma})}_{\text{pooled fit}} + \underbrace{S_k \; q^\sep(\widehat{\Gamma})}_{\text{unit-specific fit}} +
    \underbrace{\frac{\sigma M^2 F}{\sqrt{L}}\left( 3\delta  + 2\sqrt{\log NJ}\right)}_{\text{approximation error}} +\underbrace{\frac{\delta \sigma}{\sqrt{J}}\left(1 + \|\hat{\Gamma}\|_F\right)}_{\text{noise}}
  \]
with probability at least $1 - 6 e^{-\frac{\delta^2}{2}}$, where $\max_t \|\mu_t\|_\infty \leq M$.
\end{theorem}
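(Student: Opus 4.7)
The plan is to decompose $\widehat{\text{ATT}}_k - \text{ATT}_k$ into a factor-loading imbalance term and a post-treatment noise term, then use the pre-treatment outcomes to convert factor-loading imbalance into a function of the observable fits $q^{\pool}$ and $q^{\sep}$. Write $Y_{it}(\infty) = \phi_i \cdot \mu_t + \varepsilon_{it}$ and define $\Delta\phi_j \equiv \phi_j - \sum_i \hat{\gamma}_{ij}\phi_i$. Since $\hat{\gamma}_j \in \Delta^\scm$ sums to $1$, substitution gives
\[
\widehat{\text{ATT}}_k - \text{ATT}_k \;=\; \frac{1}{J}\sum_{j=1}^J \Delta\phi_j \cdot \mu_{T_j+k} \;+\; \frac{1}{J}\sum_{j=1}^J\!\Bigl(\sum_{i=1}^N \hat{\gamma}_{ij}\varepsilon_{iT_j+k} - \varepsilon_{jT_j+k}\Bigr).
\]
The second term is the post-treatment noise, which will give the $\frac{\delta\sigma}{\sqrt{J}}(1+\|\hat{\Gamma}\|_F)$ contribution.

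Next I would express $\Delta\phi_j$ in terms of the pre-treatment placebo vector. Writing $\hat{\tau}^\pre_j = \Omega_j \Delta\phi_j + \eta_j$ where $\eta_j \in \R^L$ stacks the pre-treatment noise $\varepsilon_{jT_j-\ell} - \sum_i \hat{\gamma}_{ij}\varepsilon_{iT_j-\ell}$, invertibility of $\Omega_j'\Omega_j$ gives $\Delta\phi_j = \tfrac{1}{\sqrt{L}} P^{(j)}\hat{\tau}^\pre_j - \tfrac{1}{\sqrt{L}}P^{(j)}\eta_j$. Plugging this into the bias term and adding and subtracting $\bar{\mu}_k$ inside the inner product yields
\[
\frac{1}{J}\sum_j \Delta\phi_j \cdot \mu_{T_j+k} \;=\; \bar{\mu}_k \cdot \Bigl(\tfrac{1}{J\sqrt{L}}\sum_j \hat{\tau}^\pre_j\Bigr) + \tfrac{1}{J\sqrt{L}}\sum_j (\hat{\tau}^\pre_j)' \bigl(P^{(j)\prime}\mu_{T_j+k} - \bar{\mu}_k\bigr) \;-\; \tfrac{1}{J\sqrt{L}}\sum_j \eta_j' P^{(j)\prime}\mu_{T_j+k}.
\]
Cauchy–Schwarz on the first piece gives $\|\bar{\mu}_k\|_2 \, q^\pool(\hat{\Gamma})$ directly from the definition of $q^\pool$, and a double Cauchy–Schwarz on the second piece (across the sum over $j$, then within each inner product) yields $S_k \, q^\sep(\hat{\Gamma})$, matching the two leading terms of the bound.

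The remaining work is to control the approximation error $\tfrac{1}{J\sqrt{L}}\sum_j \eta_j' P^{(j)\prime}\mu_{T_j+k}$ using Assumption \ref{a:subgaus} and the normalization $\|\tfrac{1}{\sqrt{L}}\Omega_j\|_2=1$. Each $\eta_{j\ell}$ is a sub-Gaussian linear combination of independent sub-Gaussians with scale at most $\sigma\sqrt{1+\|\hat{\gamma}_j\|_2^2}$, and $\|P^{(j)\prime}\mu_{T_j+k}\|_2 \le \|(\tfrac{1}{L}\Omega_j'\Omega_j)^{-1}\|_2\,\|\Omega_j\|_2\,\|\mu_{T_j+k}\|_2$, which the normalization plus $\|\mu_t\|_\infty \le M$ bounds by a multiple of $M\sqrt{F}$. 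Maximal inequalities over the $NJ$ pre-treatment noise entries together with a sub-Gaussian tail bound produce the $\tfrac{\sigma M^2 F}{\sqrt{L}}(3\delta + 2\sqrt{\log NJ})$ term; the post-treatment noise term is handled in parallel by a single sub-Gaussian tail bound, since $\hat{\gamma}_j \indep \varepsilon_{\cdot T_j+k}$.

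The main obstacle is bookkeeping in the approximation-error step: one has to track how $\|\hat{\Gamma}\|_F$, $F$, and the spectral norm of $(\Omega_j'\Omega_j)^{-1}$ interact, and to split the pre-treatment noise contribution cleanly enough that the final constant collapses to the stated $3\delta + 2\sqrt{\log NJ}$. A union bound over the six tail events (two for the pooled and separate noise pieces, two for the post-treatment noise, and two for maximal bounds on $\eta_j$) then delivers the claimed probability $1 - 6e^{-\delta^2/2}$.
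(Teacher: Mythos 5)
Your proposal is correct and follows essentially the same route as the paper: writing the loading imbalance through the pre-treatment fit via $P^{(j)}$ (your $\Delta\phi_j = \tfrac{1}{\sqrt{L}}P^{(j)}(\hat{\tau}^\pre_j - \eta_j)$ is the same algebraic step as the paper's substitution $\phi_i = \tfrac{1}{\sqrt{L}}\sum_\ell P^{(j)}_\ell(Y_{iT_j-\ell} - \varepsilon_{iT_j-\ell})$), then splitting with $\bar{\mu}_k$ versus $\xi_{T_j+k}$, applying Cauchy--Schwarz to get the $q^\pool$ and $q^\sep$ terms, and controlling the projected pre-treatment noise with sub-Gaussian tail and maximal bounds over the $NJ$ entries plus a union bound. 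The only differences are cosmetic bookkeeping in the approximation-error constant, which the paper handles the same way (a direct sub-Gaussian bound for the treated-unit noise and H\"older plus a max over $NJ$ sub-Gaussians for the weighted donor noise).
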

Theorem \ref{thm:lfm_error} shows that under the linear factor model the error for the ATT can again be controlled by the level of pooled fit and unit-specific fits.
As in Theorem \ref{thm:time_ar_error}, the relative importance of these fits is governed by the ratio of the average factor value $\bar{\mu}_k$ and the standard deviation $S_k$;
similarly, under simultaneous adoption, $S_k = 0$ and $q^\sep$
does not enter the bound. 

Unlike in Theorem \ref{thm:time_ar_error}, this bound also includes an approximation error that arises due to balancing --- and possibly over-fitting to --- noisy outcomes rather than to the true underlying factor loadings. 
In the worst case, the $J$ synthetic controls match on the noise rather than the factors.
Constraining the weights to lie in the simplex reduces the impact of this worst case, however, and the error decreases as more lagged outcomes are balanced; see \citet{AbadieAlbertoDiamond2010, BenMichael_2018_AugSCM,Arkhangelsky2018} for further discussion.

\section{Partially Pooled SCM}
\label{sec:generalizing_scm_overview}
\label{sec:partial_pool_scm}

We now turn to our main proposal, \emph{partially pooled SCM}. 
Motivated by the finite sample error bounds in Theorems \ref{thm:time_ar_error} and \ref{thm:lfm_error}, 
this chooses SCM weights to minimize a weighted average of the (squared) pooled and unit-specific pre-treatment fits:
\begin{equation}
  \label{eq:stag_avg_relative_scm_primal}
  \begin{aligned}
   \min_{\Gamma\in \Delta^{\text{scm}}}  \;\;\;   &  \nu~(\tilde{q}^{\text{pool}}(\Gamma))^2 +
    (1-\nu) 
      ~(\tilde{q}^{\rm{sep}}(\Gamma))^2
    \;+\; 
    \lambda \|\Gamma\|_F^2.
  \end{aligned}
\end{equation}
\noindent The hyperparameter $\nu \in [0,1]$ governs the relative importance of the two objectives; higher values of $\nu$ correspond to more weight on the pooled fit relative to the separate fit.
In Appendix \ref{sec:sim_scm_dual}, we show that intermediate values of $\nu$ correspond to a partial pooling solution for the weights in the dual parameter space, motivating our choice of a name.

The optimization in Equation \eqref{eq:stag_avg_relative_scm_primal} differs from the bounds in Section \ref{sec:error_bounds} in two practical ways. 
First, we minimize the normalized imbalance measures (e.g.,  $\tilde{q}^{\text{pool}}$ rather than $q^{\pool}$), so that the minimum with $\nu = 0$ and $\lambda = 0$ is indexed to 1.
This ensures that the two objectives are on the same scale, regardless of the number of treated units, and makes it easier to form intuition about $\nu$.
Second, we minimize the squared imbalances, which permits a computationally feasible quadratic program.  
As with the single synthetic controls problem in Equation \eqref{eq:scm_unit_j}, we penalize the sum of the squared weights, $\|\Gamma\|_F^2$.

\subsection{Special cases: Separate SCM ($\nu = 0$) and Pooled SCM ($\nu = 1$)}

We first consider two special cases of Equation \eqref{eq:stag_avg_relative_scm_primal}, which correspond to extreme values of the hyperparameter $\nu$, and then consider intermediate cases.

To date, common practice for staggered adoption applications of SCM is to estimate separate SCM fits for each treated unit, then estimate the ATT by averaging the unit-specific treatment effect estimates.  
This approach, which we refer to as \emph{separate SCM}, minimizes $q^{\sep}$ alone and is equivalent to our proposal in Equation \eqref{eq:stag_avg_relative_scm_primal} with $\nu=0$.
Since this separate SCM strategy prioritizes the unit-specific estimates, $\hat{\tau}_{jk}$,
an important question is when this approach will also give reasonable estimates of $\text{ATT}_k$.
From Theorems \ref{thm:time_ar_error} and \ref{thm:lfm_error}, we can see that if the unit-specific fits are all excellent, then the estimation error $\left|\text{ATT}_k - \widehat{\text{ATT}}_k\right|$ will be small. 
This is not the case in our application, however.
Figure~\ref{fig:separate_scm_fits_intro} shows SCM ``gap plots'' of $\hat{\tau}_{j\ell}$ against $\ell$ for three illustrative treated states, taken one at a time. While Ohio shows relatively good pre-treatment fit, there are no synthetic controls that closely track Illinois or New York's pre-treatment outcomes. Thus, simply averaging the estimated treatment effects across these three states without attention to the overall fit does not yield a convincing estimate. Other recent applications also face the same issue where several treated units have poor pre-treatment fit \citep[see e.g.][]{dube2015pooling, donohue2019right}.\footnote{One way to address this is to trim the sample and drop treated units with poor pre-treatment fit, noting that this changes the estimand.}

\begin{figure}[htb]
\centering
  \begin{subfigure}[t]{0.45\textwidth}  
{\centering \includegraphics[width=\maxwidth]{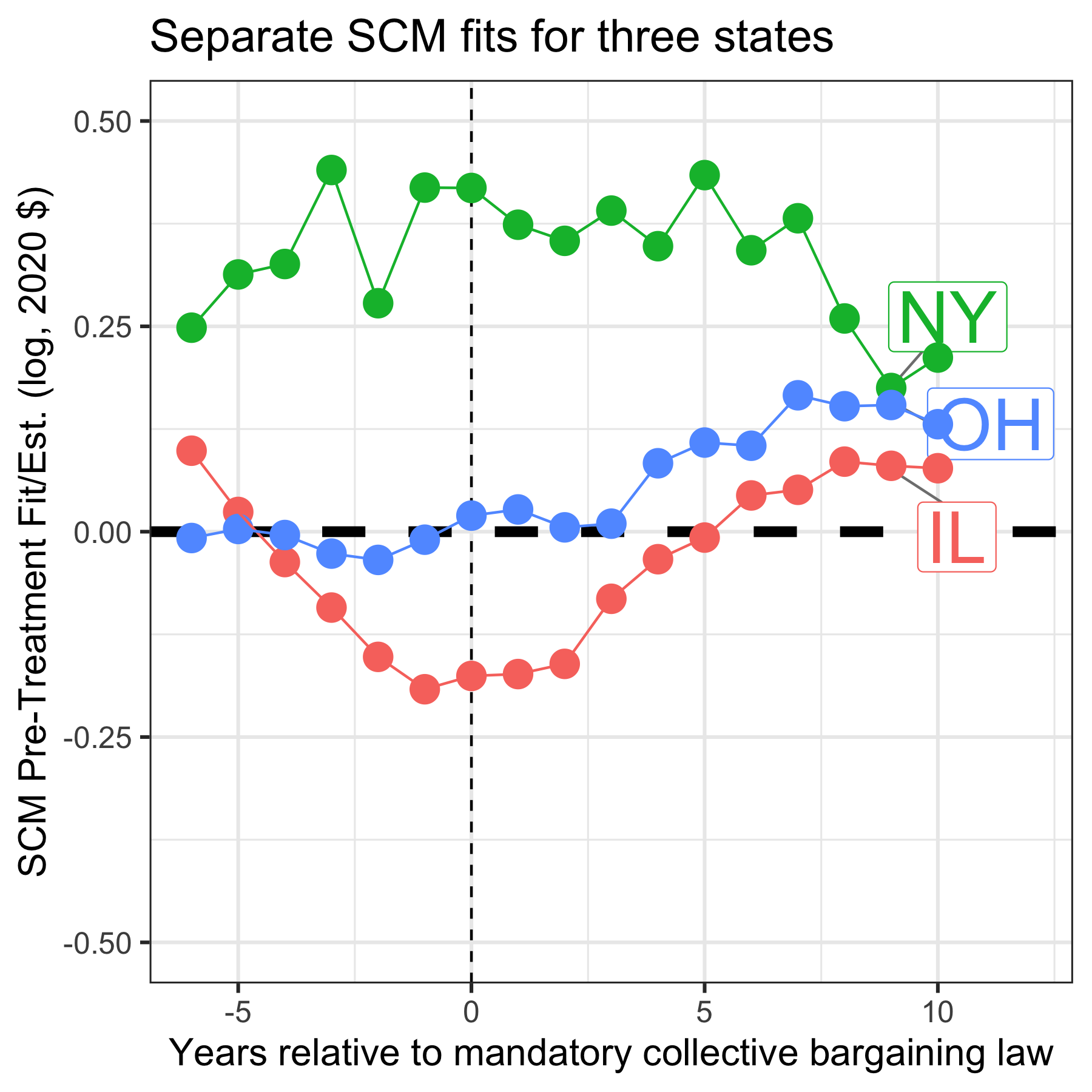} 
}
\caption{SCM ``gap plots'' for three illustrative states} 
  \label{fig:separate_scm_fits_intro}
  \end{subfigure}\quad
      \begin{subfigure}[t]{0.45\textwidth}  
  {\centering \includegraphics[width=\maxwidth]{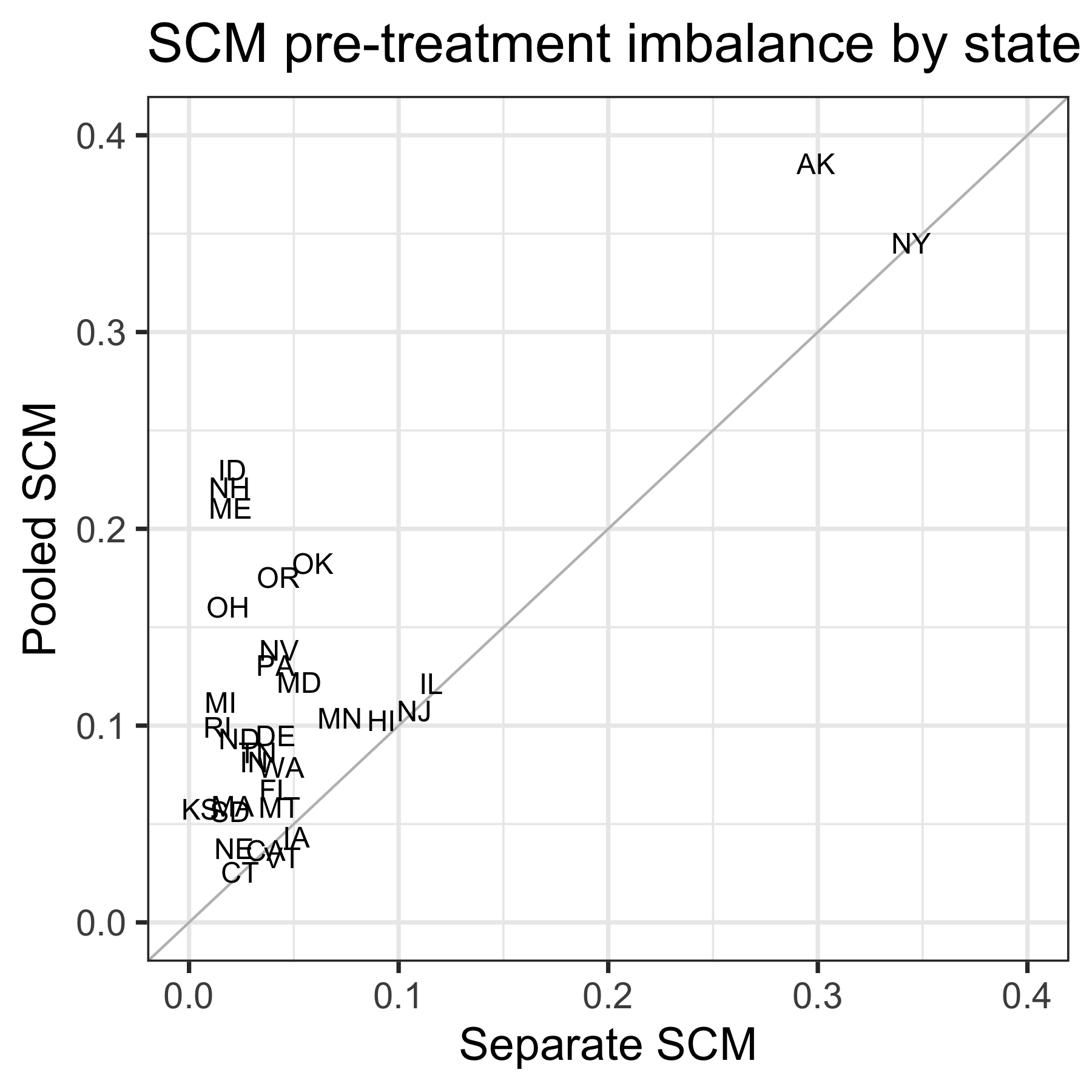} 
  }
  \caption{SCM pre-treatment fits by state}
    \label{fig:SCM_RMSE_comparison_no_vs_all_pool_ppexp}
    \end{subfigure} \\[2ex]

  \caption{(a) SCM pre-treatment fit for three states: (i) Ohio, with good overall fit, (ii) Illinois, where SCM fails to match an important pre-treatment trend, and (iii) New York, with pre-treatment imbalance roughly an order of magnitude larger than typical estimates for the impact of teacher mandatory bargaining.
  (b) SCM fits by state show that Separate SCM gives better pre-treatment fit than Pooled SCM for all treated states.} 
  \label{fig:SCM_no_all_pool_overview_plot}
\end{figure}

The other extreme case, which we refer to as \emph{pooled SCM}, instead sets $\nu=1$, finding weights that minimize $q^{\rm{pool}}$, the root mean squared placebo estimate of the ATT.
This ignores the unit-specific pre-treatment fits in the objective, resulting in poor unit-level synthetic controls and, in turn, leading to poor estimates of the unit-level treatment effects $\tau_{jk}$.
Furthermore,even if the ATT is the only estimand of interest,
Theorems \ref{thm:time_ar_error} and \ref{thm:lfm_error} indicate that Separate SCM is unlikely to control the error.
In particular, if the pooled weights do a poor job of matching individual treated units, the pooled synthetic control may involve a great deal of interpolation and the component of the error bound due to separate imbalance can be large.
In Section \ref{sec:sim_study_main} we validate through simulation that pooled SCM leads to substantially worse unit-level estimates than separate SCM, and also that
there are indeed settings where the bounds in Theorems \ref{thm:time_ar_error} and \ref{thm:lfm_error} do bind, leading to large error in pooled SCM estimates of the ATT.
See \citet{Abadie_LHour} for further discussion on interpolation bias in synthetic control settings.

There are special cases where only controlling $q^\pool$ with pooled SCM is sufficient, however. Theorems \ref{thm:time_ar_error} and \ref{thm:lfm_error} indicate that only the across-treated-unit variation in $\rho_{T_j+k}$ and $\mu_{T_j+k}$ lead to weight on the unit-specific fits. Thus, when this variation is zero, the ATT error bound is minimized with $\nu=1$.
As we discuss above, under simultaneous adoption, with $T_1=\ldots=T_J$, $S_{\rho}=0$ in the autoregressive model and $S_k=0$ in the linear factor model. 
The same arises in staggered adoption settings where the data generating process is homogeneous over time --- e.g., where $\rho_t \equiv \rho$ in the autoregressive model.
 It also holds approximately when the average autoregressive coefficient or factor values are large relative to the standard deviations --- i.e., $S_\rho \ll \bar{\rho}$ or $S_k \ll \bar{\mu}_k$, which could justify a choice of $\nu=1$.
Finally, when units are treated in cohorts (with $T_j=T_k$ for units in the same cohort), there is no variation in $\rho_t$ and $\mu_t$ across units in the same cohort. This suggests fully pooling (i.e., averaging) units that are treated at the same time, even if there is only partial pooling across treatment cohorts. We discuss this modification in Appendix \ref{sec:time_cohorts}.

\begin{figure}[]
  \centering
  \begin{subfigure}[t]{0.45\textwidth}  
{\centering \includegraphics[width=\maxwidth]{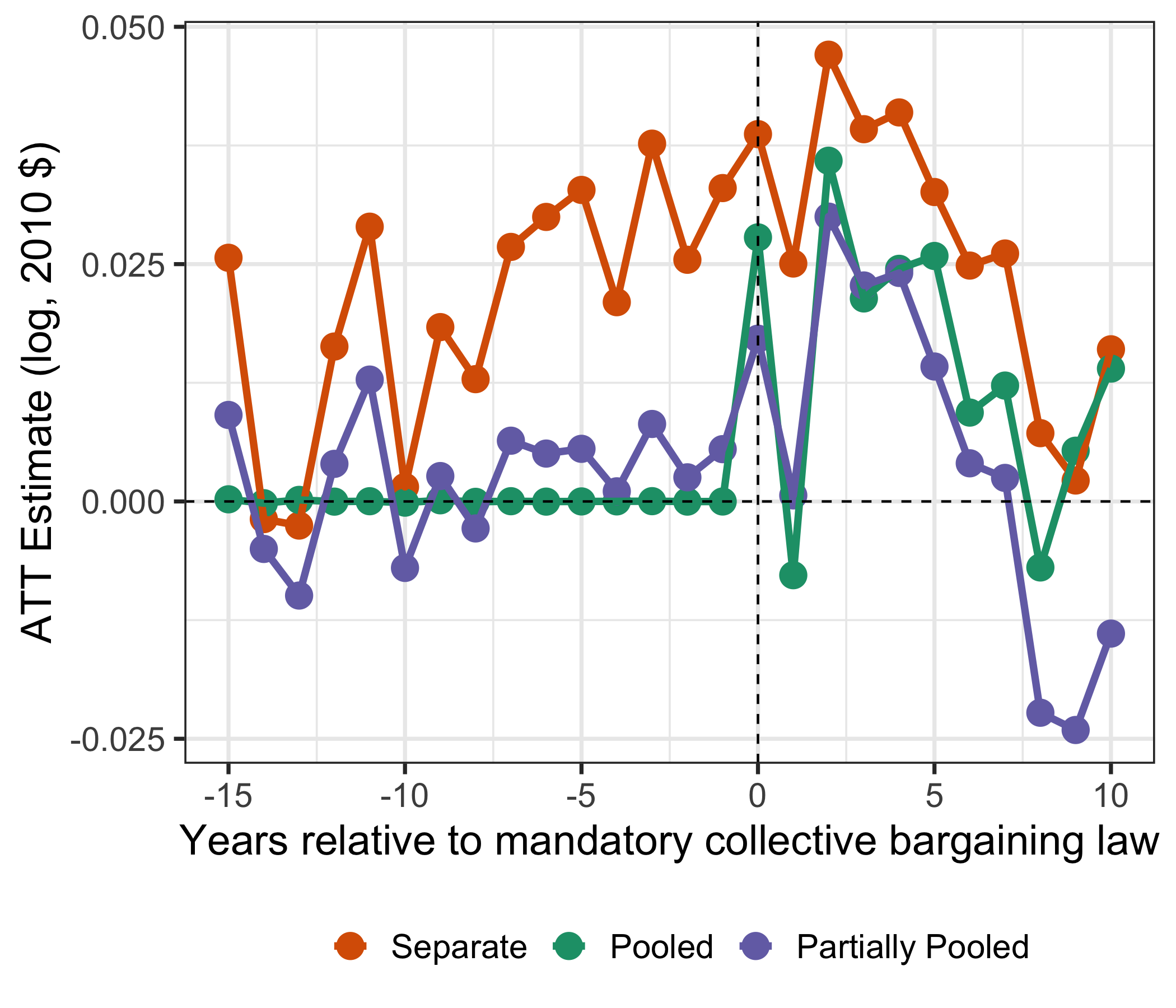} 
}
\caption{Estimated ATT on per-pupil expenditure (log, 2010 \$)} 
  \label{fig:scm_gaps}
  \end{subfigure}\quad
      \begin{subfigure}[t]{0.45\textwidth}  
  {\centering \includegraphics[width=\maxwidth]{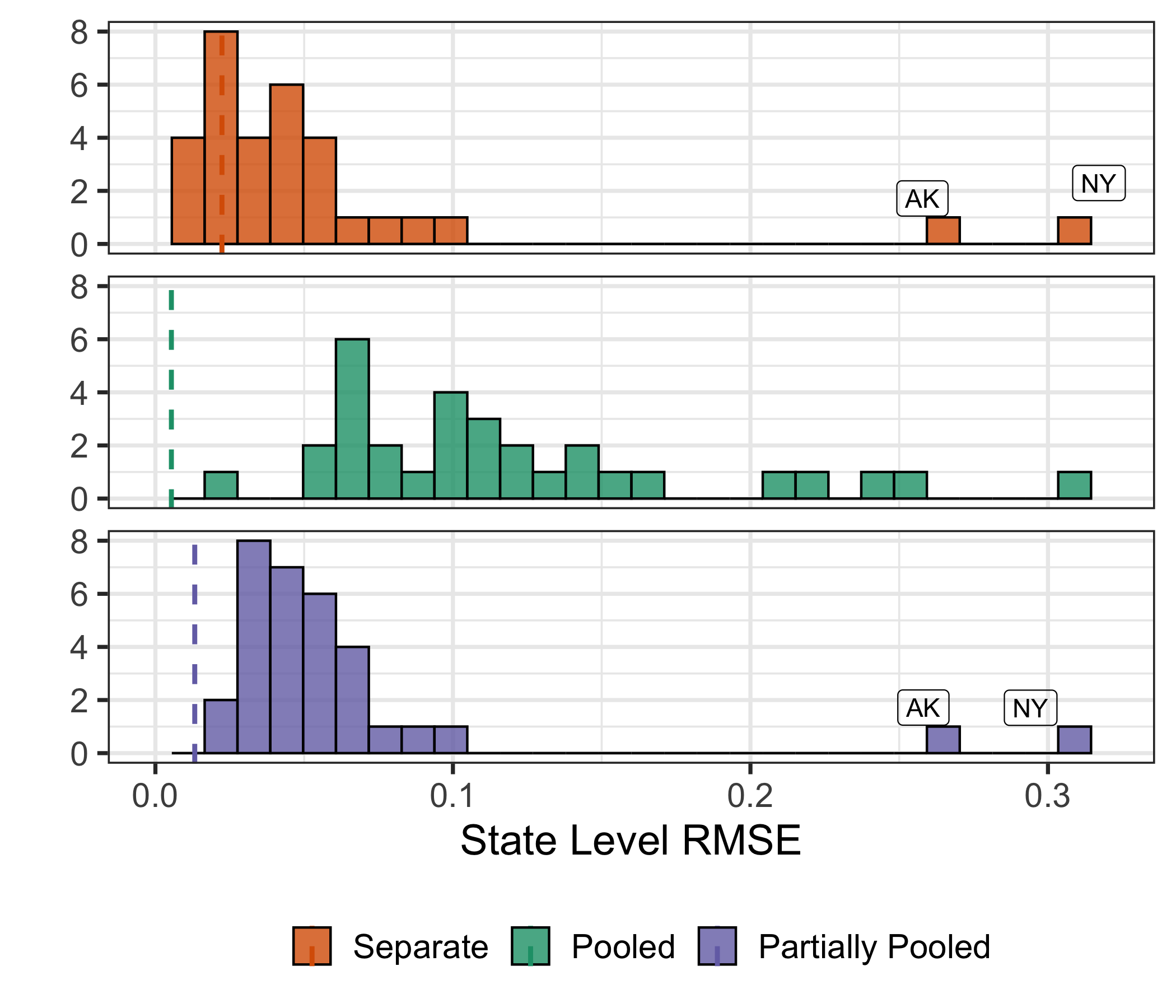} 
  }
  \caption{Distribution of state-level fits}
    \label{fig:scm_rmses}
    \end{subfigure} \\[2ex]
  \caption{(a) Series of estimated pre- and post-treatment effects $\widehat{\text{ATT}}_\ell$ and (b) state-level pre-treatment RMSE $\sqrt{\frac{1}{L}\sum_{\ell=1}^L \hat{\tau}_{j\ell}^2}$ using separate, pooled, and partially pooled SCM.}
  \label{fig:scm_plots}
\end{figure}

Figure \ref{fig:SCM_RMSE_comparison_no_vs_all_pool_ppexp} plots the state-level pre-treatment imbalances in our application for separate SCM versus pooled SCM. The separate SCM fit is better for all treated states, and so leads to more credible unit-level estimates.
However, these fits are far from perfect and so the results from Section \ref{sec:error_bounds} imply that there is room for improvement by controlling the pooled fit.
Figure \ref{fig:scm_gaps} shows the implied placebo estimates for the overall ATT  using the separate and pooled approaches: they are consistently positive for separate SCM weights and are all nearly zero for pooled SCM weights.
At the same time, Figure \ref{fig:scm_rmses} shows that pooled SCM has very poor unit-level fit, leading to the potential for error for \emph{both} the overall ATT estimate and the unit-level estimates. This motivates choosing an intermediate choice of $\nu \in (0,1)$.

\subsection{Intermediate choice of $\nu$}

\begin{figure}[tb]
  \centering
  \begin{subfigure}[t]{0.45\textwidth}  
  {\centering \includegraphics[width=\maxwidth]{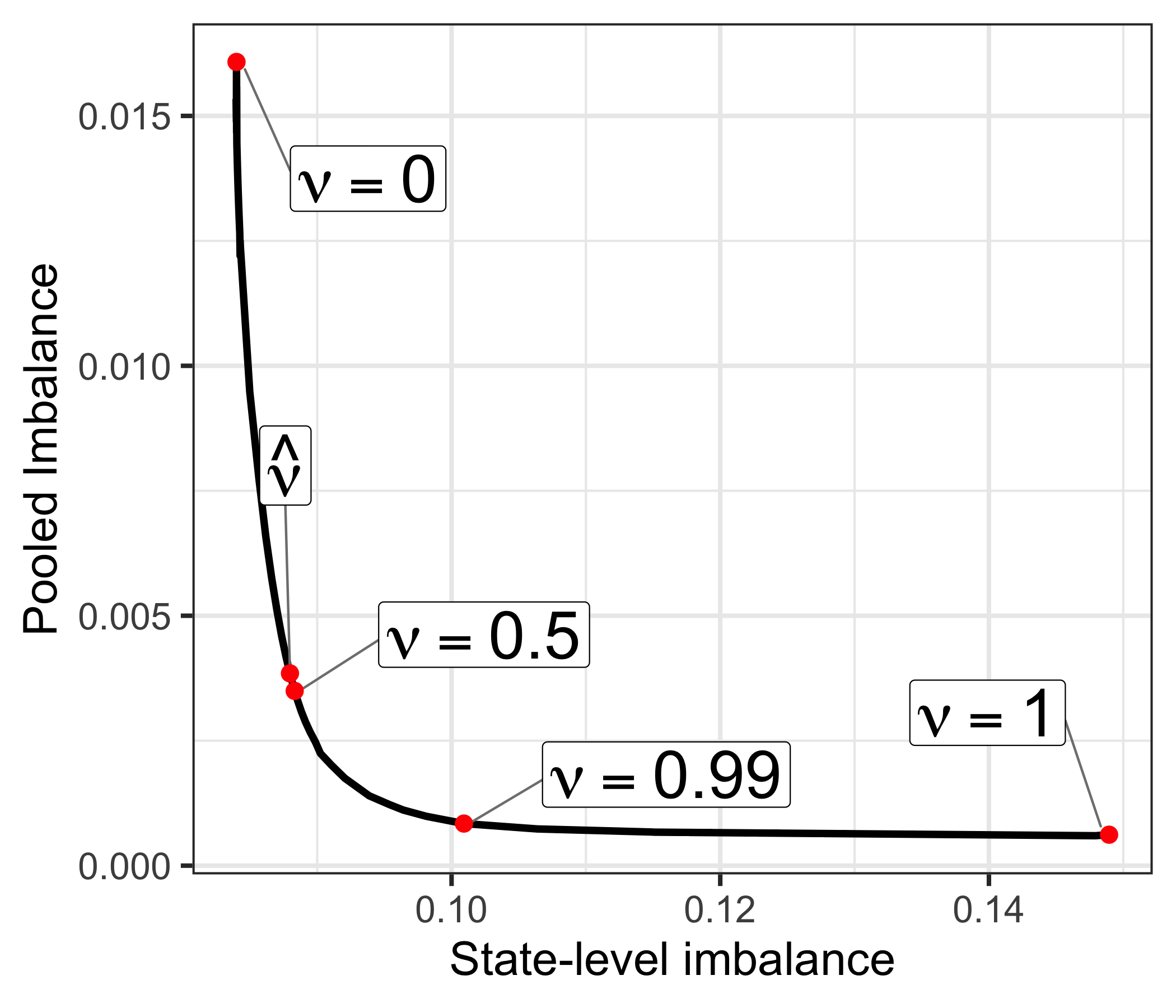} 
  }
  \caption{The \emph{balance possibility frontier}} 
    \label{fig:pareto_curve}
    \end{subfigure}\quad
    \begin{subfigure}[t]{0.45\textwidth}  
  {\centering \includegraphics[width=\maxwidth]{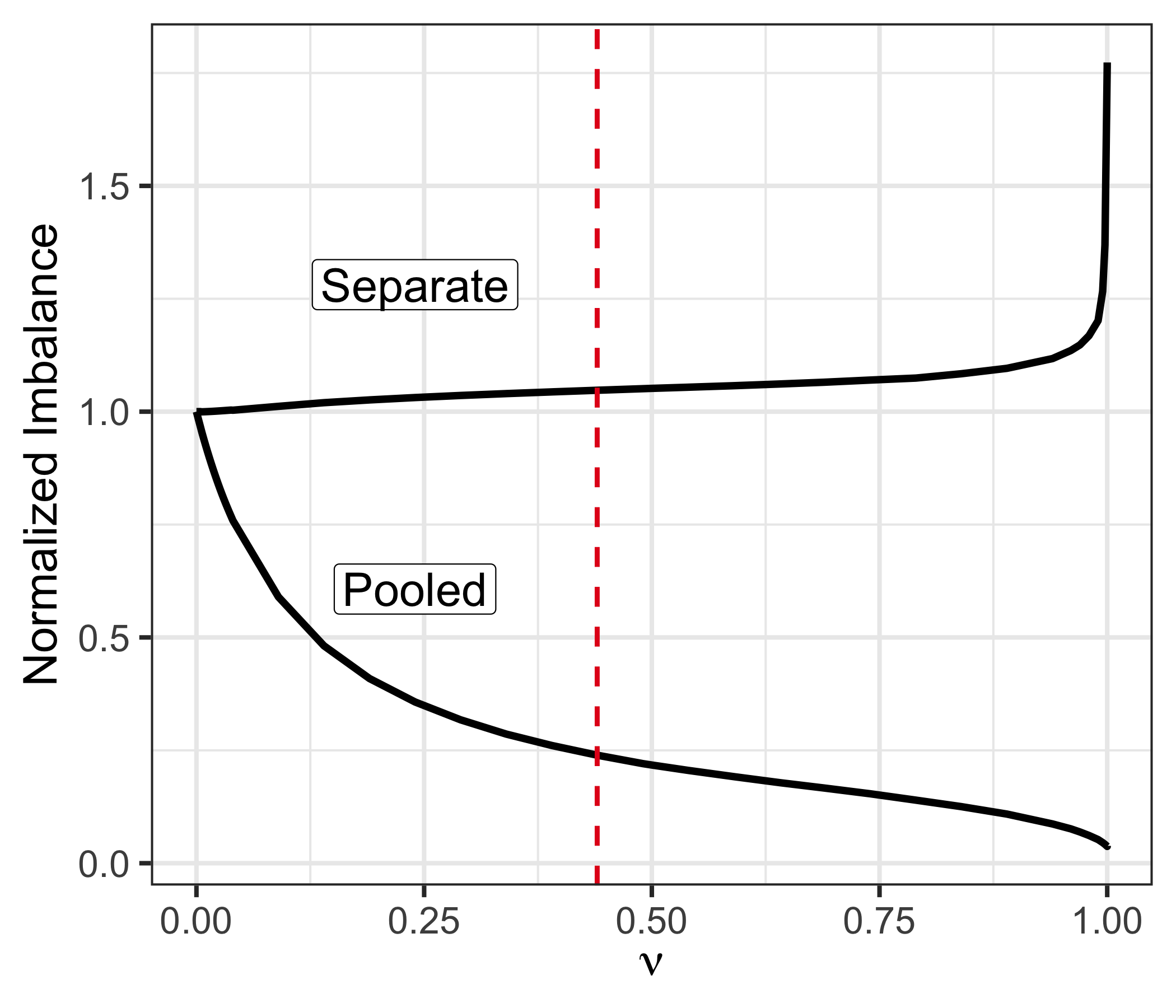} 
  }
  \caption{Separate and pooled balance versus $\nu$.}
    \label{fig:imbal_v_nu}
    \end{subfigure}

    \caption{
    (a) The trade-off between pooled imbalance ($q^\pool$) and unit-specific imbalance ($q^\sep$) as $\nu$ varies, where $\nu = 0$ is the separate SCM solution and $\nu = 1$ is the pooled SCM solution. 
    (b) $q^\sep$ and $q^\pool$ versus $\nu$, each normalized by their values for separate SCM. The dashed red line indicates $\hat{\nu}$.
    The large distance in unit-level imbalance between $\nu = 0.99$ and $\nu = 1$ suggest meaningful gains in balance from deviating from the complete pooling estimate even by a small amount.}
    
    \label{fig:imbalance_plot}
  \end{figure}

As we have seen, it is important to control both the pooled fit (for the ATT) and the unit-level fits (for both the ATT and the unit-level estimates). The hyper-parameter $\nu$ controls the relative weight of these in the objective.

One approach to choosing $\nu$ is to return to the error bounds in Theorems \ref{thm:time_ar_error} and \ref{thm:lfm_error}. The optimization problem in Equation \eqref{eq:stag_avg_relative_scm_primal} can be seen as a first-order approximation to the squares of the error bounds. Therefore, if the parameters of those bounds are known --- and our only goal is to estimate the ATT --- we can use these to choose an appropriate $\nu$.\footnote{
For example, in the autoregressive model, letting $a = \left\| \bar{\rho}\right\|_2 q^\pool(\widehat{\Gamma}^\sep)$ and $b = S_{\rho}q^\sep(\widehat{\Gamma}^\sep)$, we could choose  $\nu=\frac{a^2}{a^2 + b^2}$, with comparable quantities for the linear factor model.} Unfortunately, these will generally be infeasible as the analyst will not know these parameters, though in some applications it may be possible to obtain pilot estimates.

In general we want to find good estimates of both the overall ATT and the unit-level effects. It is therefore important to understand the implications of the choice of $\nu$ for the imbalance criteria. 
Figure \ref{fig:imbalance_plot} provides two views of this for the teacher collective bargaining application.
Figure~\ref{fig:pareto_curve} shows the \emph{balance possibility frontier}: the $y$-axis shows the pooled imbalance $q^\pool$ and 
the $x$-axis shows the unit-level imbalance 
$q^\sep$, and the curve traces out how these change as we vary $\nu$ from the separate SCM solution at the upper left to the pooled solution at the lower right. The relationship is strongly convex, indicating that by accepting a very small increase in pooled imbalance from the fully pooled solution we can obtain large reductions in unit-level imbalance, and vice versa starting from the separate $\nu=0$ solution.
See \citet{king2017balance} and \citet{pimentel2019} for other examples of balance frontiers in observational settings.

Figure \ref{fig:imbal_v_nu} plots the two imbalances, here normalized as $\tilde{q}^{\pool}$ and $\tilde{q}^{\sep}$, to put them on comparable scales, against $\nu$.
As $\nu$ rises, pooled imbalance falls while unit-level imbalance rises, though this is highly nonlinear, as the convex frontier in Figure \ref{fig:pareto_curve} suggests.
Moving from the separate SCM estimate of $\nu=0$ to a partially pooled SCM estimate of $\nu=0.5$ reduces the pooled imbalance by 80 percent, with more modest further reductions as $\nu \to 1$. 
Meanwhile, the unit-level imbalance declines quickly as $\nu$ falls from 1 to 0.9, then more slowly as $\nu$ declines further.
Even a very small deviation from the pooled SCM solution, such as moving from $\nu = 1$ to $\nu = 0.99$, cuts the unit-level imbalance by 30 percent with essentially no change in the pooled fit. 
Due to the number of degrees of freedom involved, the pooled imbalance will often be near zero for $\nu = 1$, and the objective function $q^{\rm{pool}}$ will be relatively flat in the neighborhood of the pooled solution. Therefore we expect that in many cases it will be possible to trade off a small increase in pooled imbalance for a large decrease in the unit-level imbalance, 
yielding a better estimator of both the overall ATT and the unit-level estimates at relatively little cost.
We view the balance possibility frontier plot in Figure \ref{fig:pareto_curve} as an important tool for using partially-pooled SCM in practice. By tracing out the curve, practitioners can see the trade-offs between the pooled and unit-level fit, and choose $\nu$ according to the trade-off they desire.

In our application, we use a simple heuristic to set $\nu$ based on the pooled fit of separate SCM, $q^{\pool}(\hat{\Gamma}^{\sep})$, which we also use to normalize our objective function in Equation  \eqref{eq:stag_avg_relative_scm_primal}.
We set $\nu$ to be the ratio of the pooled fit to the average unit-level fit:
$ \hat{\nu} = \sqrt{ L} \; q^\pool(\widehat{\Gamma}^\sep)/ \frac{1}{J}\sum_{j=1}^J \sqrt{L_j} \; q_j(\hat{\gamma}_j^\sep)$. This is bounded above by 1 due to the triangle inequality.\footnote{If the SCM fits with $\nu=0$ are perfect for each unit, $\frac{1}{J}\sum_{j=1}^J \sqrt{L_j} \; q_j = 0$, then the overall fit will also be perfect, $\sqrt{ L} \; q^\pool= 0$, and our heuristic sets $\hat{\nu} = 0$. This is not a common situation.}
The key idea is that, if the separate SCM problem with $\nu = 0$ achieves good \emph{pooled} fit on its own, then we want to select a small $\nu$, 
which will ensure both good unit-specific and pooled fit. 
Conversely, if the pooled fit of separate SCM is poor, 
then there can be substantial gains to giving $q^\pool$ higher priority by setting $\nu$ to be large.
In Section \ref{sec:sim_study_main} we find through simulation that this heuristic results in weights that significantly reduce both the estimation error for the ATT relative to separate SCM and the estimation error of the unit-level effects relative to pooled SCM.

In the teacher bargaining example, our heuristic yields $\hat{\nu} \approx 0.44$ for the per-pupil expenditure outcome, and we label this point in Figure \ref{fig:pareto_curve}.
The heuristic choice has similar global pre-treatment imbalance to the fully pooled estimator, $\nu=1$, with 
only a modest increase in unit-level imbalance relative to the separate SCM estimate, $\nu = 0$. 
This is reflected in Figure \ref{fig:scm_plots}, which also shows the placebo ATT estimates for partially pooled SCM. While the imbalance for the ATT is slightly larger than for pooled SCM, it is substantially better than for separate SCM.

There are many other potential choices for $\nu$, and, even if we focus solely on the ATT, this one is unlikely to be optimal. 
An alternative strategy when the balance possibility frontier exhibits a strong ``kink'' shape is to choose $\nu$ to be the point after which small improvements to the pooled fit lead to substantially worse unit-level fits. 
Another heuristic is to choose $\nu$ to be the point where the tangent of the frontier is equal to the slope between the end points at $\nu = 0$ and $\nu = 1$ ($\nu = .84$ in the teacher bargaining application).

In the end, the nonlinear relationship between $\nu$ and $\{q^{\sep},q^{\pool}\}$ in Figure \ref{fig:imbal_v_nu} suggests that the loss from choosing a suboptimal $\nu$ is likely to be small, so long as we do not choose something too close to 0 or 1.
We also recommend inspecting the sensitivity of estimates to the particular choice of $\nu$ in practice; we do this in Section \ref{sec:application}.

\section{Extensions}
\label{sec:extensions}

We now add two elaborations to the basic setup. First, we incorporate an intercept shift into the SCM problem, following proposals by  \citet{Doudchenko2017} and  \citet{ferman2018revisiting}. Second, we incorporate auxiliary covariates alongside lagged outcomes.
We conclude by briefly addressing inference in this setting.

\subsection{Incorporating intercept shifts}
\label{sec:intercept_shift}

We have established that the partially pooled SCM estimator achieves nearly as good overall balance as the fully pooled estimator, while achieving much better balance for each unit. Nevertheless, unit-level balance is often imperfect. Particularly when the scale of the outcome varies across units, it can be difficult to construct an adequate synthetic control, as one needs to match both the overall level and patterns over time.
Several recent papers have proposed modifying SCM for a single treated unit by allowing for an \emph{intercept shift} between the treated unit and its synthetic control \citep{Doudchenko2017, ferman2018revisiting, abadie2019synthreview}. We can adapt this approach to the staggered adoption setting by 
including an additional parameter vector $\alpha \in \R^J$, where $\alpha_j$ is an intercept term for unit $j$. We include this intercept in the counterfactual estimate as
\[
\hat{Y}_{jt}(\infty)=\alpha_j + \sum_{i=1}^N \gamma_{ij} (Y_{it}-\alpha_i)
\]
and in the separate and pooled imbalance measures as
\[
  (q^\sep(\alpha, \Gamma))^2 = \frac{1}{2J} \sum_{j = 1}^{J} \left[ \frac{1}{L_j}\sum_{\ell = 1}^{L_j} \left(Y_{j, T_j-\ell} \;-\; \alpha_j - \sum_{i=1}^N \gamma_{ij} Y_{iT_j-\ell}\right)^2\right],
\]
and
\[
  (q^\pool(\alpha, \Gamma))^2 = \frac{1}{L} \sum_{\ell=1}^{L} \left[\frac{1}{J}\sum_{T_j > \ell} 
   \left(Y_{jT_j-\ell} \;-\; \alpha_j - \sum_{i=1}^N \gamma_{ij} Y_{iT_j-\ell}\right)\right]^2.
\]
Again we can define normalized versions of these objectives, $\tilde{q}^\pool(\alpha, \Gamma) \equiv \nicefrac{q^\pool(\alpha, \Gamma)}{q^\pool(\hat{\alpha}^\sep, \widehat{\Gamma}^\sep)}$, where $\hat{\alpha}^\sep$ and $\widehat{\Gamma}^\sep$ are the minimizers of $(q^\sep(\alpha, \Gamma))^2$.
As above, we then form an overall objective function as a convex combination of the normalized squares:

\begin{equation}
  \label{eq:stag_avg_relative_scm_primal_intercept}
  \begin{aligned}
   \min_{\alpha \in \R^J, \Gamma\in \Delta^{\text{scm}}}  \;\;\;   &  \nu~(\tilde{q}^{\text{pool}}(\alpha, \Gamma))^2 +
    (1-\nu) 
      ~(\tilde{q}^{\rm{sep}}(\alpha, \Gamma))^2
    \;+\; 
    \lambda \|\Gamma\|_F^2.
  \end{aligned}
\end{equation}

\noindent The intercept $\hat{\alpha}$ that solves  Equation \eqref{eq:stag_avg_relative_scm_primal_intercept} has a closed form in terms of the solution for the weights, $\hat{\Gamma}^\ast$; $\hat{\alpha}_j$ is the average pre-treatment difference between treated unit $j$ and its synthetic control,
\begin{equation}
  \label{eq:intercept_sol}
  \hat{\alpha}_j = \frac{1}{L_j} \sum_{\ell = 1}^{L_j} Y_{jT_j - \ell} - \frac{1}{L_j}\sum_{i=1}^N\sum_{\ell = 1}^{L_j} \hat{\gamma}^\ast_{ij} Y_{jT_j - \ell}.
\end{equation}
Plugging this value of $\hat{\alpha}$ into Equation \eqref{eq:stag_avg_relative_scm_primal_intercept}, we see that this procedure is equivalent to solving the partially-pooled SCM problem \eqref{eq:stag_avg_relative_scm_primal} using the \emph{residuals} $\dot{Y}_{iT_j - \ell} \equiv Y_{iT_j-\ell} - \frac{1}{L_j}\sum_{\ell=1}^{L_j} Y_{iT_j - \ell}$. The resulting treatment effect estimates have a particularly useful form:
\begin{equation}
  \label{eq:tau_jt_aug}
    \hat{\tau}_{jk}^{\ast}  = \frac{1}{L_j}\sum_{\ell=1}^{L_j}\left[ \left(Y_{j T_j+k} - Y_{j T_j - \ell}\right) -
                \sum_{i=1}^N\hat{\gamma}_{ij}^\ast\left(Y_{i T_j + k} - Y_{i T_j - \ell}\right)\right],
\end{equation}
and 
\begin{equation}
  \label{eq:tau_att_aug}
    \widehat{\text{ATT}}_k^{\ast} = \frac{1}{J} \hat{\tau}_{jk}^\ast = 
     \frac{1}{J} 
     \sum_{j=1}^J \left[
     \frac{1}{L_j}
     \sum_{\ell=1}^{L_j}
     \left[\left(Y_{jT_j+k} - Y_{jT_j-\ell}
             \right)
           -\sum_{i=1}^N 
               \hat{\gamma}_{ij}^\ast
               \left(Y_{iT_j+k} -
                     Y_{iT_j-\ell}
             \right)      \right]\right].
\end{equation}
We can view this as a weighted difference-in-differences (DiD) estimator. In the special case with uniform weights over units, $\hat{\gamma}_{ij}^\ast = 1/\|\calD_j\|$, Equation \eqref{eq:tau_jt_aug} is the simple average over all two-period, two-group DiD estimates, averaging over all pre-treatment lags $\ell$ and donor units $i$. 
This is equivalent to recent proposals for DiD estimators that allow for treatment effect heterogeneity with a fixed donor set per treatment time cohort \citep[see][among others]{abraham2018estimating, Callaway2018}.
With non-uniform weights, $\hat{\tau}_{jk}^{\ast}$  compares the change in outcomes for treated unit $j$ to the change for the synthetic control, rather than the average change across all potential donors.
Equation \eqref{eq:tau_att_aug} averages these estimates across treated units $j$ to form $\widehat{\text{ATT}}_k^{\ast}$.

\begin{figure}[tb]
  \centering
  \begin{subfigure}[t]{0.45\textwidth}  
    {\centering \includegraphics[width=\maxwidth]{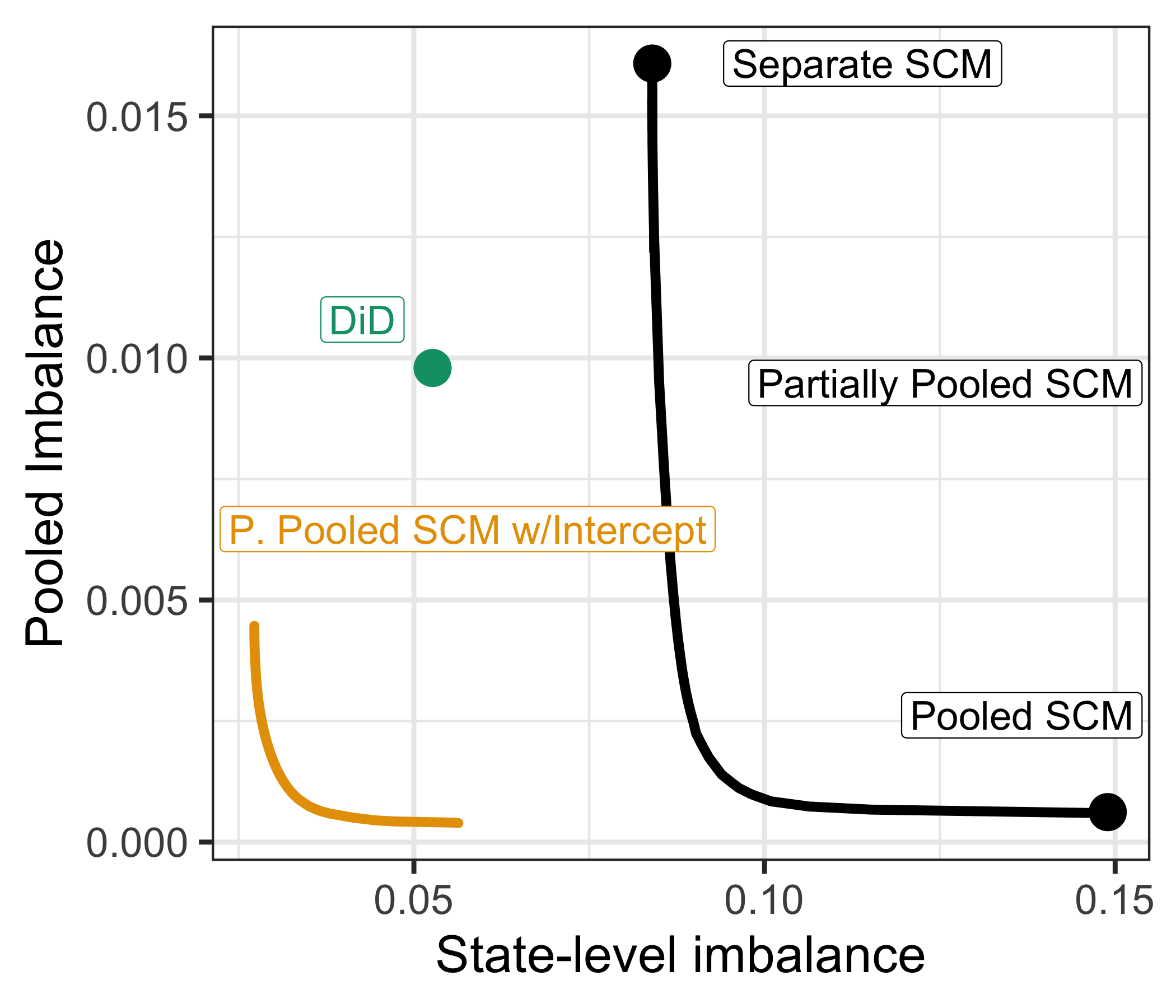} 
    }
    \caption{The balance possibility frontier for SCM with and without an intercept.}
      \label{fig:pareto_comparison}
    \end{subfigure}
    \begin{subfigure}[t]{0.45\textwidth}  
      {\centering \includegraphics[width=\maxwidth]{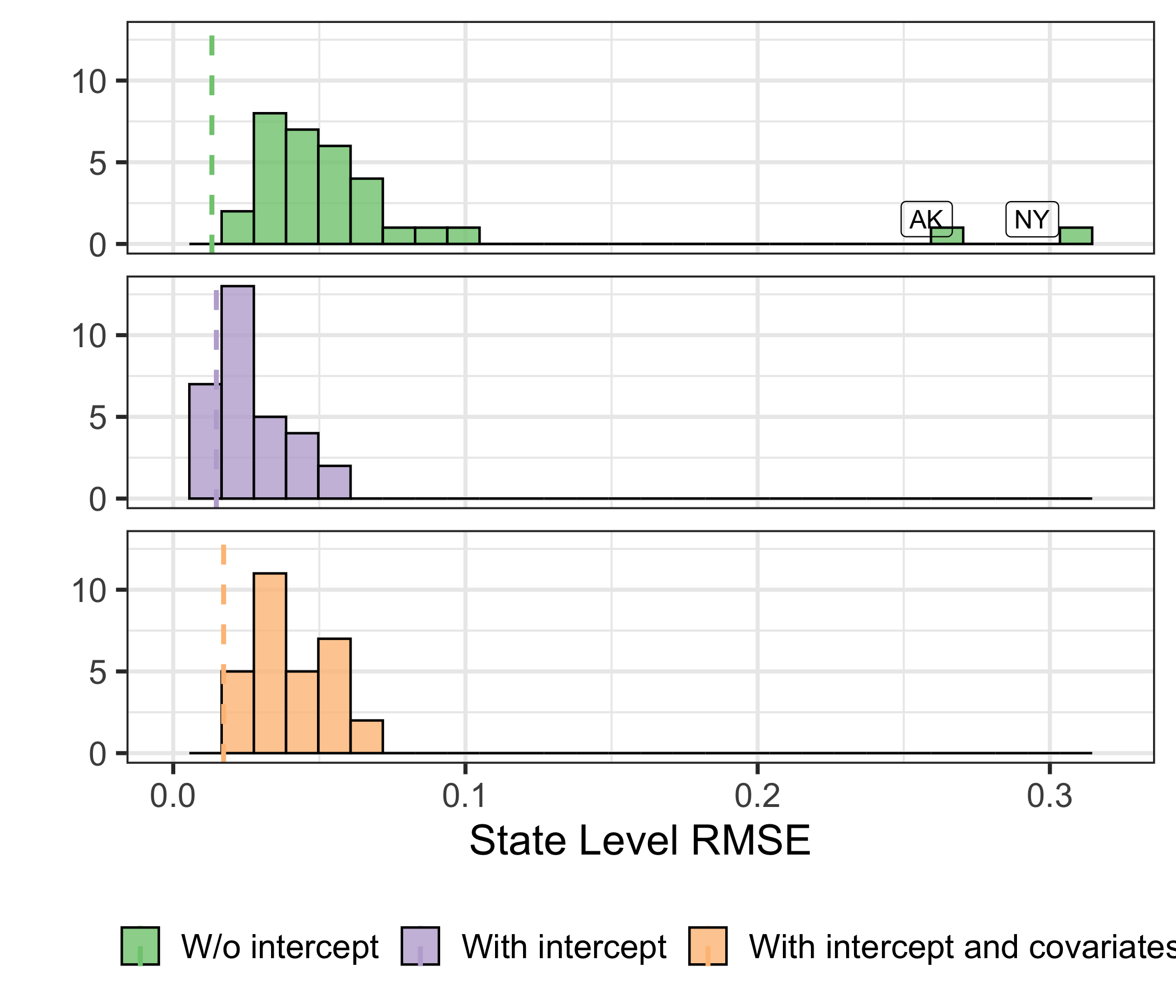} 
      }
      \caption{Distribution of unit-level fits}
        \label{fig:state_fits}
        \end{subfigure}
  
    \caption{(a) The balance possibility frontier for SCM with and without an intercept, as well as the implied imbalance for DiD. Incorporating unit-level fixed effects leads to substantial improvements in balance. For DiD, we compute the implied balance as  $\sqrt{\sum_{\ell=1}^{L} \left(\widehat{\text{ATT}}^\ast_{-\ell}\right)^2}$, the RMSE of the placebo estimates, from Equation \eqref{eq:tau_jt_aug} with uniform weights. 
    (b) The distribution of state-level fits (in terms of RMSE) with and without an intercept and covariates; dashed lines show the pooled pre-treatment RMSE.
    }
    \label{fig:wevent_ppexp}
  \end{figure}

Figure \ref{fig:wevent_ppexp} shows the value of including an intercept to improving  pre-treatment fit in the teacher collective bargaining application.
Figure \ref{fig:pareto_comparison} presents this as a balance possibility frontier for SCM with the weights alone and with the intercept, as well as the implied imbalance for the DiD estimator alone.
Here, simple unweighted DiD achieves unit-level and pooled balance that improves on the no-intercept SCM possibility frontier. 
However, the intercept-shifted estimator dominates both DiD and no-intercept SCM estimates on both criteria, for all but the largest $\nu$.
We see similar results when examining the state-specific fits. 
Figure \ref{fig:state_fits} shows the unit-level fit for both partially pooled SCM and the intercept-augmented version. Two states, New York and Alaska, have especially bad pre-treatment fits without including an intercept because they have the highest per-pupil expenditures of all the states for many years (see Appendix Figure \ref{fig:raw_data_highlight}). Accounting for the pre-treatment average through the intercept dramatically improves the fits for these states.

\subsection{Incorporating auxiliary covariates}
\label{sec:covs}

We have focused thus far on matching pre-treatment values of the outcome variable. In practice, we typically observe a set of auxiliary covariates $X_i \in \R^d$ as well.
In our collective bargaining application, we consider five covariates, measured as of the start of the sample in 1959-1960: income per capita, the student to teacher ratio, the percent of the population with 12+ and 13+ years of education, and the female labor force participation rate.\footnote{Due to missing data for these auxiliary covariates, we restrict our analysis here to the contiguous United States. Note that this drops Alaska, which we have seen is far outside the convex hull of its donor units.} We standardize each to have mean zero and variance one.

There are several ways to incorporate auxiliary covariates in the setting with a single treated unit. Here we directly include them into the optimization problem. Analogous to above, we define both the unit-level imbalance and pooled imbalance of $X$,
\[
  q^\sep_X(\Gamma) = \sqrt{\frac{1}{J}\sum_{j=1}^J\left\|X_j - \sum_{i=1}^n \gamma_{ij}X_i\right\|_2^2},
\]
and another for the pooled synthetic control,
\[
  q^\pool_X(\Gamma) = \left\|\frac{1}{J} \sum_{j=1}^J X_j - \sum_{i=1}^n \gamma_{ij}X_i\right\|_2,
\]
\noindent with normalized versions $\tilde{q}^{\sep}_X(\Gamma)$ and $\tilde{q}^{\pool}_X(\Gamma)$.\footnote{Specifically, let $\hat{\alpha}^\sep$ and $\widehat{\Gamma}^\sep$ be the minimizers of $(q^\sep(\alpha, \Gamma))^2 + \xi (q^\sep_X(\Gamma))^2$, and $(C^\sep)^2 = (q^\sep(\hat{\alpha}^\sep, \widehat{\Gamma}^\sep))^2 + \xi (q^\sep_X(\widehat{\Gamma}^\sep))^2$ and $(C^\pool)^2 = (q^\pool(\hat{\alpha}^\sep, \widehat{\Gamma}^\sep))^2 + \xi (q^\pool_X(\widehat{\Gamma}^\sep))^2$ be the combined separate and pooled imbalances. We define the normalized objectives as $\tilde{q}^\pool_X(\Gamma) = \nicefrac{q^\pool_X(\Gamma)}{C^\pool}$, $\tilde{q}^\sep_X(\Gamma) = \nicefrac{q^\sep_X(\Gamma)}{C^\sep}$, and slightly abuse notation by re-defining 
$\tilde{q}^\pool(\alpha, \Gamma) \equiv \nicefrac{q^\pool(\alpha, \Gamma)}{C^\pool}$ and
$\tilde{q}^\sep(\alpha, \Gamma) \equiv \nicefrac{q^\sep(\alpha, \Gamma)}{C^\sep}$.} 
We then include these in our objective, with an additional hyper-parameter $\xi$:

\begin{equation}
  \label{eq:primal_with_covs}
  \begin{aligned}
   \min_{\alpha \in \R^J, \Gamma\in \Delta^{\text{scm}}}  \;\;\;   &  \nu~\left((\tilde{q}^{\text{pool}}(\alpha, \Gamma))^2 + \xi (\tilde{q}^\pool_X(\Gamma))^2\right)+
    (1-\nu)
      ~\left((\tilde{q}^{\rm{sep}}(\alpha, \Gamma))^2 + \xi (\tilde{q}_X^\sep(\Gamma))^2\right)
    \;+\; 
    \lambda \|\Gamma\|_F^2.
  \end{aligned}
\end{equation}
While we write this optimization problem with an intercept shift, we could also include auxiliary covariates but no intercept.
The choice of $\xi$ determines the relative importance of the outcomes and the auxiliary covariates. Setting $\xi = 0$ recovers the optimization problem \eqref{eq:stag_avg_relative_scm_primal_intercept} without auxiliary covariates, while in the extreme case setting $\xi = \infty$ will, if feasible, enforce exact balance on the auxiliary covariates. We decide to give equal priority to both terms. Since the auxiliary covariates are standardized, we set $\xi$ to be the sample variance of the pre-$T_J$ outcomes for the never treated units.
This equally weights both components in the objective functions, and reduces the number of hyper-parameters and specification choices.  Finally, we can incorporate time-varying covariates by including the values at time periods before the first treatment time $T_1$ into the vector $X_i$.\footnote{\citet{AbadieAlbertoDiamond2010} suggests using average pre-treatment mean outcomes in $X$, as an alternative to the above intercept proposal. As noted above, this may increase the difficulty of finding adequate synthetic controls.}

  \begin{figure}[tb]
      \centering
      {\centering \includegraphics[width=\maxwidth]{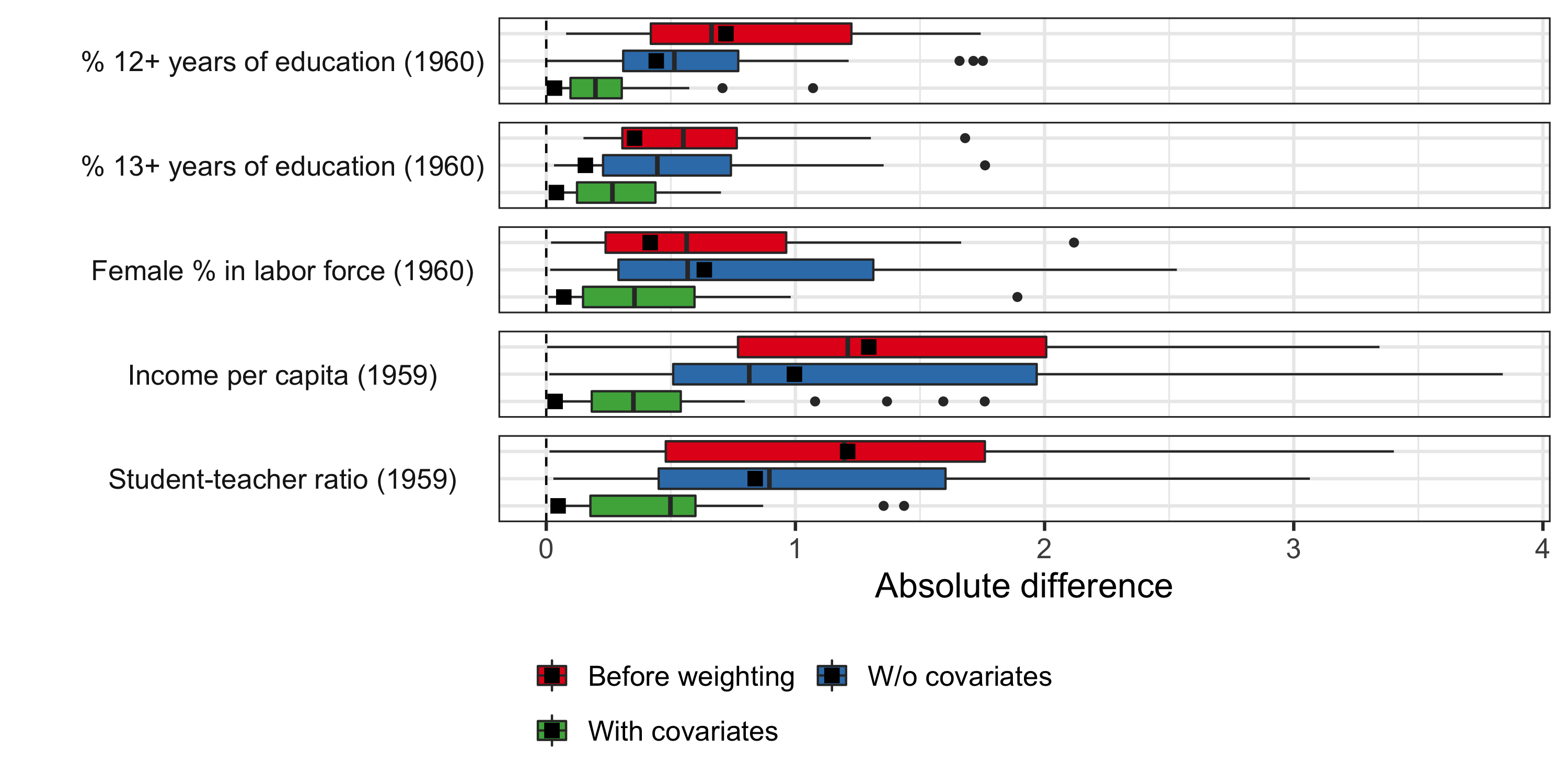} 
      }
      \caption{Distribution of the absolute difference between each treated unit and its synthetic control for the (standardized) auxiliary covariates, before weighting and with/without including covariates in the optimization procedure. Black squares show the absolute average difference.}
        \label{fig:balance_plot}
    \end{figure}

Figure \ref{fig:balance_plot} shows the level of covariate balance between each treated unit and its synthetic control, as well as for the average across treated units. Before weighting there are large differences between the treated units and their donor sets, and weighting on the outcomes alone does little to alleviate these differences. Including the auxiliary covariates into the optimization procedure finds weights that give nearly perfect covariate balance for the pooled synthetic control (indicated as the black squares), while also significantly improving covariate balance for the individual treated units (indicated as boxplots). Figure \ref{fig:state_fits} shows that this improved covariate balance comes at a small cost to the fit on the pre-treatment outcomes: the distribution of unit-level pre-treatment RMSE shifts slightly to the right.

\subsection{Inference}
\label{sec:inference}

There is a growing literature on inference for SCM-type estimators, though no proposed approach is fully satisfactory for all cases.
In settings where multiple units adopt treatment simultaneously, \citet{Abadie_LHour} propose an extension of the original permutation procedure of \citet{AbadieAlbertoDiamond2010}, and \citet{Arkhangelsky2018} propose resampling-based approaches. 
In a staggered adoption setting, \citet{toulis2018testing} propose a weighted permutation approach based on a Cox proportional hazards model. 
This is not appropriate in our application, however, since multiple units have the same treatment time, which is incompatible with the Cox model. Finally, \citet{cao2019synthetic} propose an Andrews test for inference with intercept-shifted SCM under staggered adoption.
Building on the existing literature, we consider 
constructing confidence intervals via the wild bootstrap.
We briefly describe this method here; we address asymptotic Normality and inference via the jackknife in Appendix \ref{sec:asymp_normal}.

The wild bootstrap approach we implement adapts the proposal from \citet{Otsu2017} for bias-corrected matching estimators; see also \citet{Imai2019_match}. 
First, we can re-write $\widehat{\text{ATT}}_k$ as the following average over units:
\begin{equation}
  \label{eq:linear_att}
  \widehat{\text{ATT}}_k = \frac{1}{J}\sum_{i=1}^N\sum_{g=T_1}^{T_J}\left(\bbone_{T_i  = g} - \sum_{T_j = g} \hat{\gamma}_{ij}\right) \left(Y_{i g+k} -\frac{1}{g-1}\sum_{\ell=1}^{g-1}Y_{ig-\ell}\right) = \frac{1}{J}\sum_{i=1}^N\tilde{\tau}_i.
\end{equation}
This bootstrap procedure draws a sequence of random variables $W^{(b)}_1,\ldots,W^{(b)}_N$ independently with $P(W_i = -(\sqrt{5} - 1) / 2 ) = (\sqrt{5} + 1)/2\sqrt{5}$ and $P(W_i = (\sqrt{5} + 1) / 2) = (\sqrt{5} - 1)/2\sqrt{5}$ for $b=1,\ldots,B$, and computes the boostrap statistic:
\begin{equation}
  \label{eq:mult_boot}
  S^{(b)} = \frac{1}{J}\sum_{i=1}^N W_i^{(b)}\left(\tilde{\tau}_i - \widehat{\text{ATT}}_k\right),
\end{equation}
for each draw. Letting $q_{\alpha/2}$ and $q_{1 - \alpha/2}$ denote the $\alpha/2$ and $1 - \alpha/2$ quantiles of $S^{(b)}$, we construct confidence intervals via $[\widehat{\text{ATT}}_k - q_{1 - \alpha/2}, \widehat{\text{ATT}}_k + q_{\alpha/2}]$. Importantly, we keep the weights and outcomes fixed, and only re-sample the multiplier variables $W_i^{(b)}$.

In the next section, we evaluate the coverage of the wild bootstrap with a simulation study that mimics the structure of the collective bargaining application.
In Appendix \ref{sec:asymp_normal}, we take an alternative route and motivate the use of resampling methods via asymptotic Normality. 
In particular, we provide a set of sufficient conditions for $\widehat{\text{ATT}}_k - \text{ATT}_k$ to be asymptotically Normal.
We consider an asymptotic regime in which $J,N_0\to\infty$, with the number of lags $L$ fixed and the number of control units growing faster than the number of treated units $\frac{J}{N_0} \to \infty$. 
We also adapt a generalization of the conditional parallel trends assumption in \citet{abadie2005semiparametric} to the staggered adoption setting. 
However, there are several ways such asymptotic results can be misleading. 
First, our result assumes that the synthetic control weights can achieve perfect fit within treatment time cohorts, which ensures that the distribution of $\widehat{\text{ATT}}_k$ is centered around $\text{ATT}_k$. Poor fit, either overall or across time cohorts, can lead to under-coverage.
Second, the asymptotic approximation can be poor when there are relatively few total units, and the use of resampling methods can exacerbate this. 
Thus, while we show that these approaches yield reasonable results in simulations, we suggest interpreting any confidence intervals for typical applications with caution.


\section{Simulation study}
\label{sec:sim_study_main}

We now consider the performance of different approaches in a simulation study calibrated to the collective bargaining dataset; we turn to the impacts of mandatory teacher collective bargaining laws in the actual data in the next section.
We evaluate performance with three different data generating processes. 
First, we generate never treated outcomes according to a two-way fixed effects model,
\begin{equation}
  \label{eq:sim_model1}
  Y_{it}(\infty) = \text{int} + \text{unit}_i + \text{time}_t + \varepsilon_{it},
\end{equation}
with both unit and time effects are normalized to have mean zero.
This model satisfies the parallel trends assumption needed for the DiD estimator we consider below.
We estimate \eqref{eq:sim_model1} using only the never-treated observations, and extract the estimated variance of the unit effects, $\hat{\Sigma}$, and of the error term, $\hat{\sigma}^2_{\varepsilon}$. We then generate $\text{unit}_i \overset{\text{iid}}{\sim} N(0, \hat{\Sigma})$ and $\varepsilon_{it} \overset{\text{iid}}{\sim} N(0, \hat{\sigma}_\varepsilon^2)$. 

Second, we use a factor model
with a 2-dimensional latent time-varying factor $\mu_t \in \R^2$ and unit-specific coefficients $\phi_i \in \R^2$:
\begin{equation}
  \label{eq:sim_model2}
  Y_{it}(\infty) = \text{int} + \text{unit}_i + \text{time}_t + \phi_{i}'\mu_{t} + \varepsilon_{it}.
\end{equation}
We estimate \eqref{eq:sim_model2} using the \texttt{R} package \texttt{gsynth} \citep{Xu2017} for the untreated units and time periods, then estimate the variance-covariance matrix of the unit fixed effects and factor loadings, $\hat{\Sigma}$, and the variance of the error term $\hat{\sigma}^2_\varepsilon$. 
Here we use the estimated $\{\widehat{\text{time}}_t, \hat{\mu}_t\}$, and draw $\{\text{unit}_i, \phi_i\} \overset{\text{iid}}{\sim} \text{MVN}(0, \hat{\Sigma})$ and $\varepsilon_{it} \overset{\text{iid}}{\sim} N(0, \hat{\sigma}_\varepsilon^2)$.

Finally, we have a random effects autoregressive model:
\begin{equation}
  \label{eq:sim_model_ar}
  \begin{aligned}
    Y_{it}(\infty) & = \sum_{\ell=1}^3 \rho_\ell Y_{i t - \ell}(\infty) + \varepsilon_{it}, \qquad \rho  \sim N(\mu_\rho, \sigma_\rho^2),
  \end{aligned}
\end{equation}
that we fit using \texttt{lme4} \citep{Bates2015} to obtain estimates $\hat{\mu}_\rho$ and $\hat{\sigma}_\rho$. In order to increase the level of heterogeneity across time, we simulate from this hierarchical model with 8 times the standard deviation $8\hat{\sigma}_\rho$. For all three outcome processes we generate simulated data sets with the same dimensions as the data, $N = 49$ and $T = 39$, and impose a sharp null of no treatment effect, $Y_{it}(s)=Y_{it}(\infty)=Y_{it}$.

A key component of the simulation model is selection into treatment. We fix the treatment times to be the same as in the teacher unionization application. 
For each treatment time, we assign treatment to those units not already treated with probability $\pi_i$, sweeping through the fixed set of treatment times.
For the two-way fixed effects model, we set the probability that unit $i$ is treated at each treatment time to be 
$\pi_i = \text{logit}(\theta_0 + \theta_1 \cdot \text{unit}_i)$, with $\theta_0 = -2.7$ and $\theta_1=-1$, yielding around 30 units that are eventually treated in each simulation draw. For the factor model we choose $\pi_i=\text{logit}(\theta_0 + \theta_1(\text{unit}_i+\phi_{i1} + \phi_{i2}))$, and set $\theta_0 = -2.7$ and $\theta_1=-1$ so that around 32 units are eventually treated in each simulation draw, following the distribution of the data. For the autoregressive process we allow selection to depend on the three lagged outcomes $\pi_i = \text{logit}\left(\theta_0 + \theta_1\sum_{\ell=1}^3Y_{i, t - \ell}\right)$, where $\theta_0 = \log 0.04$ and $\theta_1 = -2$.

\begin{figure}
  \centering
  {\centering \includegraphics[width=\textwidth]{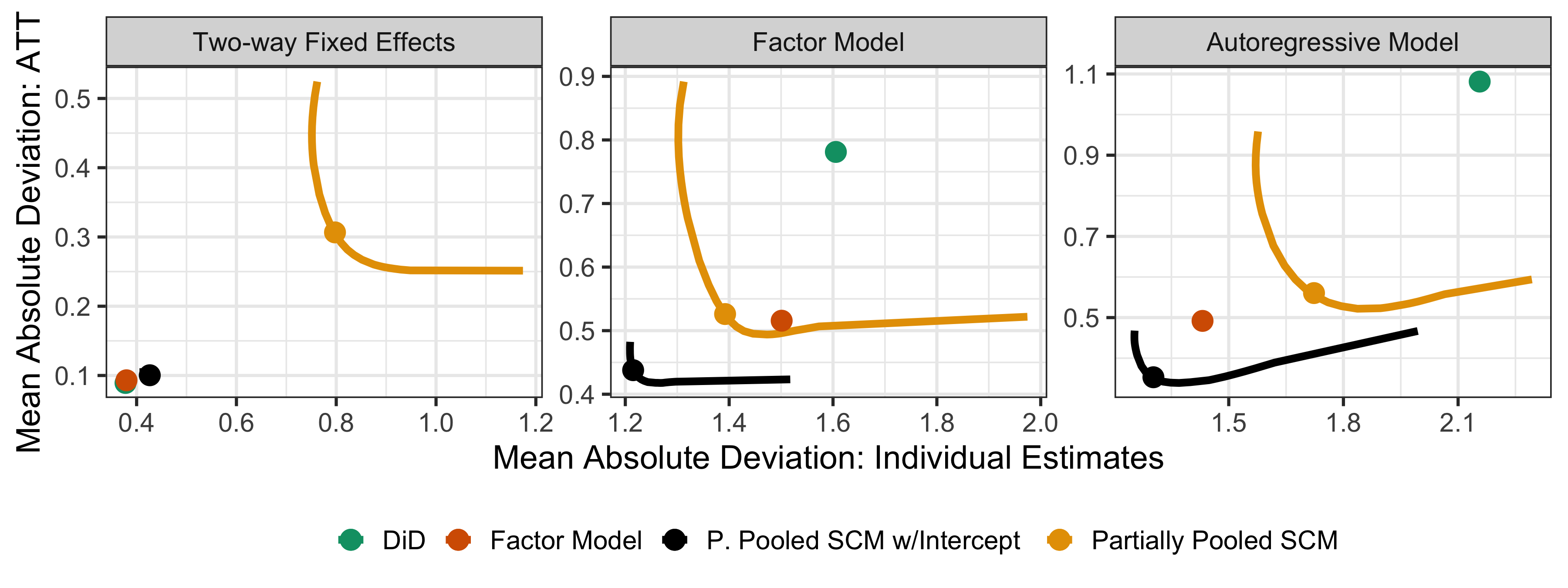}
  }
  \caption{\label{fig:sim_study_main}Monte Carlo estimates of the MAD for the overall ATT vs the MAD the individual ATT estimates. The lines trace out values for $\nu \in [0,1]$, the solid points are the average value using the heuristic $\hat{\nu}$. In the two-way fixed effects and factor model simulations, the estimated factor model is the oracle estimator. Among the alternatives, the intercept-shifted partially pooled SCM has lowest MAD for both the overall ATT and the individual ATT estimates.} 
  \end{figure}       

\paragraph{Estimation.} We consider several estimators for the average post-treatment effect $\text{ATT}$.
Figure \ref{fig:sim_study_main} shows four: (1) A difference-in-differences estimator following Equation \eqref{eq:tau_jt_aug} with uniform weights, (2) the partially pooled SCM estimator, as we vary $\nu$ between 0 and 1, (3) partially pooled SCM with an intercept, again varying $\nu$, and (4) directly estimating the factor model.
Solid points indicate the heuristic choice of $\hat{\nu}$ above.
The vertical axis of each panel shows the Mean Absolute Deviation (MAD) for the ATT, $\E\left[\left|\text{ATT} - \widehat{\text{ATT}}\right|\right]$, while the horizontal axis shows the average of the individual post-treatment effect estimates, $\E\left[\frac{1}{J}\sum_{j=1}^J|\tau_{j}-\hat{\tau}_{j}|\right]$. Appendix Figures \ref{fig:sim_study_bias} and \ref{fig:sim_study_rmse} show the analogous results for the bias and Root Mean Square Error (RMSE). 

There are several key takeaways from Figure \ref{fig:sim_study_main}. First, under each data generating process there is a tradeoff between estimating the ATT and the individual effects, with $\nu = 1$ at the top left of the ``MAD frontier'' and $\nu = 0$ at the bottom right.
Partially pooled SCM significantly reduces the bias for the overall ATT relative to separate SCM, and a small amount of pooling also leads to slightly better individual ATT estimates.
The gains to pooling, however, diminish for $\nu$ close to 1,
with the fully pooled SCM yielding poor individual ATT estimates under all three models.
Under a two-way fixed-effects model there is no penalty to pooling in terms of MAD for the overall ATT. This comports with Theorem \ref{thm:lfm_error}, which shows that targeting the pooled pre-treatment fit is sufficient under a two-way fixed effects model. However, under the factor model and AR process the fully pooled estimator leads to worse MAD for the overall ATT estimates than partially pooled SCM.
Second, when mis-specified, the DiD estimator does not do particularly well at controlling the MAD for either overall ATT or the unit-level estimates. 
Third, the intercept-shifted estimator dominates either of the alternatives in terms of both overall and unit-level estimates. 
Here again there are gains to partially pooling SCM, albeit with the possibility for a large amount of error from over-pooling. Fourth, our heuristic choices of $\nu$ perform reasonably well at selecting a point close to the value that minimizes the MAD for the ATT, while also reducing the MAD for the individual estimates. 
Finally, the partially-pooled SCM estimator with an intercept shift performs as well as or better than fitting the factor model directly.

\paragraph{Inference.}
We conclude by examining the finite-sample coverage of approximate 95\% confidence intervals from the wild bootstrap.
Figure \ref{fig:coverage} shows the coverage of approximate confidence intervals for partially pooled SCM with an intercept shift, using
the wild bootstrap to construct the intervals. Under the two-way fixed effects model, in which there is no bias from inexact fit, the wild bootstrap has close to 95\% coverage. 
Under both the linear factor model and the autoregressive model, however, the wild bootstrap is somewhat conservative.\footnote{Appendix Figure \ref{fig:coverage_scm} shows the analogous results for partially-pooled SCM without including an intercept. In this case, the wild bootstrap is extremely conservative.}
Overall, the wild bootstrap appears to be a reasonable, if conservative, choice.

\begin{figure}
  \centering
  {\centering \includegraphics[width=\textwidth]{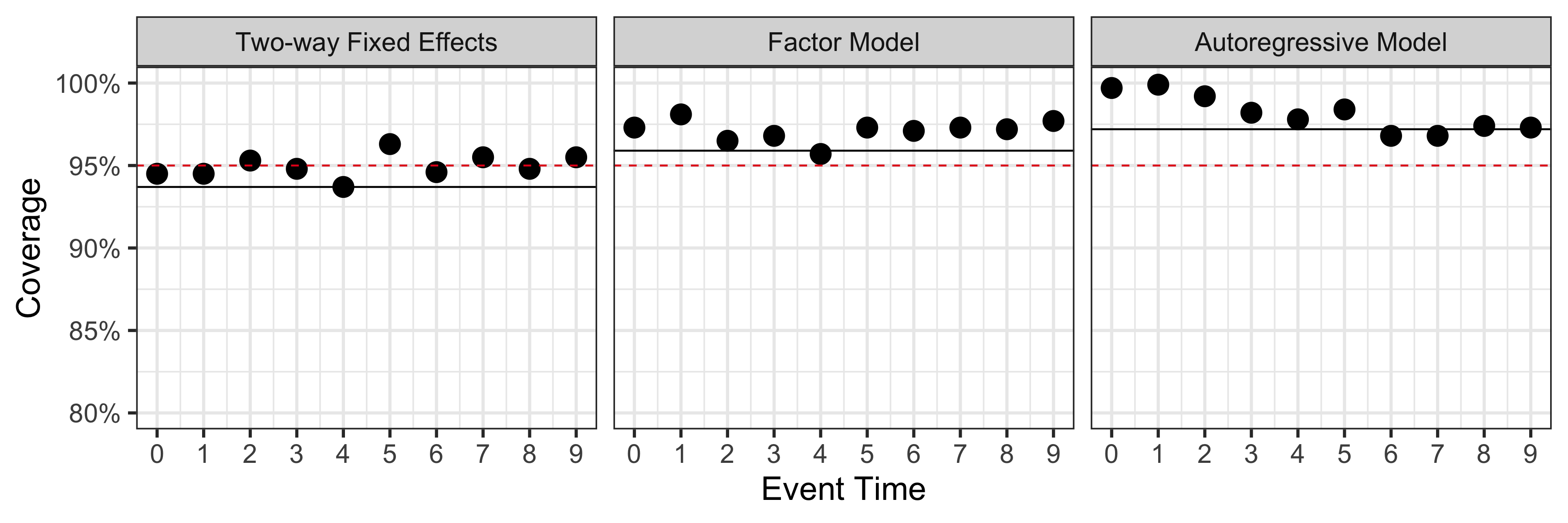}
  }
  \caption{\label{fig:coverage} Monte Carlo estimates of the coverage of approximate 95\% confidence intervals $k=0,\ldots,9$ periods after treatment. The solid line indicates the coverage for the overall ATT estimate averaged across all post-treatment periods.} 
  \end{figure}

\section{Impacts of mandatory teacher collective bargaining laws}
\label{sec:application}

We now return to measuring the impact of mandatory teacher collective bargaining.
The left of Figure \ref{fig:cov_int_results_ppexp} shows the placebo estimates from Equation \eqref{eq:tau_jt_aug}, where $k < 0$.\footnote{These placebo checks differ from those typically performed in traditional event studies, which test for the parallel trends assumption by comparing pre-treatment outcomes between treated and control units. 
These tests generally have low power, however; see, e.g., \citet{roth2018did,bilinski2018seeking, kahn2019promise}.
In contrast, the intercept-shifted estimator uses pre-treatment outcomes to select donor units that best balance the treated units, in effect optimizing for the placebo test. It is still possible to inspect pre-treatment fit, as in standard SCM, but this is best seen as an assessment of the quality of the match rather than as a formal placebo test.}
We see that along with the good unit-specific fits shown in Figure \ref{fig:state_fits} and the good covariate balance shown in Figure \ref{fig:balance_plot}, the pooled synthetic control estimate is near zero for $k < 0$.
The right side of the figure shows the estimated impact on per-pupil current expenditures, with approximate 95\% confidence intervals computed via the wild bootstrap.

Consistent with \citet{paglayan2019public}, we find weakly negative effects of mandatory teacher collective bargaining laws on student expenditures.
Pooled across the eleven years after treatment adoption, the overall estimate is $\widehat{\text{ATT}} = -0.03$, or a 3 percent decrease in per-pupil expenditures, with an approximate 95\% confidence interval of $[-0.06, +0.005]$.
In Appendix Figure \ref{fig:cov_int_ppexp_unit_level} we show the average post-treatment effect for each state and the unit-level fits. For those states with good pre-treatment fit, we find small positive and negative effects, while we estimate larger negative effects for those with worse fit.
These estimates are in stark contrast to the results from \citet{hoxby1996teachers}, who argues for a 12 percent positive effect, although she gives a range of estimates. 
One possible explanation for this is that school districts are able to divert funds from other purposes to fund higher teacher salaries with minimal net effect on total expenditures. In Appendix Figure \ref{fig:results_teachsal} we show estimates of the effect on teacher salaries, finding evidence against a positive effect.

 \begin{figure}[tb]
  \centering
      \begin{subfigure}[t]{0.45\textwidth}  
        {\centering \includegraphics[width=\maxwidth]{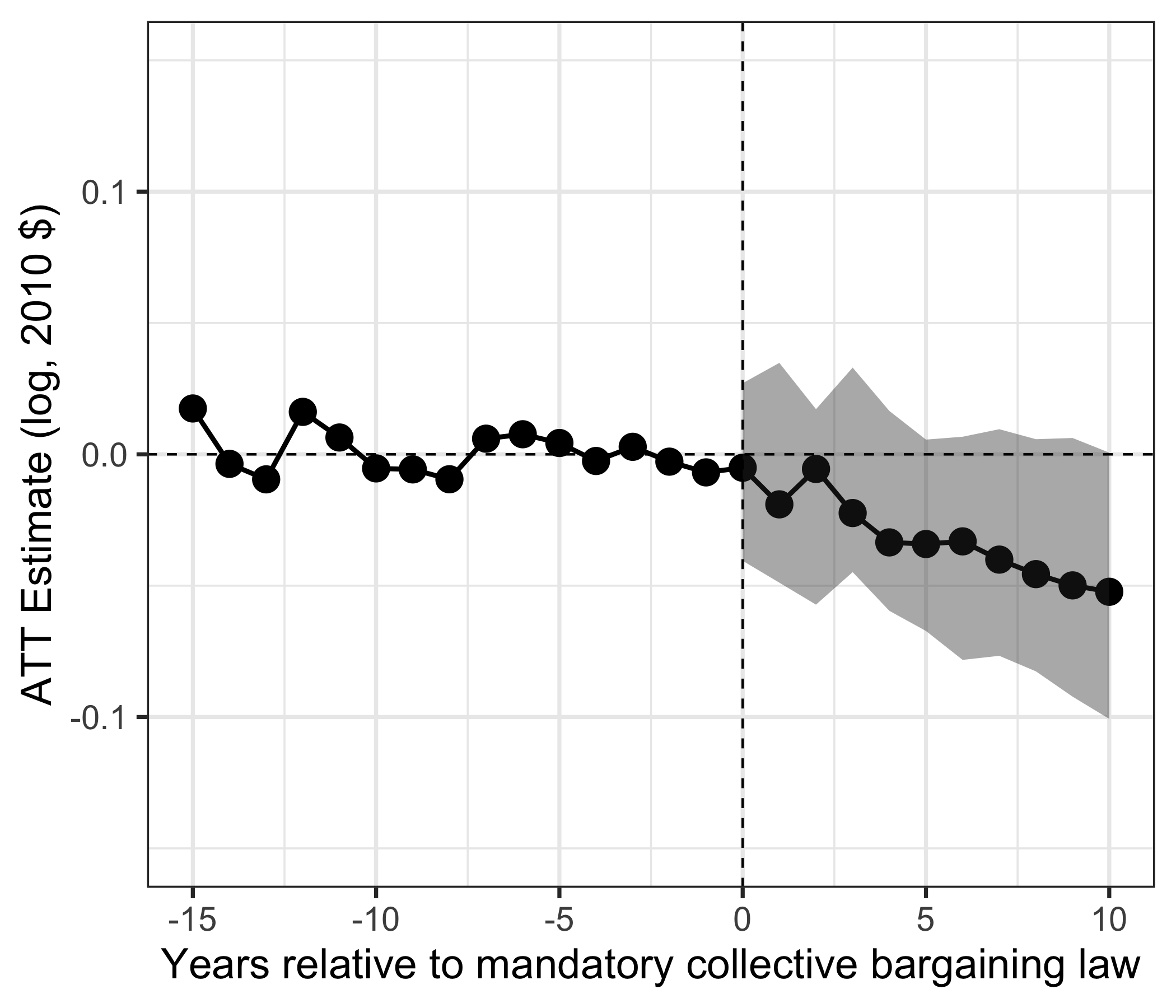} 
        }
        \caption{Effect of mandatory collective bargaining on per-pupil expendisures ($\hat{\nu} = 0.22$)}
          \label{fig:cov_int_results_ppexp}
    \end{subfigure} \quad
    \begin{subfigure}[t]{0.45\textwidth}  
      {\centering \includegraphics[width=\maxwidth]{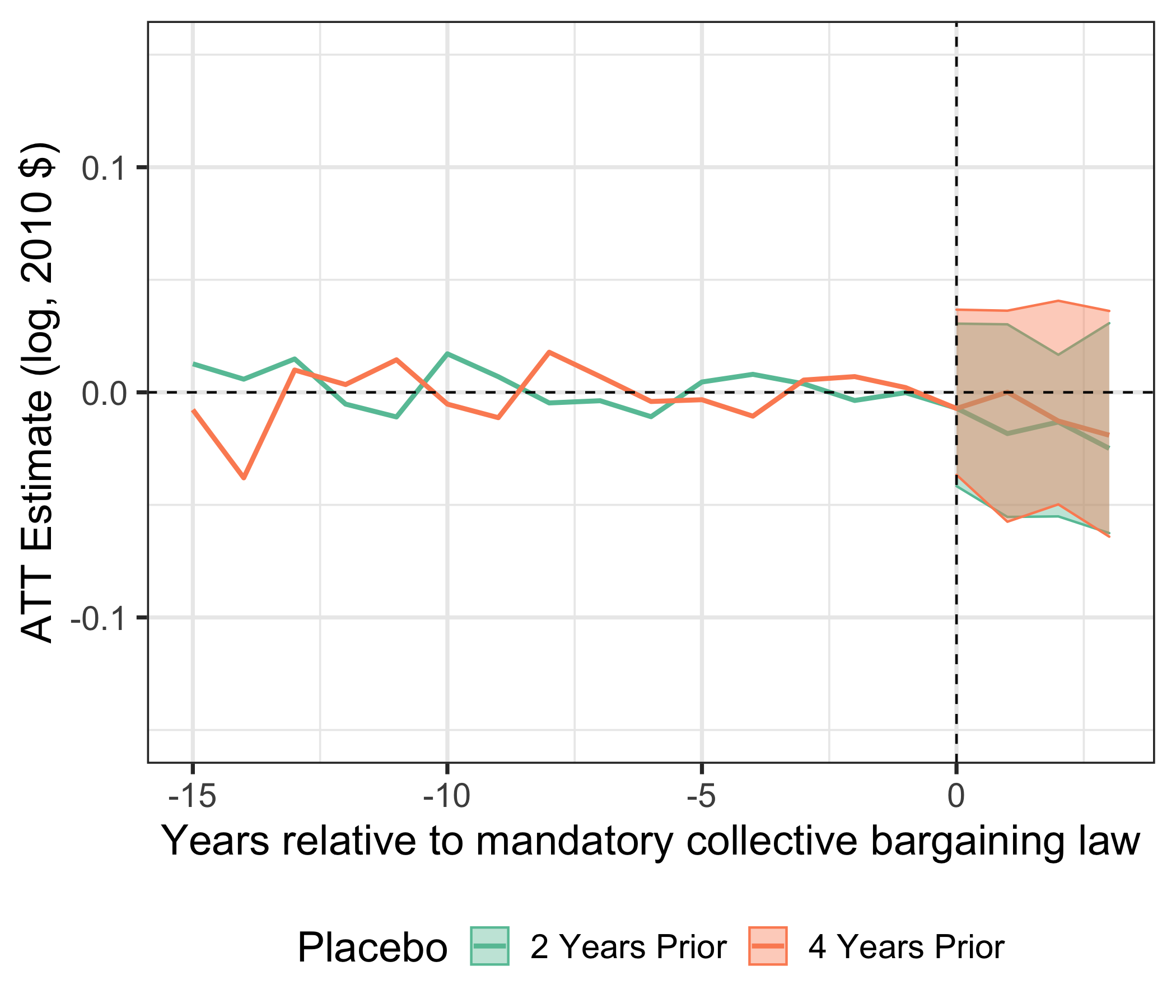}
      }
      \caption{Placebo estimates}
        \label{fig:time_placebos}
        \end{subfigure} 
    \caption{Estimates of the ATT on  per-pupil current expenditures (log, 2010 \$) and placebo estimates re-indexing treatment time to two and four years before the true treatment time. The placebo effects are very close to zero and are indistinguishable from zero at this level of precision.}
    \label{fig:cov_plot}
  \end{figure}

We can assess the strength of evidence by conducting robustness and placebo checks. First, following \citet{Abadie2015}, we begin by assessing out-of-sample validity via \emph{in time placebo checks}. 
These checks hold out some pre-treatment time periods by re-indexing treatment time to be earlier (i.e. setting $T^\prime_j = T_j - x$ for some $x$), then estimate placebo effects for the held-out pre-intervention time periods. Figure \ref{fig:time_placebos} shows the placebo estimates for the intercept-shifted partially pooled SCM estimator with covariates using a placebo treatment time two and four periods before the true treatment time. Both estimators achieve excellent pre-treatment fit and estimate placebo effects that are indistinguishable from zero.

Another important check that we recommend in practice is to gauge the sensitivity of the ATT estimates to the particular choice of pooling parameter $\nu$.
Figure \ref{fig:nu_sensitivity} shows the overall ATT estimates varying $\nu$ from separate SCM $\nu = 0$ to pooled SCM $\nu = 1$.
No choice of $\nu$ substantively changes the conclusions, and each rules out large positive effects.
Finally, we consider the result of trimming states with poor pre-treatment fit, following common practice in the matching and SCM literatures.
Figure \ref{fig:trimmed_scm} shows the overall ATT estimates when removing an increasing number of treated units with poor fits, in order of decreasing unit-level fit.
Overall, omitting the worst-fit states decreases the magnitude of the estimated effect, and increases the variability of the estimate. However, all estimates still rule out large positive effects.

  \begin{figure}[tb]
    \centering
    \begin{subfigure}[t]{0.45\textwidth}
      {\centering \includegraphics[width=\maxwidth]{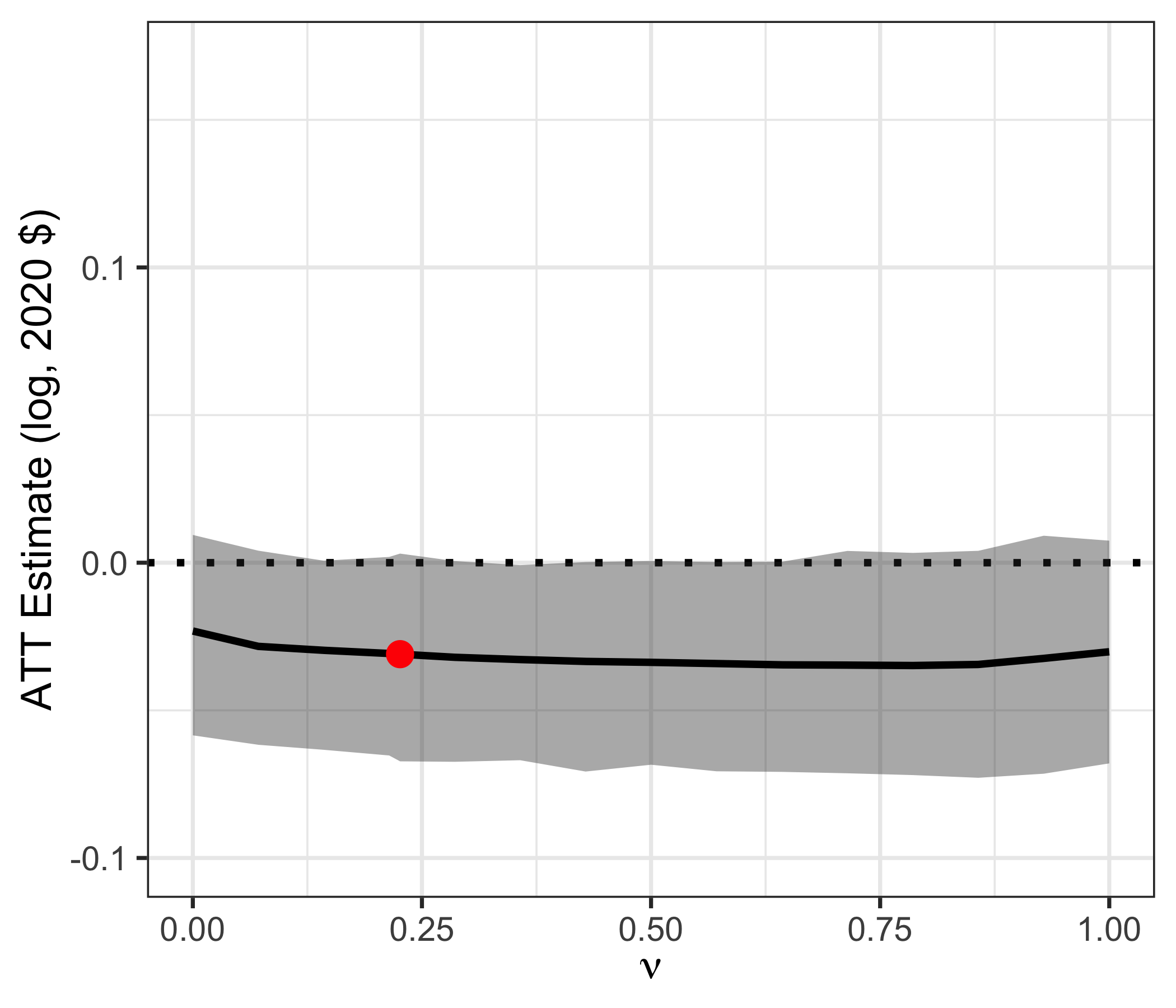} 
      }
      \caption{\label{fig:nu_sensitivity}Varying $\nu$ from 0 to 1.}
    \end{subfigure}\quad
      \begin{subfigure}[t]{0.45\textwidth}  
      {\centering \includegraphics[width=\maxwidth]{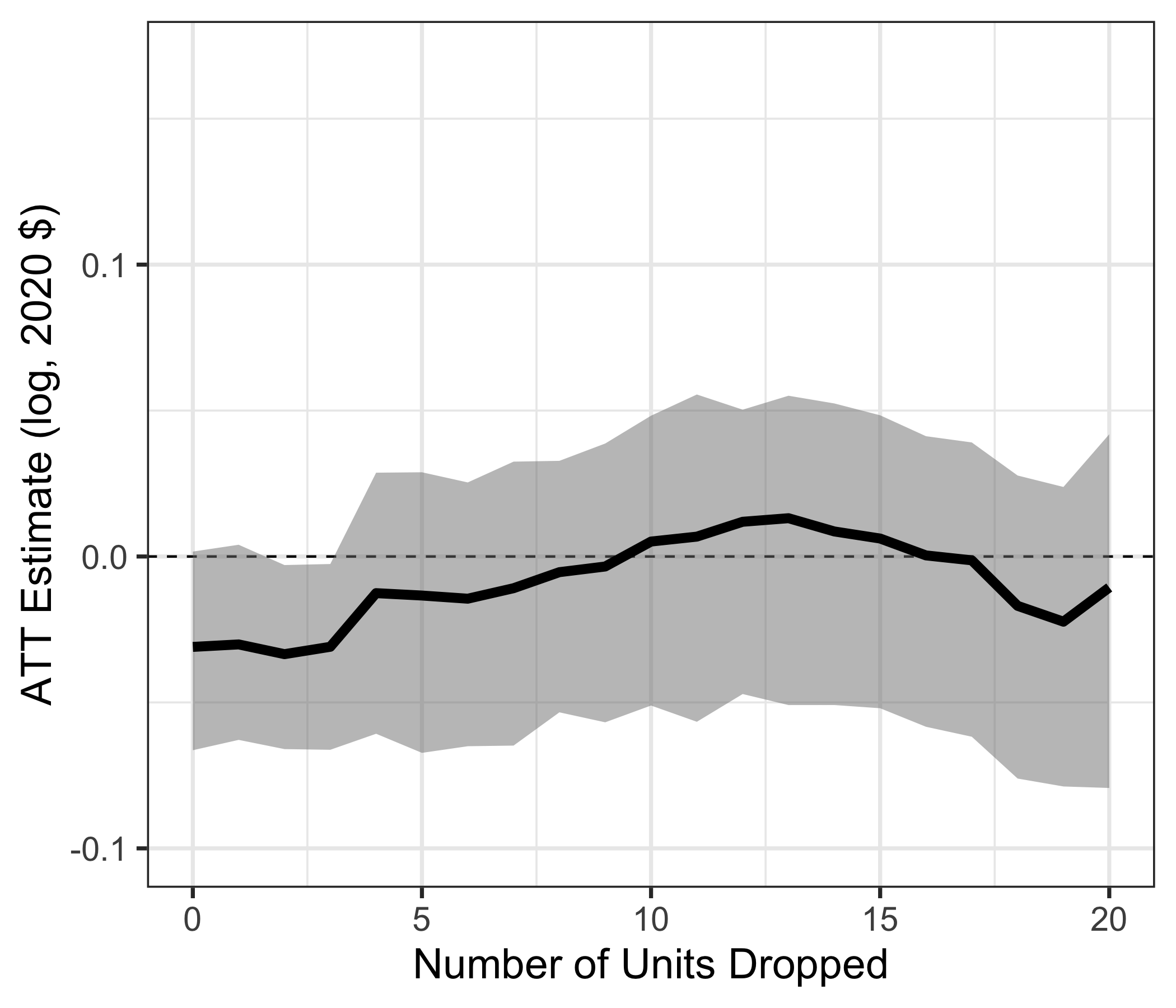} 
      }
      \caption{\label{fig:trimmed_scm}Dropping 1 to 20 treated units according to their worst fit.}
        \end{subfigure}

      \caption{(a) $\widehat{\text{ATT}}$ and approximate 95\% confidence intervals as $\nu$ varies between 0 and 1, $\hat{\nu}$ highlighted. (b) Estimates are not especially sensitivity to dropping an increasing number of units (ranked by pre-treatment imbalance), although the uncertainty intervals are wider with fewer units in the analysis.}
      \label{fig:trimmed_scm_atts}
    \end{figure}

An important feature of SCM-based methods over model-based methods is that we can directly inspect the weights, and that these weights are non-negative and sum to one. Appendix Figures \ref{fig:scm_weights} and \ref{fig:did_scm_weights} show the state-specific weights over donor states for each treated unit for partially pooled SCM without an intercept and with both an intercept and auxiliary covariates, respectively.
Without the intercept, both Illinois and Wyoming are consistently important donor states. Both states had relatively high levels of per-pupil expenditures throughout the study period and several synthetic controls place nearly all of the weight on these two states in order to match the level. However, after removing pre-treatment averages via an intercept, the weights are much more evenly distributed across the donor pool, suggesting that estimates are not overly reliant on a single control unit.

\section{Discussion}
\label{sec:discussion}

In this paper, we develop a new framework for estimating the impact of a treatment adopted gradually by units over time. 
In our motivating example, 33 states have enacted laws mandating school districts to bargain with teachers unions \citep{paglayan2019public}, and we seek to estimate the effects of these laws on educational expenditures.
To do so, we adapt SCM to the staggered adoption setting. We argue that current practice of estimating separate SCM weights for each treated unit is unlikely to yield good results, but also that fully pooled SCM may over-correct; our preferred approach, partially pooled SCM, finds weights that balance both state-specific and overall pre-treatment fit. 
We then extend this basic approach to incorporate an intercept shift as well as auxiliary covariates. 
We apply this approach to the teacher bargaining example and, consistent with recent analyses, find weakly negative estimates on student expenditures.

We briefly note some directions for future work. 
First, we could extend these ideas to other settings with multiple treated units, such as where treatment can ``shut off'' for some units \citep{imai2019twoway}, or where all units are eventually treated \citep{athey2018design}. This would likely require additional assumptions.
We could similarly incorporate other structure from our application. 
For example, in staggered adoption settings where multiple units adopt treatment at the same time, we could add a layer in the hierarchy and more closely pool units treated at the same time while still partially pooling different treatment cohorts. See Appendix \ref{sec:time_cohorts}.

Second, many SCM analyses explore multiple outcomes. As in other SCM studies, we treat each outcome separately, choosing different synthetic control weights for each. In many settings, however, lagged values from one outcome may predict future values of another, suggesting that balancing multiple outcome variables would be useful. This seems especially important in settings like ours with relatively few units.

Finally, we could adapt recent proposals for bias correction and other ``doubly robust'' estimators to this setting, which will be important for both estimation and inference \citep{BenMichael_2018_AugSCM, Abadie_LHour, Arkhangelsky2018}.
Existing approaches have largely been limited to the case with a single treated unit or, if multiple units are treated, to a single adoption time. More complex models are possible and may be desirable in the staggered adoption setting. For example, \citet{fesler2019promise} apply the Ridge Augmented SCM proposal in \citet{BenMichael_2018_AugSCM} to a staggered adoption setting, modeling each treated unit separately. Partial pooling may be helpful here.
In another direction, we might consider an outcome model that incorporates the time weights used in \citet{Arkhangelsky2018}. 
We anticipate that, unlike in the simple case with unit fixed effects, these augmented approaches likely require more elaborate shrinkage estimation, such as via matrix penalties.

\clearpage
\singlespacing
\bibliographystyle{chicago}
\bibliography{citations}

 
\clearpage

\appendix
\renewcommand\thefigure{\thesection.\arabic{figure}}
\renewcommand\thetable{\thesection.\arabic{table}}
\renewcommand\thetheorem{A.\arabic{theorem}}
\renewcommand\thecorollary{A.\arabic{corollary}}
\renewcommand\thelemma{A.\arabic{lemma}}
\renewcommand\theproposition{A.\arabic{proposition}}
\renewcommand\theequation{A.\arabic{equation}}
\renewcommand\theassumption{A.\arabic{assumption}}
\setcounter{figure}{0}
\setcounter{assumption}{0}
\setcounter{theorem}{0}

\section{Additional theoretical results}
\subsection{Further discussion of inference}
\label{sec:asymp_normal}

We now continue the discussion of inference from the main text in Section \ref{sec:inference}.
Our goal here is to discuss the conditions under which the proposed estimator is asymptotically Normal. Since asymptotic theory is not the focus of our paper, we leave for future work a rigorous derivation of the validity of the wild bootstrap procedure, in particular, adapting the proof of the main theorem in \citet{Otsu2017} and showing that the additional conditions in that proof are satisfied with our proposed procedure.

In order to discuss inferential procedures for partially pooled SCM with an intercept shift, we will consider a generalization of parallel trends. For each time period $g$, we assume that the expected differences between post-$g$ and pre-$g$ outcomes do not depend on whether unit $i$ is treated at time $g$, conditional on auxiliary covariates $i$ and the vector of pre-$g$ residuals $\dot{Y}_i^g \equiv \left(Y_{ig-L}, \ldots, Y_{ig-1}\right) - \frac{1}{L}\sum_{\ell=1}^L Y_{ig-\ell}$.

\begin{assumption}[Conditional parallel trends] 
  \label{a:condl_trends}
  With $L < T_1$, for all $k >0$ and $\ell \geq 1$
    \[
      \E[Y_{ig+k}(\infty) - Y_{ig-\ell}(\infty) \mid T_i = g, \dot{Y}_i^g, X_i] = \E[Y_{ig+k}(\infty) - Y_{ig-\ell}(\infty) \mid \dot{Y}_i^g, X_i] \equiv m_{gk\ell}( \dot{Y}_i^g, X_i)
    \]
\end{assumption}
Assumption \ref{a:condl_trends} is a generalization of the conditional parallel trends assumption in \citet{abadie2005semiparametric} to the staggered adoption setting, including the pre-treatment residuals $\dot{Y}_i^g$. 
It loosens the usual parallel trends assumption by allowing trends to differ depending on the auxiliary covariates and the deviation of lagged outcomes from their baseline value. 
Thus, we are essentially conditioning on pre-treatment ``dynamics,'' rather than pre-treatment levels. For instance, even if two states have very different levels of student expenditures, under conditional parallel trends we can compare them so long as they have similar pre-treatment trends and shocks.
See \citet{hazlett2018trajectory} and \citet{Callaway2018} for related conditional parallel trends assumptions. In addition, we will assume that the conditional expectation of the post- and pre-$g$ differences is linear.
\begin{assumption}
    \label{a:linear}
    \[
      m_{gk\ell}(\dot{Y}_i^g, X_i) = \beta_{gk\ell}^Y \cdot \dot{Y}_i^g + \beta_{gk\ell}^X \cdot X_i
    \]
\end{assumption}
We make two further assumptions that allow for asymptotic normality as the number of units grows while the number of lags $L$ stays fixed. First, we assume that the synthetic controls have perfect fit when averaged within time-cohorts; second, we assume that the sum of the squared weights is bounded.
\begin{assumption}[Exact balance within treatment cohorts and bounded weights]
  \label{a:exact_cohorts}
  Assume that
  \[
    \frac{1}{n_g}\sum_{T_i = g} \dot{Y}_i^g = \frac{1}{n_g}\sum_{i=1}^N\sum_{T_j = g}\hat{\gamma}_{ij} \dot{Y}_i^g  \text{ and } \frac{1}{n_g}\sum_{T_i = g} X_i = \frac{1}{n_g}\sum_{i=1}^N\sum_{T_j = g}\hat{\gamma}_{ij} X_i,
  \]
  for all $g = T_1,\ldots,T_J$. Furthermore, $\|\hat{\gamma}_j\|_2 \leq \frac{C}{\sqrt{N_0}}$ for all $j=1,\ldots,J$ and some constant $C$.
\end{assumption}
\noindent Note that by transforming from the penalized optimization problem \eqref{eq:stag_avg_relative_scm_primal_intercept} to the constrained form, there is a choice of $\lambda$ that guarantees that the the constraint on the weights are satisfied, if there exists a feasible solution.
Finally, we make two assumptions on the noise terms $\varepsilon_{igk} \equiv Y_{ig+k}(\infty) - \frac{1}{L}\sum_{\ell=1}^{L}Y_{ig-\ell}(\infty) - \frac{1}{L}\sum_{\ell=1}^{L}m_{k\ell}(g, \dot{Y}_i^g, X_i)$. First, we assume that they are independent across units; second, we assume that they are sufficiently regular so that their average satisfies a central limit theorem.

\begin{assumption}
  \label{a:lyapunov}
  $\varepsilon_{igk}$ are independent across units $i=1,\ldots,N$, and
  for some $\delta > 0$, the $2 + \delta$\super{th} moment exists, $\E\left[\left|\varepsilon_{igk}\right|^{2 + \delta}\right] < \infty$, and furthermore
  \[
    \lim_{N \to \infty}\frac{\sum_{T_i \neq \infty} \E\left[\left|\varepsilon_{iT_ik}\right|^{2 + \delta}\right]}{\left(\sum_{T_i \neq \infty} \E\left[\varepsilon_{iT_ik}^{2}\right]\right)^{1 + \frac{\delta}{2}}} = 0.
  \]
\end{assumption}

Under these assumptions, the estimate of the effect $k$ periods after treatment, $\widehat{\text{ATT}}_k$, will be asymptotically normal as $N$ grows with a fixed number of lags $L$, and where the number of control units $N_0$ grows more quickly than the number of treated units $J$.
\begin{theorem}
    \label{thm:asymp_normal}
   Assume that $\frac{J}{N_0} \to 0$ as both $J, N_0 \to \infty$, with $L$ fixed. Under Assumptions \ref{a:condl_trends}, \ref{a:linear}, \ref{a:exact_cohorts}, and \ref{a:lyapunov}
  \[
    \sqrt{J}\left(\widehat{\text{ATT}}_k - \text{ATT}\right) = \frac{1}{\sqrt{J}} \sum_{T_i \neq \infty} \varepsilon_{iT_j + k} + o_p(1).
  \]
  Furthermore, $
    \frac{\widehat{\text{ATT}}_k - \text{ATT}}{\frac{1}{J}\sum_{T_i \neq \infty}\E\left[\varepsilon_{iT_ik}^{2}\right]} \overset{d}{\to} N(0, 1)$.
\end{theorem}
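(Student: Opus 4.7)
The plan is to decompose $\widehat{\text{ATT}}_k - \text{ATT}_k$ into a systematic bias term plus a noise term, show the bias vanishes under the within-cohort exact-balance assumption, and then analyze the noise term as a sum of an asymptotically normal treated-unit contribution plus a negligible control-unit remainder.

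First, starting from the representation in Equation~\eqref{eq:tau_att_aug}, I would substitute $Y_{iT_j+k} = Y_{iT_j+k}(\infty)$ for each donor (valid since $i \in \mathcal{D}_{jK}$ implies $T_i > T_j + k$) and use the identity implied by Assumption~\ref{a:condl_trends},
\[
Y_{ig+k}(\infty) - \tfrac{1}{L}\sum_{\ell=1}^L Y_{ig-\ell}(\infty) = \tfrac{1}{L}\sum_{\ell=1}^L m_{gk\ell}(\dot{Y}_i^g, X_i) + \varepsilon_{igk},
\]
which holds for both treated and donor units because the conditional expectation does not depend on $T_i$. This splits the error into
\[
\widehat{\text{ATT}}_k - \text{ATT}_k = \underbrace{\frac{1}{J}\sum_{j=1}^J \tfrac{1}{L}\sum_{\ell=1}^L\!\left[m_{T_jk\ell}(\dot{Y}_j^{T_j}, X_j) - \sum_{i=1}^N \hat{\gamma}_{ij}\, m_{T_jk\ell}(\dot{Y}_i^{T_j}, X_i)\right]}_{\text{bias}} + \underbrace{\frac{1}{J}\sum_{j=1}^J\!\left[\varepsilon_{jT_jk} - \sum_{i=1}^N \hat{\gamma}_{ij}\,\varepsilon_{iT_jk}\right]}_{\text{noise}}.
\]

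Next, the linearity of $m_{gk\ell}$ from Assumption~\ref{a:linear} lets me group the bias term by treatment cohort $g$: within each cohort, the exact balance conditions in Assumption~\ref{a:exact_cohorts} on both $\dot{Y}_i^g$ and $X_i$ force the bias contribution to be identically zero. After scaling by $\sqrt{J}$, I am left with
\[
\sqrt{J}\left(\widehat{\text{ATT}}_k - \text{ATT}_k\right) = \frac{1}{\sqrt{J}}\sum_{T_i \neq \infty}\varepsilon_{iT_ik} - \frac{1}{\sqrt{J}}\sum_{j=1}^J \sum_{i=1}^N \hat{\gamma}_{ij}\,\varepsilon_{iT_jk}.
\]
To show the control-unit remainder is $o_p(1)$, I would condition on the pre-treatment $\sigma$-field so that the weights are deterministic and the $\varepsilon_{iT_jk}$'s remain mean-zero; cross-unit independence from Assumption~\ref{a:lyapunov} then makes the conditional variance factor across $i$, giving a bound of the form $\tfrac{\sigma^2}{J}\sum_i \bigl(\sum_j \hat{\gamma}_{ij}\bigr)^2$, which Cauchy--Schwarz combined with $\|\hat{\gamma}_j\|_2^2 \le C^2/N_0$ from Assumption~\ref{a:exact_cohorts} reduces to $O(J/N_0) = o(1)$. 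For the treated-unit term, the summands are independent across $i$, and Assumption~\ref{a:lyapunov} is precisely the Lyapunov moment condition required by the CLT for triangular arrays, yielding the leading-order representation; a Slutsky argument then delivers the second assertion.

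The main obstacle is the data-dependence of the weights $\hat{\gamma}_{ij}$ together with the fact that a single control unit may appear in multiple treated units' synthetic controls with potentially correlated noise terms $\varepsilon_{iT_jk}$ and $\varepsilon_{iT_{j'}k}$ (across-time correlations for a common $i$ are not ruled out by Assumption~\ref{a:lyapunov}). Conditioning on pre-treatment information resolves the weight-randomness issue, while the across-time correlations force the Cauchy--Schwarz bound $(\sum_j \hat{\gamma}_{ij})^2 \le J \sum_j \hat{\gamma}_{ij}^2$, and it is precisely the loss of this factor of $J$ that makes the asymptotic regime $J/N_0 \to 0$---rather than merely $J/N_0 \to c$---indispensable.
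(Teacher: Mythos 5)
Your proposal is correct and follows essentially the same route as the paper's proof: the same linearity-plus-exact-cohort-balance argument eliminates the bias term, and the same conditional-variance/Chebyshev bound (your $(\sum_j \hat{\gamma}_{ij})^2 \le J\sum_j \hat{\gamma}_{ij}^2$ step is algebraically equivalent to the paper's Cauchy--Schwarz over $i$ and $j,j'$) shows the weighted control noise is $o_p(J^{-1/2})$ under $J/N_0 \to 0$, with the Lyapunov CLT and Slutsky finishing the argument. Your closing remark about within-unit across-time correlations being the reason the factor of $J$ is lost, and hence why $J/N_0 \to 0$ is needed, is a accurate reading of exactly what drives the paper's bound.
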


\paragraph{Jackknife.} Finally, we briefly discuss constructing confidence intervals via the
leave-one-unit-out jackknife approach, which proceeds as follows. Fix hyperparameter values $\nu, \xi$, and $\lambda$; for each unit $i=1,\ldots,N$: drop unit $i$ and re-fit the intercepts and the weights via Equation \eqref{eq:primal_with_covs} to obtain $\hat{\alpha}^{(-i)}$ and $\widehat{\Gamma}^{(-i)}$ and get the synthetic control estimates $\hat{Y}_{j T_j +k}^{(-i)}$. Then compute the leave-one-unit-out estimate  $\widehat{\text{ATT}}_k^{(-i)} = \frac{1}{J^{(-i)}}\sum_{j=1}^J\bbone_{j \neq i} \left\{Y_{jT_j+k} - \hat{Y}_{jT_j+k}^{(-i)}\right\}$, where $J^{(-i)} \equiv J - \bbone_{T_i < \infty}$. The jackknife estimate of the standard error is then:
\begin{equation}
  \label{eq:jackknife_se}
  \hat{V}_k = \frac{n-1}{n}\sum_{i=1}^n\left(\widehat{\text{ATT}}_k^{(-i)} - \frac{1}{n}\sum_{j=1}^n \widehat{\text{ATT}}_k^{(-j)}\right)^2,
\end{equation}
with an approximate 95\% confidence interval  $\widehat{\text{ATT}}_k \pm 2 \hat{V}_k$.
We include Monte Carlo estimates of the coverage under our simulation setup in Figures \ref{fig:coverage_both} and \ref{fig:coverage_scm}.

\subsection{Fully pooling within time cohorts}
\label{sec:time_cohorts}

As we discuss in Section \ref{sec:error_bounds}, if all units are treated at the same time, $T_1 = \cdots = T_J$, our error bounds depend only on the pooled imbalance and do not include the unit-level imbalance. Thus, if units are treated in cohorts (i.e., several units treated at the same time), then the bounds suggest modeling 
variation in pre-treatment outcomes \emph{between} treatment cohorts 
separately from 
the pooled average.
This leads to a natural modification of our partially pooled estimator: We can fully pool within cohorts by applying the estimator to treatment cohorts rather than individual treated units, optimizing a weighted average of the overall imbalance and the average cohort-level imbalance.
Concretely, let $G$ be the number of distinct treatment times, 
which we denote $T(g)$, $g=1,\ldots,G$, and $n_g = \sum_{i=1}^N \bbone\{T_i = T(g)\}$ is the number of units treated in time $T(g)$.
We can modify the optimization problem to find $G$ sets of weights, where the individual objective for treatment cohort $g$ is
\[
  q_g(\gamma_g)^{\text{cohort}} = \sqrt{\frac{1}{L_g}\sum_{\ell=1}^{L_g} \left(\sum_{i=1}^N \bbone\{T_j = T(g)\} Y_{i T(g) -\ell} - \sum_{i=1}^N \gamma_{ig} Y_{iT(g) - \ell}\right)^2}.
\]
As before, we will restrict the set of donor units for cohort $g$ to those not yet treated $K$ periods after $T(g)$, $\calD(g) \equiv \{i : T_i > T(g) + K\}$, and we will restrict the weights to so that $\gamma_g \in \Delta^\scm(g)$ satisfies $\gamma_{ig} \geq 0$ for all $i$, $\sum_i \gamma_{ig} = n_g$, and $\gamma_{ig} = 0$ if $i \not \in D(g)$. We then similarly define the separate and pooled balance measures:
\[
    q^{\text{sep cohort}}(\Gamma) = \sqrt{\frac{1}{G}\sum_{g=1}^G \frac{1}{L_g}\sum_{\ell=1}^{L_g} \left(\sum_{i=1}^N \bbone\{T_j = T(g)\} Y_{i T(g) -\ell} - \sum_{i=1}^N \gamma_{ig} Y_{iT(g) - \ell}\right)^2},
\]
and 
\[
  q^{\text{pool cohort}}(\Gamma) = \sqrt{\frac{1}{\max_g L_g}\sum_{\ell=1}^{\max_g L_g} \left(\frac{1}{G}\sum_{g=1}^G \sum_{i=1}^N \bbone\{T_j = T(g)\} Y_{i T(g) -\ell} - \sum_{i=1}^N \gamma_{ig} Y_{iT(g) - \ell}\right)^2}
\]
We can then use these cohort-level measures of imbalance in the partially pooled SCM optimization problem \eqref{eq:stag_avg_relative_scm_primal}, and similarly can include an intercept as in \eqref{eq:stag_avg_relative_scm_primal_intercept}.
More generally, if we do not want to fully pool within clusters, we can include three (or more) imbalance terms in our objective function to capture unit-level, pooled, and intermediate cluster-level imbalance.

\subsection{Partially pooled SCM: Dual shrinkage}
\label{sec:sim_scm_dual}

We now inspect the Lagrangian dual problem to the partially pooled SCM problem in Equation \eqref{eq:stag_avg_relative_scm_primal}, showing that the optimization problem partially pools a set of unit-specific dual variables toward global dual variables. We focus on balancing the first $L_j = L \leq T_1-1$ lagged outcomes, which are observed for each treated unit.

For each treated unit $j$, the sum-to-one constraint induces a Lagrange multiplier $\alpha_j \in \R$, and
the state-level balance measure induces a set of Lagrange multipliers $\beta_j \in \R^{L}$, with elements $\beta_{\ell j}$. We combine these dual parameters into a vector $\alpha=[\alpha_1,\ldots,\alpha_J]\in\R^J$ and a matrix $\beta = [\beta_1,\ldots,\beta_J] \in \R^{L \times J}$. In addition to the $J$ sets of Lagrange multipliers --- one for each treated unit --- the pooled balance measure in the partially pooled SCM problem Equation \eqref{eq:stag_avg_relative_scm_primal} induces a set of global Lagrange multipliers $\mu_\beta\in\R^{L}$. 
As we see in the following proposition, in the dual problem the parameters $\beta_1,\ldots,\beta_J$ are regularized toward this set of pooled Lagrange multipliers, $\mu_\beta$.

\begin{proposition}
\label{prop:combined_avg_scm_dual}

The Lagrangian dual to Equation \eqref{eq:stag_avg_relative_scm_primal} with un-normalized objevtices $q^\sep$ and $q^\pool$ with $L_j = L < T_1$ and $\lambda > 0$ is:
\begin{equation}
  \label{eq:combined_avg_scm_dual}
  \min_{\alpha, \mu_\beta, \beta} 
  \calL(\alpha, \beta)
  + \frac{\lambda L}{2} \left(\frac{1}{(1-\nu)J}\sum_{j=1}^J\|\beta_j-\mu_\beta\|_2^2 + \frac{1}{\nu}\|\mu_\beta\|_2^2\right),
\end{equation}
where the dual objective function is
\begin{equation}
  \label{eq:dual_objective}
  \calL(\alpha, \beta) \equiv \frac{1}{J}\sum_{j=1}^J\left[\sum_{i \in \calD_j}\left[\alpha_j + \sum_{\ell=1}^{L}\beta_{\ell j}Y_{i T_j-\ell}\right]_+^2 - 
  \left(\alpha_j + \sum_{\ell=1}^{L}\beta_{\ell j}Y_{j T_1-\ell}\right)\right],
\end{equation}
where $[x]_+ = \max\{0, x\}$.
For treated unit $j$, the synthetic control weight on unit $i$ is $\hat{\gamma}_{ij} = \left[\hat{\alpha}_j + \sum_{\ell=1}^{L}\hat{\beta}_{\ell j}Y_{j T_j-\ell}\right]_+$.
\end{proposition}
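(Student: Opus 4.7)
The plan is to derive the dual by a standard Lagrangian calculation after separating the unit-level and pooled imbalance terms via auxiliary variables. I would introduce $u_{\ell j}\equiv Y_{jT_j-\ell}-\sum_i\gamma_{ij}Y_{iT_j-\ell}$ for the unit-level pre-treatment gaps and $v_\ell\equiv \frac{1}{J}\sum_j u_{\ell j}$ for the pooled gap, which rewrites the primal as
\begin{equation*}
\min_{\Gamma,u,v}\;\;\tfrac{\nu}{L}\|v\|_2^2+\tfrac{1-\nu}{JL}\|u\|_F^2+\lambda\|\Gamma\|_F^2
\end{equation*}
subject to these two definitional equalities together with $\sum_i\gamma_{ij}=1$ and $\gamma_{ij}\geq 0$ (with $\gamma_{ij}\equiv 0$ for $i\notin\calD_j$). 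This is a convex quadratic program, and uniform weights on $\calD_j$ are strictly feasible, so Slater's condition holds and strong duality applies; the proof therefore reduces to computing the dual objective in closed form.

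Next I would attach multipliers $\beta_{\ell j}$ to each $u$-constraint, $\mu_{\beta,\ell}$ to each $v$-constraint, and $\alpha_j$ to each simplex sum, and then partially minimize the Lagrangian over $(v,u,\Gamma)$ in that order. The $v_\ell$-minimization is a one-dimensional unconstrained quadratic whose minimum contributes a term proportional to $-\|\mu_\beta\|_2^2/\nu$. The $u_{\ell j}$-minimization is analogous: after collecting the $u_{\ell j}$-coefficient as a linear combination of $\beta_{\ell j}$ and $\mu_{\beta,\ell}/J$, its minimum contributes a term proportional to $-\sum_j\|\beta_j-\mu_\beta/J\|_2^2/(1-\nu)$, which is already the partial-pooling shrinkage structure of the claim. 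Finally, the $\gamma_{ij}$-terms assemble into scalar quadratics $\lambda\gamma_{ij}^2-b_{ij}\gamma_{ij}$ with $b_{ij}\equiv\alpha_j+\sum_\ell\beta_{\ell j}Y_{iT_j-\ell}$; using the elementary identity $\min_{x\geq 0}\lambda x^2-bx=-[b]_+^2/(4\lambda)$, attained at $x^\ast=[b]_+/(2\lambda)$, yields both the $[\,\cdot\,]_+^2$ contribution in $\calL(\alpha,\beta)$ and the claimed weight-recovery formula (after absorbing the $1/(2\lambda)$ prefactor into rescaled multipliers).

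Collecting the constant-in-primal contributions left behind by the equality constraints produces the linear term $-\frac{1}{J}\sum_j\bigl(\alpha_j+\sum_\ell\beta_{\ell j}Y_{jT_j-\ell}\bigr)$ of $\calL(\alpha,\beta)$, and negating the resulting $\sup$-dual converts it into the stated $\min$ form. A final rescaling of $\alpha,\beta,\mu_\beta$ by appropriate common factors of $J$ and $2\lambda$ brings the shrinkage terms into the stated coefficients $\tfrac{\lambda L}{2}\bigl(\tfrac{1}{(1-\nu)J}\sum_j\|\beta_j-\mu_\beta\|_2^2+\tfrac{1}{\nu}\|\mu_\beta\|_2^2\bigr)$ and simultaneously clears the $1/(2\lambda)$ prefactor from $\hat{\gamma}_{ij}$ to leave the clean form in the statement. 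The only real obstacle is bookkeeping: ensuring that these rescalings are applied consistently across the $\calL$-term, the shrinkage penalty, and the weight-recovery map, since an inconsistent choice would alter the reported coefficients even though it leaves the underlying partial-pooling structure intact. Nothing beyond convexity of the primal and elementary scalar minimization is needed.
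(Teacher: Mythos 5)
Your proposal is correct in substance and uses essentially the same machinery as the paper's proof: introduce auxiliary gap variables, minimize the Lagrangian blockwise (quadratic conjugates for the gap blocks, the scalar identity $\min_{x\geq 0}\{\lambda x^{2}-bx\}=-[b]_{+}^{2}/(4\lambda)$ for the nonnegative weights), and invoke strong duality for the convex quadratic program. The one genuine difference is how you encode the pooled constraint: you set $v_{\ell}=\frac{1}{J}\sum_{j}u_{\ell j}$, so the pooled multiplier never enters the $\gamma$-block and the shrinkage first appears as $\beta_{j}$ toward $\mu_{\beta}/J$, to be repaired by a change of dual variables at the end. The paper instead writes \emph{both} the unit-level gaps $\calE_{j}$ and the pooled gap $\calE_{0}$ directly as linear functions of $\Gamma$ in \eqref{eq:hybrid_scm_proof}; the coefficient of $\gamma_{ij}$ then involves the sum $\zeta_{\ell j}+\mu_{\beta\ell}$ of the two multipliers, and the definition $\beta_{j}\equiv\mu_{\beta}+\zeta_{j}$ makes the variable in the weight map automatically the partially pooled one, with $\|\beta_{j}-\mu_{\beta}\|_{2}^{2}=\|\zeta_{j}\|_{2}^{2}$ and $\|\mu_{\beta}\|_{2}^{2}$ emerging without any reparametrization. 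The two routes differ only by an affine map of the dual variables, so yours is a legitimate alternative; the paper's choice buys a cleaner appearance of the partial-pooling structure mid-derivation, yours keeps the primal rewriting closer to the definitions of $q^{\sep}$ and $q^{\pool}$.

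One caveat on the step you yourself flag as bookkeeping: a single consistent rescaling of $(\alpha,\beta,\mu_{\beta})$ together with a positive rescaling of the dual objective reproduces the structure of \eqref{eq:combined_avg_scm_dual} — the $[\cdot]_{+}$ weight-recovery map and the relative weighting $\tfrac{1}{(1-\nu)J}$ versus $\tfrac{1}{\nu}$ of the shrinkage and pooled penalties — but it cannot literally match every displayed constant simultaneously (one is left with an overall factor of $J$ on the penalty block and a factor of $2$ on the $[\cdot]_{+}^{2}$ term relative to the statement). This is not a defect specific to your argument: carrying out the paper's own Lagrangian computation from \eqref{eq:hybrid_scm_proof} produces the same constants your route does, the stated coefficients reflecting conventions imported from the single-unit dual of the augmented-SCM paper. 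So treat the proposition's coefficients as holding up to such positive rescalings; the substantive claim — unit-level dual vectors partially pooled toward a global one, with $\nu$ governing the degree of pooling and weights recovered as positive parts of the dual linear predictor — does follow from your derivation.
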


\noindent Proposition \ref{prop:combined_avg_scm_dual} highlights that the estimator partially pools the individual synthetic controls to the pooled synthetic control \emph{in the dual parameter space}, with $\nu$ controlling the level of pooling. 
When $\nu = 0$ in the separate SCM problem, the parameters $\beta_1,\ldots\beta_J$ are shrunk towards zero rather than a set of global parameters. 
By contrast, when $\nu = 1$, $\beta_1,\ldots,\beta_J$ are constrained to be equal to $\mu_\beta$, fitting a single pooled synthetic control in the dual parameter space. By choosing $\nu \in (0,1)$, we move continuously between the two extremes of $J$ separate Lagrangian dual problems and a single dual problem, regularizing the individual $\beta_j$s toward the pooled $\mu_\beta$,
allowing for some limited differences between the $J$ dual parameters.

\clearpage
\section{Additional figures}
\subsection{Additional simulation results}
\label{sec:additional_sims}
\setcounter{figure}{0}

\begin{figure}
  \centering
  {\centering \includegraphics[width=\textwidth]{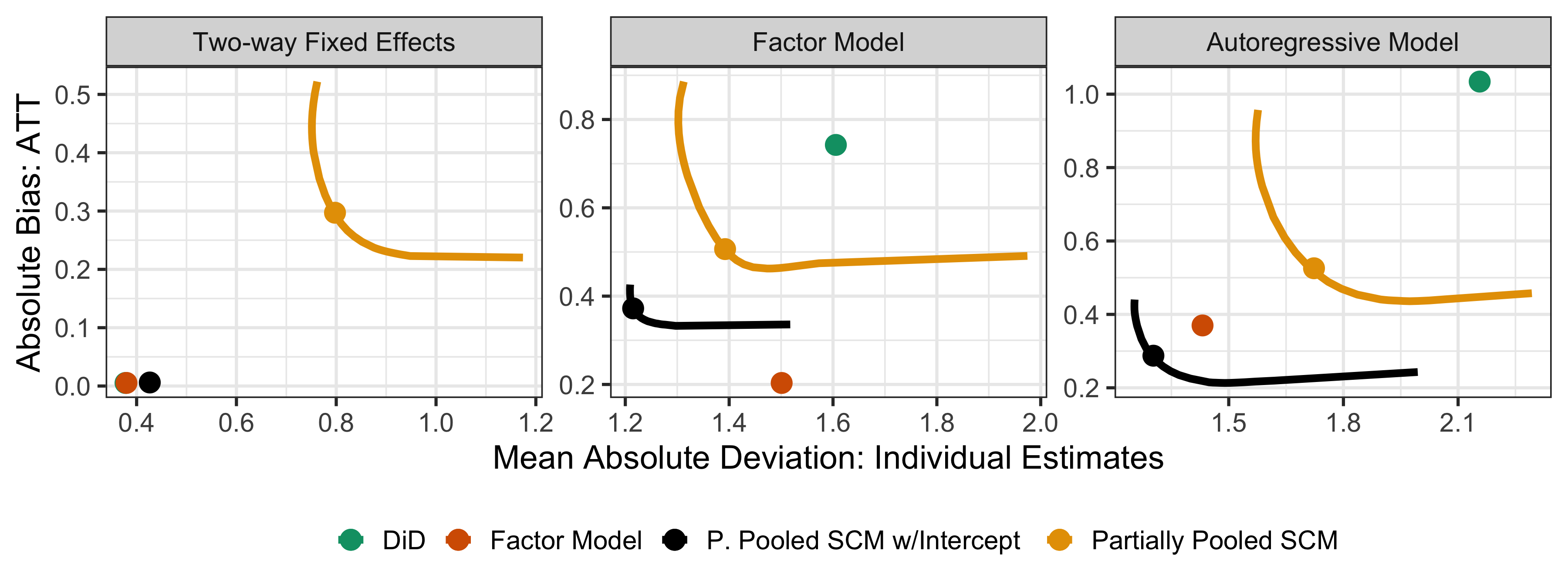}
  }
  \caption{\label{fig:sim_study_bias}Monte Carlo estimates of the bias for the overall ATT vs the MAD for the individual ATT estimates.} 
  \end{figure}

  \begin{figure}
    \centering
    {\centering \includegraphics[width=\textwidth]{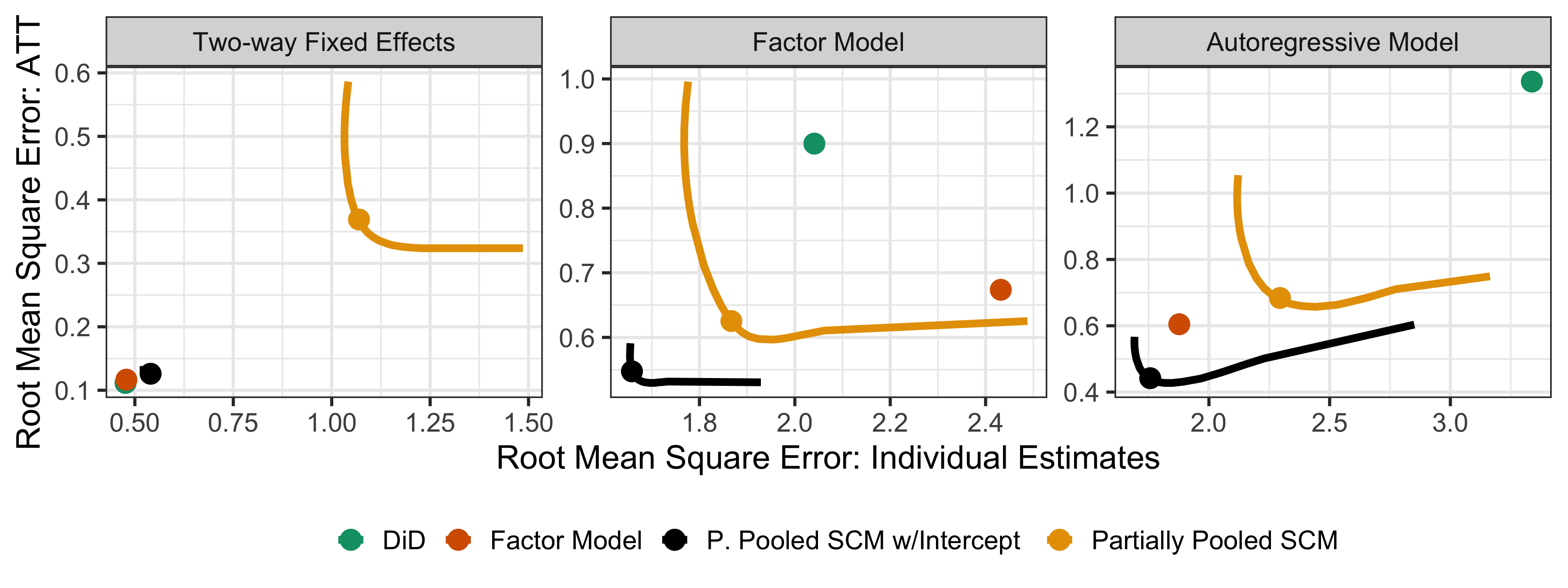}
    }
    \caption{\label{fig:sim_study_rmse}Monte Carlo estimates of the RMSE for the overall ATT vs the RMSE of the individual ATT estimates.} 
    \end{figure}

    \begin{figure}
      \centering
      {\centering \includegraphics[width=\textwidth]{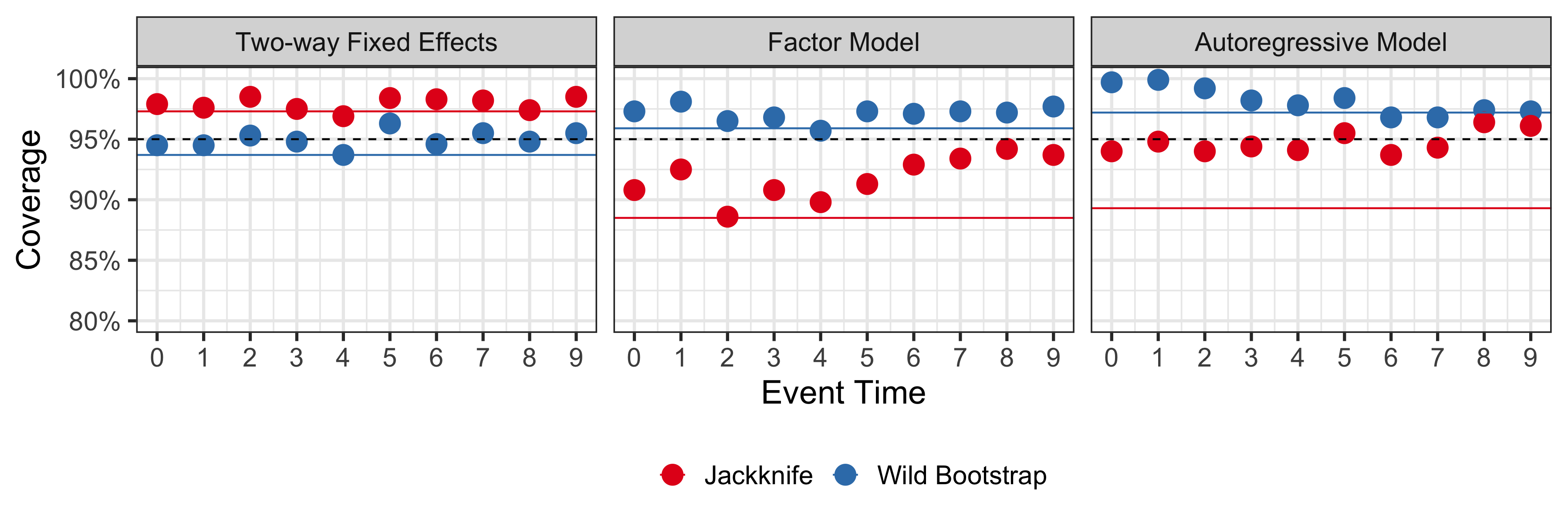}
      }
      \caption{\label{fig:coverage_both} Monte Carlo estimates of the coverage of approximate 95\% confidence intervals $k=0,\ldots,9$ periods after treatment using partially pooled SCM with an intercept. The solid line indicates the coverage for the overall ATT estimate averaged across all post-treatment periods.} 
      \end{figure}   
  
    \begin{figure}
      \centering
      {\centering \includegraphics[width=\textwidth]{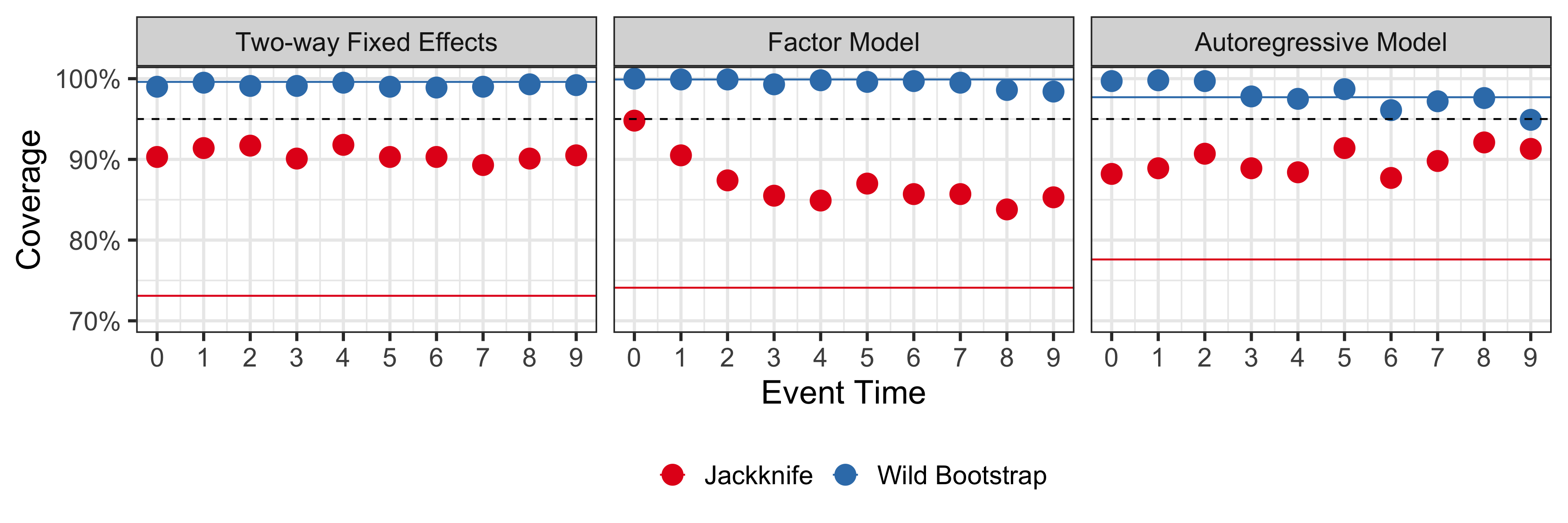}
      }
      \caption{\label{fig:coverage_scm} Monte Carlo estimates of the coverage of approximate 95\% confidence intervals $k=0,\ldots,9$ periods after treatment using partially pooled SCM \emph{without} an intercept. The solid line indicates the coverage for the overall ATT estimate  averaged across all post-treatment periods.} 
      \end{figure}

\clearpage
\subsection{Additional results for the mandatory collective bargaining application}
\label{sec:appendix_results}

\begin{figure}[tb]
  \centering
  {\includegraphics[width=0.6\maxwidth]{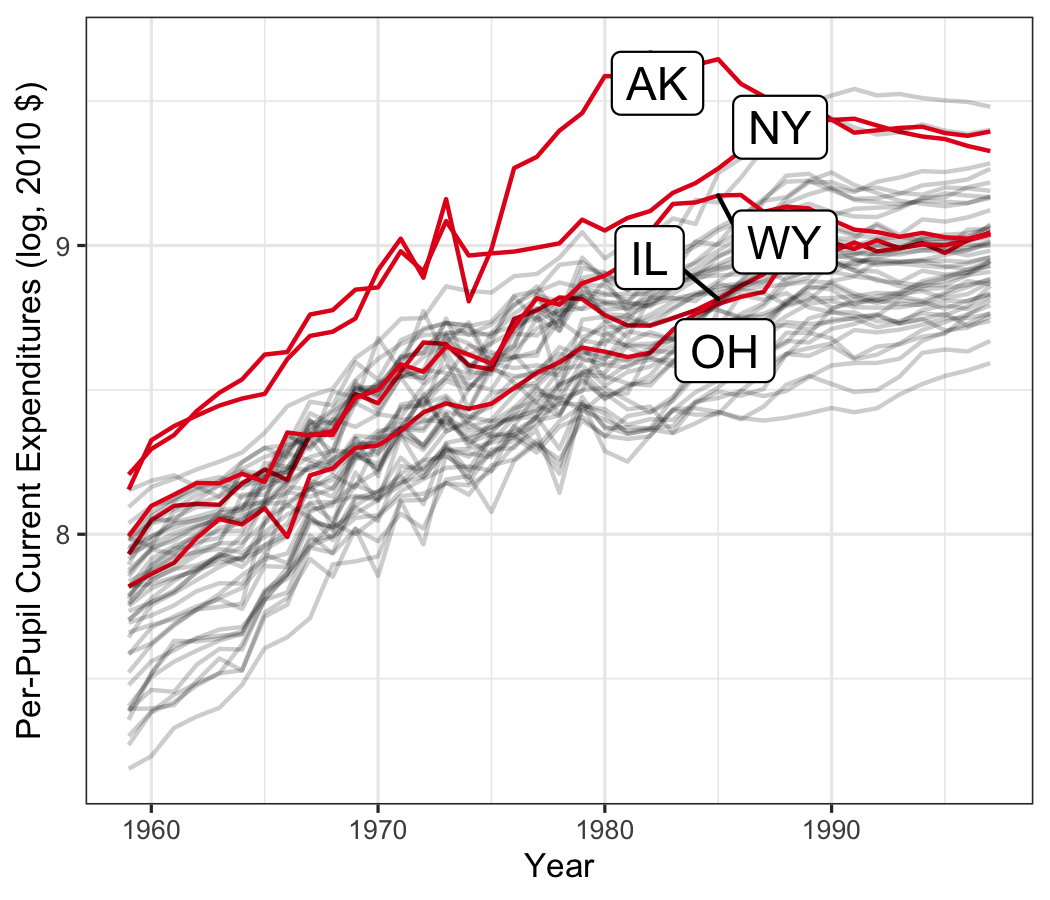}}
  \caption{Per-pupil expenditures for US states over the study period.}
    \label{fig:raw_data_highlight}
\end{figure}

\begin{figure}[tb]
  \centering
  {\includegraphics[width=0.6\maxwidth]{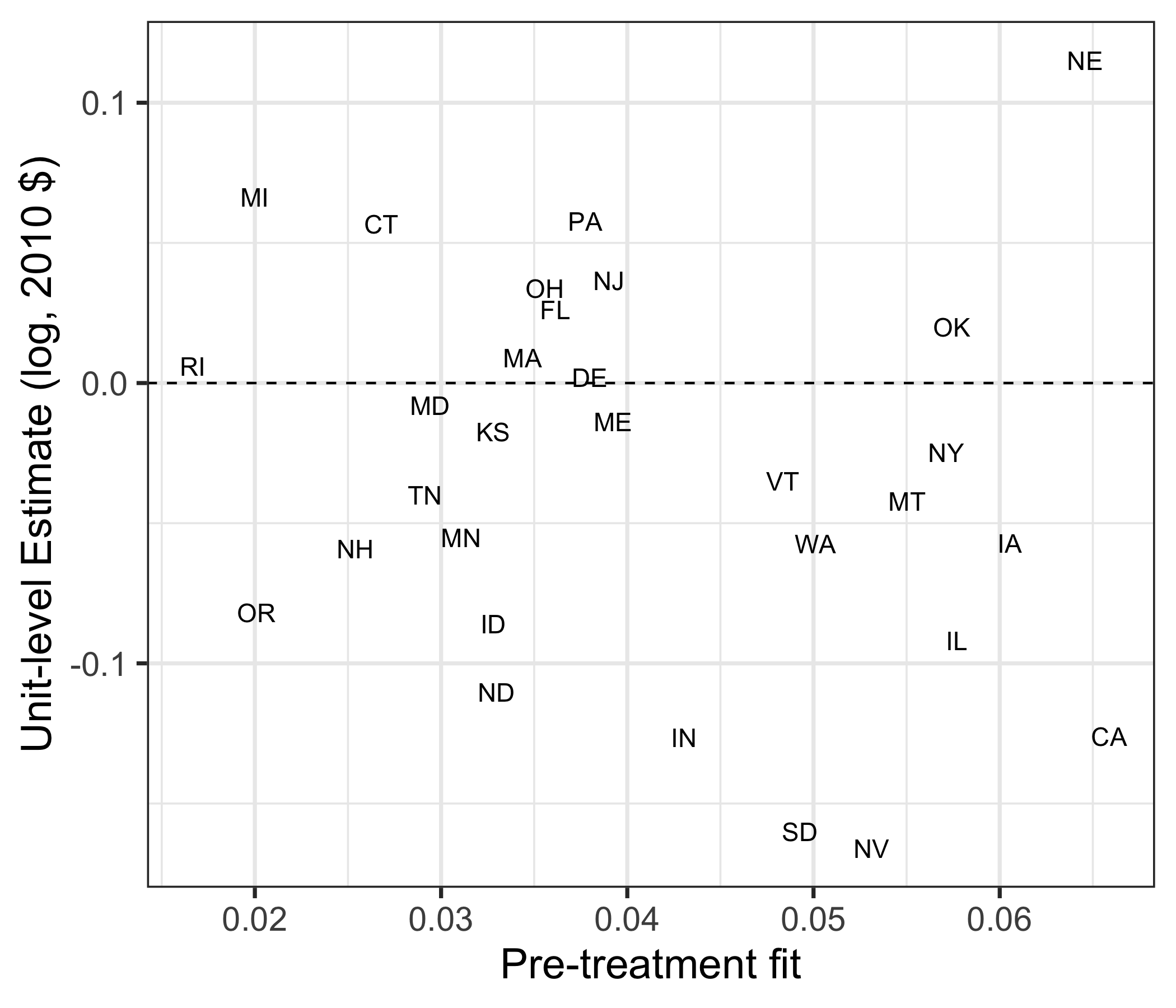}}
  \caption{Average post-treatment effect estimates $\frac{1}{K}\sum_{k=0}^K \hat{\tau}_{jk}$ for the treated states, plotted against the root-mean square pre-treatment fit $q_j(\hat{\gamma}_j)$.}
    \label{fig:cov_int_ppexp_unit_level}
\end{figure}

\begin{figure}[tb]
  \centering
  {\centering \includegraphics[width=0.5\maxwidth]{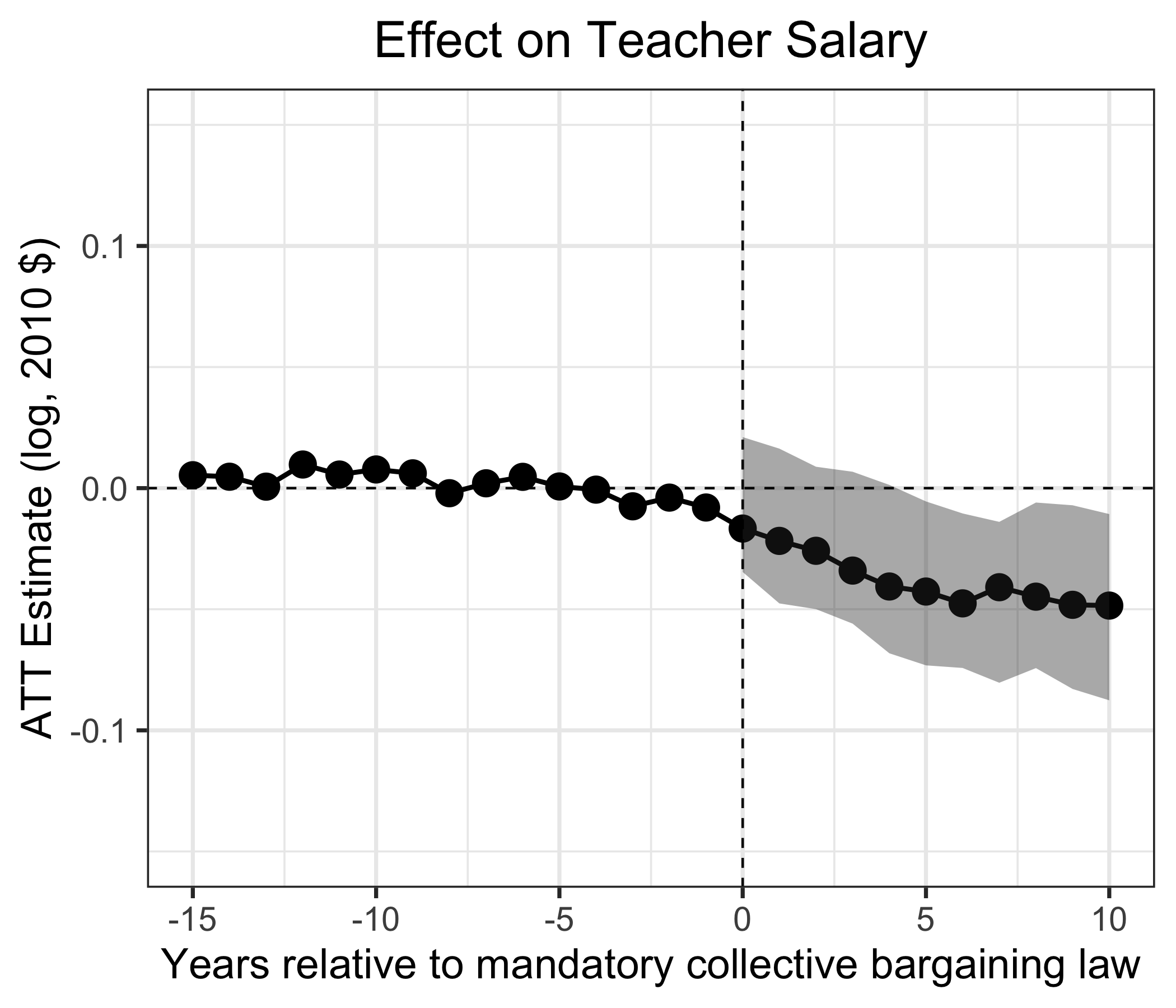} 
  }
  \caption{Partially-pooled SCM with intercept shifts and covariates ($\hat{\nu}=0.26$), estimates of the impact of mandatory collective bargaining laws on average teacher salary (log, 2010 \$).}
    \label{fig:results_teachsal}
  \end{figure}

\begin{figure}[tb]
  \centering
  {\centering \includegraphics[width=\maxwidth]{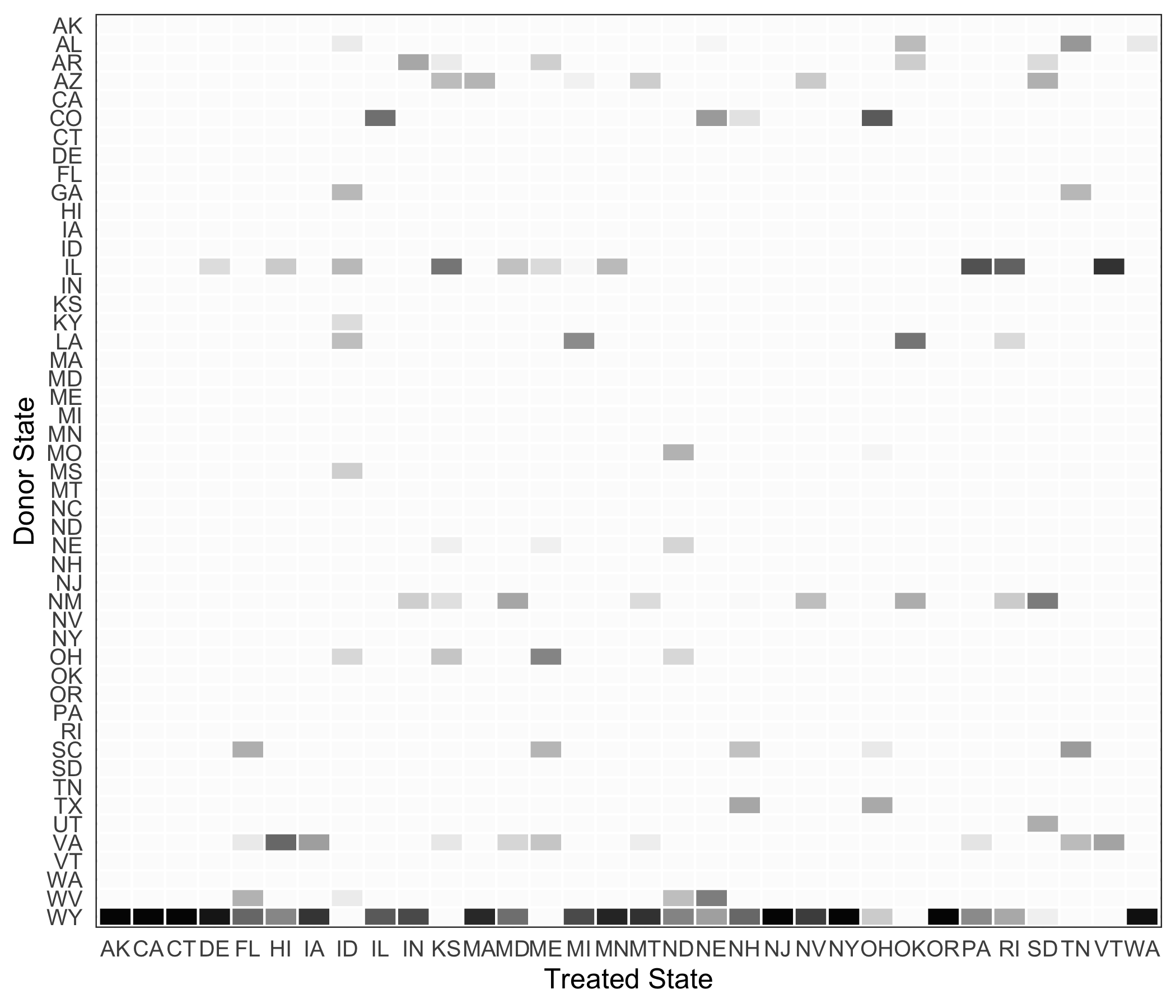} 
  }
  \caption{Partially pooled SCM weights. White cells indicate zero weight, black cells indicate a weight of 1.}
    \label{fig:scm_weights}
\end{figure}

\begin{figure}[tb]
  {\centering \includegraphics[width=\maxwidth]{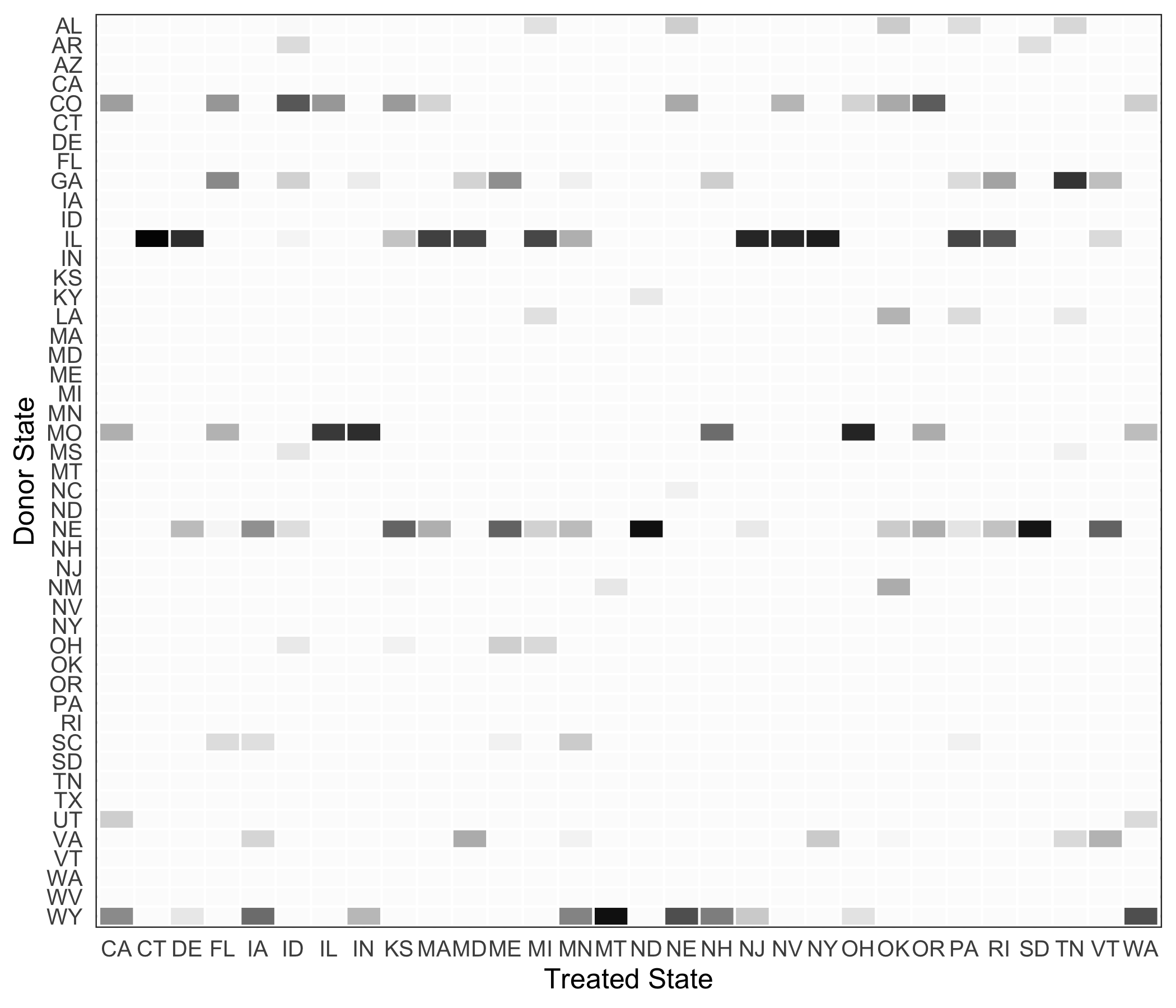} 
  }
  \caption{Partially pooled SCM weights when including an intercept. White cells indicate zero weight, black cells indicate a weight of 1.}
    \label{fig:did_scm_weights}
  \end{figure}


\clearpage
\section{Proofs}

\subsection{Error bounds}
\label{sec:time_ar}

\begin{proof}[Proof of Theorem \ref{thm:time_ar_error}]
  Defining $\xi_t = \rho_t - \bar{\rho}$, the error is
\[
  \hat{\tau}_{j0} - \tau_{j0} = \sum_{\ell=1}^L (\bar{\rho} + \xi_{T_j}) \left(Y_{j T_j - \ell} - \sum_{i \in \calD_j} \hat{\gamma}_{ij} Y_{i T_j-\ell}\right) + \left(\varepsilon_{jT_j} - \sum_{i \in \calD_j} \hat{\gamma}_{ij} \varepsilon_{iT_j}\right)
\]  
So by the triangle and Cauchy-Schwarz inequalities,

\[
  \left|\hat{\tau}_{j0} - \tau_{j0}\right| \leq \|\bar{\rho} + \xi_{T_j}\|_2\sqrt{\sum_{\ell=1}^L \left(Y_{j T_j - \ell} - \sum_{i \in \calD_j} \gamma_{ij} Y_{i T_j-\ell}\right)^2} + \left|\varepsilon_{jT_j} - \sum_{i \in \calD_j} \gamma_{ij} \varepsilon_{iT_j}\right|
\]  

Since $\hat{\gamma}_{j}$ is fit on pre-$T_j$ outcomes, the weights are independent of $\varepsilon_{T_j}$, and so the second term above is sub-Gaussian with scale parameter $\sigma \sqrt{1 + \|\hat{\gamma}_j\|_2^2} \leq \sigma (1 + \|\hat{\gamma}_j\|_2)$. This implies that 
\[
  P\left(\left|\varepsilon_{jT_j} - \sum_{i \in \calD_j} \hat{\gamma}_{ij} \varepsilon_{iT_j}\right| \geq \delta \sigma \left(1 + \|\hat{\gamma}_j\|_2\right)\right) \leq 2\exp\left(-\frac{\delta^2}{2}\right)
\]
For the bound on $\widehat{\text{ATT}}_0$, notice that

\begin{equation}
  \label{eq:ar_att_error_expand}
  \begin{aligned}
      \widehat{\text{ATT}}_0 - \text{ATT}_0 = \frac{1}{J}\sum_{j=1}^J \hat{\tau}_{j0} - \tau_{j0} & = \frac{1}{J}\sum_{j=1}^J \left[\sum_{\ell=1}^L (\bar{\rho}_\ell + \xi_{T_j \ell}) \left(Y_{j T_j - \ell} - \sum_{i \in \calD_j} \hat{\gamma}_{ij} Y_{i T_j-\ell}\right) + \left(\varepsilon_{jT_j} - \sum_{i \in \calD_j} \hat{\gamma}_{ij} \varepsilon_{iT_j}\right)\right]\\
      & = \sum_{\ell=1}^L \bar{\rho}_{\ell} \frac{1}{J}\sum_{j=1}^J \left(Y_{j T_j - \ell} - \sum_{i \in \calD_j} \hat{\gamma}_{ij} Y_{i T_j-\ell}\right)\\
      &+ \frac{1}{J}\sum_{j=1}^J\sum_{\ell=1}^L \xi_{T_j \ell} \left(Y_{j T_j - \ell} - \sum_{i \in \calD_j} \hat{\gamma}_{ij} Y_{i T_j-\ell}\right)\\
      & + \frac{1}{J}\sum_{j=1}^J\left(\varepsilon_{jT_j} - \sum_{i \in \calD_j} \hat{\gamma}_{ij} \varepsilon_{iT_j}\right)
  \end{aligned}
\end{equation}
By Cauchy-Schwarz the absolute value of the first term is
\[
  \left|\sum_{\ell=1}^L \bar{\rho}_{\ell} \frac{1}{J}\sum_{j=1}^J \left(Y_{j T_j - \ell} - \sum_{i \in \calD_j} \hat{\gamma}_{ij} Y_{i T_j-\ell}\right)\right| \leq \|\bar{\rho}\|_2 \sqrt{\sum_{\ell=1}^L\left(\frac{1}{J}\sum_{j=1}^J \left[Y_{j T_j - \ell} - \sum_{i \in \calD_j} \hat{\gamma}_{ij} Y_{i T_j-\ell}\right]\right)^2}.
\]
Similarly, the absolute value of the second term is
\[
  \begin{aligned}
    \left|\frac{1}{J}\sum_{j=1}^J\sum_{\ell=1}^L \xi_{T_j \ell} \left(Y_{j T_j - \ell} - \sum_{i \in \calD_j} \hat{\gamma}_{ij} Y_{i T_j-\ell}\right)\right| & \leq \frac{1}{J}\sum_{j=1}^J \|\xi_{T_j}\|_2 \sqrt{\sum_{\ell=1}^L\left(Y_{j T_j - \ell} - \sum_{i \in \calD_j} \hat{\gamma}_{ij} Y_{i T_j-\ell}\right)^2}\\
    & \leq S \sqrt{\frac{1}{J}\sum_{j=1}^J\sum_{\ell=1}^L\left(Y_{j T_j - \ell} - \sum_{i \in \calD_j} \hat{\gamma}_{ij} Y_{i T_j-\ell}\right)^2}
  \end{aligned}
\]
Finally, notice that $\frac{1}{J} \sum_{j=1}^J \varepsilon_{j T_j}$ is the average of $J$ independent sub-Gaussian random variables and so is itself sub-Gaussian with scale parameter $\frac{\sigma}{\sqrt{J}}$. However, $\frac{1}{J}\sum_{j=1}^J\sum_{i \in \calD_j} \hat{\gamma}_{ij} \varepsilon_{iT_j}$ is the weighted average of sub-Gaussian variables that are independent over $i$ but not necessarily independent over $j$, and so the weighted average is sub-Gaussian with scale parameter $\frac{\sigma}{\sqrt{J}}\|\Gamma\|_F$. The two averages are independent of each other, so
\[
  P\left(\frac{1}{J}\sum_{j=1}^J\left(\varepsilon_{jT_j} - \sum_{i \in \calD_j} \hat{\gamma}_{ij} \varepsilon_{iT_j}\right) \geq \frac{\delta \sigma}{\sqrt{J}}\left(1 + \|\hat{\Gamma}\|_F\right)\right) \leq 2\exp\left(-\frac{\delta^2}{2}\right)
\]  

\noindent Putting together the pieces completes the proof.
\end{proof}

\begin{proof}[Proof of Theorem \ref{thm:lfm_error}]
Following \citet{AbadieAlbertoDiamond2010}, we can re-write $\phi_i$ in terms of the lagged outcomes as 
\begin{equation}
  \label{eq:phi_to_lags}
  \begin{aligned}
    \phi_i & = (\Omega_j'\Omega)^{-1} \sum_{\ell=1}^L\mu_{T_j - \ell}(Y_{iT_j - \ell} - \varepsilon_{iT_j-\ell})\\
    & = \frac{1}{\sqrt{L}}\sum_{\ell=1}^L P^{(j)}_\ell(Y_{iT_j - \ell} - \varepsilon_{i T_j - \ell})
  \end{aligned}
\end{equation}
where $\Omega_j \in \R^{L \times F}$ is the matrix of factors from time $t=T_j-L,\ldots,T_j$, $\frac{1}{\sqrt{L}} P^{(j)}_\ell = (\Omega_j' \Omega)^{-1}\mu_{T_j - \ell} \in \R^F$, and $\frac{1}{\sqrt{L}} P^{(j)} = \frac{1}{\sqrt{L}} [P^{(j)}_1,\ldots,P^{(j)}_J] \in \R^{F \times L}$.
Using Equation \eqref{eq:phi_to_lags}, we can write the error for the ATT as
  \begin{equation}
  \label{eq:lfm_att_error_expand}
  \begin{aligned}
    \widehat{\text{ATT}}_k - \text{ATT}_k = \frac{1}{J}\sum_{j=1}^J \hat{\tau}_{jk} - \tau_{jk} & = \frac{1}{J\sqrt{L}}\sum_{j=1}^J \sum_{\ell=1}^L\mu_{T_j + k}'P^{(j)}_\ell \left(Y_{jT_j - \ell} - \sum_{i \in \calD_j} \hat{\gamma}_{ij} Y_{iT_j - \ell}\right)\\
    & - \frac{1}{J\sqrt{L}}\sum_{j=1}^J\sum_{\ell=1}^L\mu_{T_j + k}'P^{(j)}_\ell\left(\varepsilon_{jT_j - \ell} - \sum_{i \in \calD_j} \hat{\gamma}_{ij} \varepsilon_{iT_j - \ell}\right)\\
    &+ \frac{1}{J}\sum_{j=1}^J \left(\varepsilon_{jt} - \sum_{i \in \calD_j} \hat{\gamma}_{ij} \varepsilon_{iT_j}\right).
  \end{aligned}
\end{equation}

From the proof of Theorem \ref{thm:time_ar_error}, we can bound the final term in Equation \eqref{eq:lfm_att_error_expand}. We now bound the first two terms. First, as in the proof of Theorem \ref{thm:time_ar_error}, we decompose the first term into a time constant, and a time varying component:
\[
\begin{aligned}
\underbrace{\frac{1}{J\sqrt{L}}\sum_{j=1}^J \sum_{\ell=1}^L\mu_{T_j + k}'P^{(j)}_\ell \left(Y_{jT_j - \ell} - \sum_{i \in \calD_j} \hat{\gamma}_{ij} Y_{iT_j - \ell}\right)}_{(\ast)}& = \frac{1}{J\sqrt{L}}\sum_{\ell=1}^L \bar{\mu}_{k \ell} \sum_{j=1}^J \left(Y_{jT_j - \ell} - \sum_{i \in \calD_j} \hat{\gamma}_{ij} Y_{iT_j - \ell}\right)\\
& \;\;\;\;\; + \frac{1}{J\sqrt{L}}\sum_{j=1}^J \sum_{\ell=1}^L\xi_{(T_j + k)\ell} \left(Y_{jT_j - \ell} - \sum_{i \in \calD_j} \hat{\gamma}_{ij} Y_{iT_j - \ell}\right),
\end{aligned}
\]
where $\bar{\mu}_{k\ell} \equiv \frac{1}{J}\sum_{j=1}^J P^{(j) \prime}_\ell\mu_{T_j + k}$, and $\xi_{(T_j + k)\ell} \equiv P^{(j) \prime}_\ell \mu_{T_j + k} - \bar{\mu}_{k\ell}$.
Now by Cauchy-Schwarz,
we get that

\[
\begin{aligned}
|(\ast)| & \leq \|\bar{\mu}_k\|_2\sqrt{\frac{1}{L}\sum_{\ell=1}^L\left( \frac{1}{J}\sum_{j=1}^J Y_{jT_j - \ell} - \sum_{i \in \calD_j} \hat{\gamma}_{ij} Y_{iT_j - \ell}\right)^2} + \frac{1}{J}\sum_{j=1}^J \|\xi_{T_j + k}\|_2 \sqrt{\frac{1}{L}\sum_{\ell=1}^L  \left(Y_{jT_j - \ell} - \sum_{i \in \calD_j} \hat{\gamma}_{ij} Y_{iT_j - \ell}\right)^2}\\
& \leq  \|\bar{\mu}_k\|_2\sqrt{\frac{1}{L}\sum_{\ell=1}^L\left( \frac{1}{J}\sum_{j=1}^J Y_{jT_j - \ell} - \sum_{i \in \calD_j} \hat{\gamma}_{ij} Y_{iT_j - \ell}\right)^2} + S_k\sqrt{\frac{1}{JL}\sum_{j=1}^J\sum_{\ell=1}^L  \left(Y_{jT_j-\ell} - \sum_{i \in \calD_j} \hat{\gamma}_{ij} Y_{iT_j-\ell}\right)^2}
\end{aligned}
\]

We now turn to the second term in Equation \eqref{eq:lfm_att_error_expand}. Since $\varepsilon_{it}$ are independent sub-Gaussian random variables and $\frac{1}{\sqrt{L}}\|\mu_{T_j + k}'P^{(j)}\|_2 \leq \frac{M^2 F}{\sqrt{L}}$,

\[
  P\left(\frac{1}{\sqrt{L}}\left|\frac{1}{J}\sum_{j=1}^J\sum_{\ell=1}^L\mu_{T_j + k}'P^{(j)}_\ell \varepsilon_{jT_j-\ell}\right| \geq  \frac{\delta \sigma M^2 F}{\sqrt{JL}}\right) \leq 2\exp\left(-\frac{\delta^2}{2}\right)
\]

Next, since $\hat{\gamma}_1,\ldots,\hat{\gamma}_J \in \Delta^\scm$, $\frac{1}{J}\sum_{j=1}^J \|\hat{\gamma}_j\|_1 = 1$, by H\"{o}lder's inequality

\[
  \left|\frac{1}{J\sqrt{L}}\sum_{j=1}^J\sum_{\ell=1}^L \mu_{T_j + k}'P^{(j)}_\ell \sum_{i \in \calD_j} \hat{\gamma}_{ij} \varepsilon_{iT_j - \ell}\right| \leq \max_{j \in \{1,\ldots,J\}, i \in \calD_j}\left|\frac{1}{\sqrt{L}}\sum_{\ell=1}^L \mu_{T_j + k}'P^{(j)}_\ell \varepsilon_{iT_j - \ell}\right| \leq 2\frac{\sigma M^2 F}{\sqrt{L}} \left(\sqrt{\log NJ} + \delta\right)
\]
where the final inequality holds with probability at least $1 - 2\exp\left(-\frac{\delta^2}{2}\right)$ by the standard tail bound on the maximum of sub-Gaussian random variables. Putting together the pieces with a union bound completes the proof.

\end{proof}

\subsection{Asymptotic normality}

\begin{proof}[Proof of Theorem \ref{thm:asymp_normal}]
  Define $\bar{\beta}_{gk}^Y = \frac{1}{L}\sum_{\ell=1}^L\beta_{gk\ell}^Y$ and $\bar{\beta}_{gk}^X = \frac{1}{L}\sum_{\ell=1}^L\beta_{gk\ell}^X$. Note that under linearity in Assumption \ref{a:linear},
  \[
    Y_{ig+k}(\infty) - \frac{1}{L}\sum_{\ell=1}^LY_{ig-\ell}(\infty) = \bar{\beta}_{gk}^Y \cdot \dot{Y}_i^g + \bar{\beta}_{gk}^X \cdot X_i + \varepsilon_{igk}.
  \]
  So the estimation error for the treatment effect for unit $j$ at time $k$ is 
  \[
    \begin{aligned}
      \hat{\tau}_{jk} - \tau_{jk} & = Y_{jT_j+k}(\infty) - \frac{1}{L}\sum_{\ell = 1}^L Y_{iT_j-\ell}(\infty) - \sum_i \hat{\gamma}_{ij} \left(Y_{iT_j+k} - \frac{1}{L}\sum_{\ell=1}^L Y_{iT_j-\ell}\right)\\
      & = \bar{\beta}_{T_jk}^Y \cdot \left(\dot{Y}_{j}^{T_j} - \sum_{i}\hat{\gamma}_{ij}\dot{Y}_i^{T_j}\right) + \bar{\beta}_{T_jk}^X \cdot \left(X_j - \sum_{i}\hat{\gamma}_{ij}X_i\right) + \varepsilon_{jT_jk} - \sum_{i}\hat{\gamma}_{ij} \varepsilon_{iT_jk}
    \end{aligned}
  \]
  Aggregating across treated units we see that
  \[
    \begin{aligned}
      \widehat{\text{ATT}}_k - \text{ATT} & = \frac{1}{J}\sum_{j=1}^J \hat{\tau}_{jk} - \tau_{jk}\\
      & = \frac{1}{J}\sum_{g=1}^{T_J}n_g \bar{\beta}_{gk}^Y \cdot\left( \frac{1}{n_g}\sum_{T_i = g} \dot{Y}_i^g - \frac{1}{n_g}\sum_{i=1}^N\sum_{T_j = g}\hat{\gamma}_{ij} \dot{Y}_i^g\right) + n_g \bar{\beta}_{gk}^X \cdot \left(\frac{1}{n_g}\sum_{T_i = g} X_i - \frac{1}{n_g}\sum_{i=1}^N\sum_{T_j = g}\hat{\gamma}_{ij} X_i\right)\\
      & \;\;\;\; + \frac{1}{J}\sum_{j=1}^J \varepsilon_{jT_jk} - \sum_{i}\hat{\gamma}_{ij}\varepsilon_{iT_jk},
    \end{aligned}
  \]
  where $n_g$ is the number of units treated at time $g$.
  Now from Assumption \ref{a:exact_cohorts}, we have exact balance within each cohort, so this reduces to $\widehat{\text{ATT}}_k - \text{ATT} = \frac{1}{J}\sum_{j=1}^j \varepsilon_{jT_jk} - \sum_{i}\hat{\gamma}_{ij}\varepsilon_{iT_jk}$. We now show that the second term is $o_p(J^{-1/2})$.
  Denote $\sigma^2_\text{max} = \max_{igk} \Var(\varepsilon_{igk})$.
  Since the noise terms $\varepsilon_{i\ell k}$ are independent across units $i$,
  \[
    \begin{aligned}
      \Var\left(\frac{1}{J}\sum_{j=1}^{J}\sum_{i} \varepsilon_{iT_jk} \hat{\gamma}_{ij} \right) & = \E\left[\Var\left(\frac{1}{J}\sum_{j=1}^{J}\sum_{i} \varepsilon_{iT_jk}\hat{\gamma}_{ij} \mid \Gamma \right)\right] + \Var\left(\E\left[\frac{1}{J}\sum_{j=1}^{J}\sum_{i} \varepsilon_{igk}\hat{\gamma}_{ij} \mid \Gamma \right]\right)\\
      & = \E\left[\frac{1}{J^2}\sum_i \Var\left(\sum_{j=1}^J \varepsilon_{iT_jk} \hat{\gamma}_{ij} \mid \Gamma \right)\right]\\
      & \leq \E\left[\frac{1}{J^2}\sum_i \sigma_{\text{max}}^2 \sum_{j,j'}\hat{\gamma}_{ij} \hat{\gamma}_{ij'}\right]\\
      & \leq \E\left[\frac{1}{J^2}\sum_i \sigma_{\text{max}}^2 \sum_{j,j'}\|\hat{\gamma}_j\|_2 \|\hat{\gamma}_{j'}\|_2 \right]\\
      & \leq \frac{C^2 \sigma_{\text{max}}^2 }{N_0}
    \end{aligned}
    \]
  By Chebyshev's inequality, $P\left(\left|\frac{1}{\sqrt{J}}\sum_{j=1}^{J}\sum_{i} \varepsilon_{iT_jk} \hat{\gamma}_{ij}\right| \geq \delta \right) \leq \frac{\sigma^2_{\text{max}}C^2 J}{\delta^2 N_0}$. Now since $\frac{J}{N_0} \to 0$, this implies that $\sqrt{J}\left(\widehat{\text{ATT}}_k - \text{ATT}_k\right) = \frac{1}{\sqrt{J}}\sum_{T_i \neq \infty} \varepsilon_{iT_ik} + o_p(1)$. Applying the Lyapunov central limit theorem to the first term and Slutsky's theorem shows asymptotic normality.
\end{proof}

\subsection{Partial pooling of dual parameters}

\begin{lemma}
\label{lem:combined_scm_dual}
The Lagrangian dual to Equation \eqref{eq:stag_avg_relative_scm_primal} with $\nu = 0$, $\lambda > 0$, and $L_j = L < T_1$ is
\begin{equation}
  \label{eq:combined_scm_dual}
  \min_{\alpha, \beta} \underbrace{\frac{1}{J}\sum_{j=1}^J\left[\sum_{i \in \calD_j}\left[\alpha_j + \sum_{\ell=1}^{L}\beta_{\ell j}Y_{iT_j-\ell}\right]_+^2 - 
  \left(\alpha_j + \sum_{\ell=1}^{L}\beta_{\ell j}Y_{j T_1-\ell}\right)\right]}_{\calL(\alpha, \beta)} + \sum_{j=1}^J\frac{\lambda L}{2} \|\beta_j\|_2^2,
\end{equation}
\noindent The resulting donor weights are $\hat{\gamma}_{ij} = \left[\hat{\alpha}_j - \sum_{\ell=1}^L\hat{\beta}_{\ell j} Y_{i T_j-\ell}\right]_+$.
\end{lemma}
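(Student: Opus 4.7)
The plan is to exploit the fact that when $\nu = 0$ the optimization problem \eqref{eq:stag_avg_relative_scm_primal} separates across the $J$ treated units (the normalization $\tilde{q}^\sep$ is just a constant rescaling, which can be absorbed into $\lambda$), and then to derive the Lagrangian dual of each per-unit problem using the standard Fenchel-Lagrange recipe for penalized constrained least squares.

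First I would rewrite the per-$j$ primal objective $\frac{1}{JL}\sum_{\ell=1}^L (Y_{jT_j-\ell} - \sum_i \gamma_{ij}Y_{iT_j-\ell})^2 + \lambda \|\gamma_j\|_2^2$ by introducing a slack vector $u_j \in \R^L$ with $u_{\ell j} = Y_{jT_j-\ell} - \sum_i \gamma_{ij} Y_{iT_j-\ell}$. This converts the quadratic fit into a separable quadratic in $u_j$ linked to $\gamma_j$ via $L$ linear equalities, whose multipliers become $\beta_j \in \R^L$. I would also attach $\alpha_j \in \R$ to the sum-to-one constraint $\mathbf{1}^\top \gamma_j = 1$ and nonnegative multipliers $\eta_{ij}$ to $\gamma_{ij} \geq 0$ for $i \in \calD_j$ (the constraint $\gamma_{ij}=0$ for $i\notin\calD_j$ is enforced by simply omitting those indices from the variable set).

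Next I would perform the inner minimization over $(u_j, \gamma_j)$. The $u$-minimization is unconstrained quadratic and closed form, $u_j^\ast = -\tfrac{JL}{2}\beta_j$, producing a dual term proportional to $\|\beta_j\|_2^2$. The $\gamma$-minimization is a separable quadratic with a nonnegativity constraint, whose KKT conditions (stationarity plus complementary slackness with $\eta_{ij}$) give a positive-part expression $\gamma_{ij}^\ast = \tfrac{1}{2\lambda}\bigl[-\alpha_j - \sum_\ell \beta_{\ell j} Y_{iT_j-\ell}\bigr]_+$. Substituting $u_j^\ast$ and $\gamma_j^\ast$ back into the Lagrangian yields a dual objective consisting of a squared positive-part term, a linear term in $(\alpha_j,\beta_j)$ involving the treated outcomes $Y_{jT_j-\ell}$, and a quadratic $\|\beta_j\|_2^2$ term. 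Summing over $j$ and rescaling the dual variables by absorbing the factors $\tfrac{1}{2\lambda}$, $J$, and $L$ produces the stated objective \eqref{eq:combined_scm_dual} and the stated weight formula.

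The main obstacle will be bookkeeping: tracking the several scaling constants ($\tfrac{1}{JL}$ from the normalization, $\tfrac{1}{2\lambda}$ from the quadratic completion, and the $\tfrac{JL}{4}$ from the $u$-minimization) and the sign conventions on $(\alpha_j,\beta_j)$ so that the final dual matches \eqref{eq:combined_scm_dual} and the claimed weight formula $\hat\gamma_{ij} = [\hat\alpha_j - \sum_\ell \hat\beta_{\ell j}Y_{iT_j-\ell}]_+$ is internally consistent with the $[\alpha_j + \sum_\ell \beta_{\ell j}Y_{iT_j-\ell}]_+^2$ in the dual objective (an appropriate affine change of variables in $(\alpha,\beta)$ reconciles the two). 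Strong duality is not a genuine obstacle: for $\lambda > 0$ the primal is a strongly convex quadratic with a nonempty polyhedral feasible region (uniform weights over $\calD_j$ are feasible), so Slater's condition holds and the duality gap is zero, justifying that the value of \eqref{eq:combined_scm_dual} equals the primal value and that the primal optimum is recovered from the dual optimum by the stated formula.
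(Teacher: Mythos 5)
Your proposal is correct and follows essentially the same route as the paper: exploit that at $\nu=0$ the problem separates into $J$ single-unit penalized SCM problems, then take the Lagrangian dual of each via slack variables for the residuals, a multiplier $\alpha_j$ for the sum-to-one constraint, and positive-part KKT conditions for nonnegativity. The only difference is that you carry out the single-unit dual derivation explicitly (with the attendant scaling/sign bookkeeping you flag), whereas the paper simply cites the dual already derived in \citet{BenMichael_2018_AugSCM} and performs that style of derivation only for the more general Proposition \ref{prop:combined_avg_scm_dual}.
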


\begin{proof}[Proof of Lemma \ref{lem:combined_scm_dual}]
Notice that the separate synth problem 
separates into $J$ optimization problems:

\begin{equation}
  \label{eq:stag_scm_primal_sep}
  \begin{aligned}
    &    \;\; \min_{\gamma_1,\ldots,\gamma_J \in \Delta^{\text{scm}}_j} \;\;\; \frac{1}{2}q^{\rm{sep}}(\Gamma) \;+\;\; 
    \frac{\lambda}{2} \sum_{j = 1}^J \sum_{i=1}^N \gamma_{ij}^2\\
    = & \frac{1}{J} \sum_{j = 1}^{J} \min_{\gamma_j \in \Delta_j^{\text{scm}}}\left\{\left[ \frac{1}{2 L}\sum_{\ell = 1}^{L} \left(Y_{j T_j-\ell} \;-\; \sum_{i=1}^N \gamma_{ij} Y_{iT_j-\ell}\right)^2\,\right]\;+\;\; 
    \frac{\lambda}{2}  \sum_{i=1}^N \gamma_{ij}^2 \right\}
  \end{aligned}
\end{equation}
Thus the Lagrangian dual objective is the sum of the Langrangian dual objectives of the individual objectives in Equation \eqref{eq:stag_scm_primal_sep}. Inserting the dual objectives derived by \citet{BenMichael_2018_AugSCM} yields the result.
\end{proof}

\begin{proof}[Proof of Proposition \ref{prop:combined_avg_scm_dual}]
We start be defining auxiliary variables, $\calE_0,\calE_1, \ldots,\calE_J \in \R^{L}$ where $\calE_{j\ell} = Y_{jT_j-\ell} - \sum_{i=1}^N \gamma_{ij}Y_{iT_j-\ell}$ for $j \geq 1$ and $\calE_{0\ell} = \sum_{T_j > \ell} \left(Y_{jT_j-\ell} - \sum_{i=1}^N \gamma_{ij} Y_{iT_j - \ell}\right)$. Additionally we rescale by $\frac{1}{\lambda}$. Then we can write the partially pooled SCM problem \eqref{eq:stag_avg_relative_scm_primal} as 

\begin{equation}
    \label{eq:hybrid_scm_proof}
\begin{aligned}
     \min_{\gamma_1,\ldots,\gamma_J, \calE_0,\ldots,\calE_J} \;\; & 
    \frac{\nu}{2J^2L\lambda} \sum_{\ell=1}^L\calE_{0\ell}^2  + \frac{1-\nu}{2J\lambda}\sum_{j=1}^J \frac{1}{L} \calE_{j\ell}^2
    + \sum_{j=1}^J \sum_{i=1}^N \frac{1}{2}\gamma_{ij}^2  \\
    \text{subject to  } \;\;\;\;& \calE_{j\ell} = Y_{jT_j-\ell} - \sum_{i=1}^N \gamma_{ij}Y_{iT_j-\ell}\\
    & \calE_{0\ell} = \sum_{T_j > \ell} \left(Y_{jT_j-\ell} - \sum_{i=1}^N \gamma_{ij} Y_{iT_j - \ell}\right)\\
    & \gamma_j \in \Delta_j^{\text{scm}}
\end{aligned} 
\end{equation}
With Lagrange multipliers $\mu_\beta,\zeta_1,\ldots,\zeta_J \in \R^L$ and $\alpha_1,\ldots,\alpha_J \in \R$, the Lagrangian to Equation \eqref{eq:hybrid_scm_proof} is

\[
\begin{aligned}
&\calL(\Gamma, \calE_0, \ldots,\calE_J,\alpha_1,\ldots,\alpha_J, \mu_\beta,\zeta_1,\ldots,\zeta_J) = \\
& \qquad\qquad   \sum_{\ell=1}^L\left[ \frac{\nu}{2LJ^2\lambda}\calE_{0\ell}^2 - \mu_{\beta \ell} \left(\sum_{j=1}^J Y_{jT_j-\ell} - \sum_{i\in \calD_j}\gamma_{ij}Y_{iT_j-\ell}\right) - \calE_{0\ell}\mu_{\beta\ell}\right]\\
& \qquad\qquad  + \quad \sum_{j=1}^J\sum_{\ell=1}^L \left[\frac{1-\nu}{2JL\lambda} \calE_{j\ell}^2 - \zeta_{\ell j}\left(Y_{jT_j-\ell} - \sum_{i \in \calD_j}\gamma_{ij}Y_{iT_j-\ell}\right) - \zeta_{\ell j}\calE_{j\ell}\right]\\
& \qquad\qquad  + \quad \sum_{j=1}^J\sum_{i\in\calD_j}\frac{1}{2}\gamma_{ij}^2 - \alpha_j \gamma_{ij} - \alpha_j
\end{aligned}
\]
Defining $\beta_j = \mu_\beta + \zeta_j$, the dual problem is:

\[
\begin{aligned}
-\min_{\Gamma, \calE_0, \calE_1,\ldots,\calE_J} L(\cdot) & = -\frac{1}{J}\sum_{j=1}^J \sum_{i \in \calD_j} \min_{\gamma_{ij}}\left\{\frac{1}{2}\gamma_{ij}^2 - \left(\alpha_j - \sum_{\ell=1}^L\beta_{\ell j} Y_{i T_j-\ell}\right)\gamma_{ij}\right\} + \sum_{j=1}^J \alpha_j + \sum_{\ell=1}^L \beta_{\ell j}Y_{j T_j-\ell}\\
-\qquad & \sum_{\ell=1}^{L}\min_{\calE_{j\ell}} \left\{\frac{1-\nu}{2L\lambda}\calE_{j\ell}^2 - \calE_{j\ell}(\beta_{\ell j } - \mu_{\beta \ell})\right\}\\
- \qquad & \sum_{\ell=1}^L\min_{\calE_{0\ell}} \left\{\frac{\nu}{2LJ\lambda}\calE_{0\ell}^2 - \calE_{0\ell}\mu_{\beta \ell}\right\}\\
\end{aligned}
\]

From Lemma \ref{lem:combined_scm_dual}, we see that the first term is $\calL(\alpha, \beta)$ and we have the same form for the implied weights. The next two terms are the convex conjugates of a scaled $L^2$ norm. Using the computation that the convex conjugate of $\frac{a}{2}\|x\|_2^2$ is $\frac{1}{2a} \|x\|_2^2$. Finally, the primal problem \eqref{eq:stag_avg_relative_scm_primal} is still convex and a primal feasible point exists, so by Slater's condition strong duality holds.
\end{proof}

\end{document}